\newcommand{\Rmnum}[1]{\expandafter\@slowromancap\romannumeral #1@}
\begin{document}

\title{Local distinguishability of five orthogonal product states on bipartite and tripartite quantum systems}

\author{Guang-Bao \surname{Xu}}
\affiliation{College of Computer Science and Engineering, Shandong University of Science and Technology, Qingdao, 266590, China}

\author{Zi-Yan \surname{Hao}}
\affiliation{College of Computer Science and Engineering, Shandong University of Science and Technology, Qingdao, 266590, China}

\author{Hua-Kun \surname{Wang}}
\affiliation{College of Computer Science and Engineering, Shandong University of Science and Technology, Qingdao, 266590, China}

\author{Yu-Guang \surname{Yang}}
\affiliation{Faculty of Information Technology, Beijing University of Technology, Beijing, 100124, China}

\author{Dong-Huan \surname{Jiang}}
\email{donghuan\_jiang@163.com}
\affiliation{College of Mathematics and Systems Science, Shandong University of Science and Technology, Qingdao, 266590, China}

\date{\today}

\begin{abstract}
	Local distinguishability of orthogonal quantum states can be used to reduce the consumption of quantum resources and lower economic costs in quantum protocols. In this paper, we give the local distinguishability of five orthogonal product states (OPSs) on bipartite and tripartite quantum systems. On one hand, we prove that five OPSs, any two of which are orthogonal only on one subsystem, can be perfectly distinguished by local operations and classical communication (LOCC) except one case, and also discuss the local distinguishability of this exceptional case. On the other hand, we conclude that five tripartite OPSs, any two of which are orthogonal only on one subsystem, can be perfectly distinguished by LOCC except two cases. Furthermore, we discuss the local distinguishability of five tripartite OPSs corresponding to these two exceptional cases. Our work provides a detailed characterization of the local distinguishability of five OPSs.
	\begin{description}
		\item[PACS numbers]
		03.67Mn, 0365.Ud
	\end{description}
\end{abstract}

\pacs{Valid PACS appear here}

\maketitle

\section{\label{sec1}Introduction\protect}
The local distinguishability of orthogonal quantum states, which plays an important role in quantum information theory, obtains a lot of attention since it was proposed. In 1999, Bennett et al. discovered that each of two sets of orthogonal product states (OPSs) cannot be perfectly distinguished by local operations and classical communication (LOCC) \cite{Bennett1999}.  For simplicity, we say a set of OPSs is nonlocal or locally indistinguishable if it cannot be perfectly distinguished by LOCC. In the same year, Bennett et al. gave a method to construct an unextendable product basis (UPB), and demonstrated that a UPB is locally indistinguishable \cite{BennettDiVincenzo1999}. All of these works are of pioneering significance.

Inspired by Bennett et al.'s works, many scholars began to engage in the research on quantum nonlocality and have achieved a lot of research results. Walgate et al. proved that any two orthogonal states can be exactly identified by LOCC \cite{Walgate2000}. Later, Walgate et al. proved that three orthogonal states on $\mathbb{C}^{2}\otimes\mathbb{C}^{2}$ system can be perfectly distinguished if and only if at least two of them are product states, while four orthogonal states on $\mathbb{C}^{2}\otimes\mathbb{C}^{2}$ system can be perfectly distinguished if and only if all of them are product states \cite{Walgate2002}. DiVincenzo et al. pointed out that three or fewer OPSs on multipartite quantum system can be perfectly distinguished by LOCC, while four or fewer bipartite OPSs can also be perfectly distinguished by LOCC \cite{Divincenzo2003}. These results provide profound insights into the problem of local distinguishability on multipartite quantum systems.

During the research process of local distinguishability, scholars adhere to the principles of progressing from bipartite quantum systems to multipartite systems and from simplicity to complexity. First, much progress has been made on the construction methods of nonlocal sets of bipartite OPSs. Zhang et al. gave a method to construct a bipartite nonlocal set of OPSs on bipartite high dimensional quantum systems with equal and odd local dimensions \cite{Zhang2014}. Wang et al. proposed a method to construct a bipartite nonlocal set on a general quantum system \cite{YLWang2015}. Subsequently, some results, which are on the construction methods of bipartite nonlocal state sets \cite{ZCZ2015,GBX201631048,Xu2023,HQ2023}, are made. Second, significant progress has also been achieved in the study of multipartite nonlocal sets of OPSs. Xu et al. proposed a method to construct multipartite nonlocal sets of OPSs on general quantum systems \cite{Xu2016}. Zhang et al. gave a method to construct a nonlocal set of multipartite OPSs by using a nonlocal set of bipartite OPSs \cite{Zhang2017}. Based on existing research foundation, many interesting results on the construction methods of nonlocal sets of multipartite OPSs are proposed \cite{Halder2018,JXu2020,Zhu2022,Zhen2022,Zhen2025}. Third, an important research direction of quantum nonlocality has aroused widespread interest. Halder found that there exist nonlocal sets of OPSs which are locally irreducible in every bipartition, and called this phenomenon strong nonlocality \cite{Halder2019}. Many significant contributions have been made to the research of strong quantum nonlocality \cite{Yuan2020,Hu2020,Wang2021,ShiS2022,ShiYe2022,Zhou2023}. Finally, a novel concept, i.e., minimum nonlocality, is proposed  and some milestone achievements have been attained \cite{Zhu2023,ZhangXu2025}. All the above-mentioned works indicate that the research of quantum nonlocality is meaningful.

Although significant achievements have been made in quantum nonlocality \cite{Wangchen2022,Liz2022,Xiong2023}, there are still many fundamental problems that have not been effectively solved. For example, the local distinguishability of four and five OPSs on multipartite quantum systems remains unsolved. Recently, local distinguishability problem of four OPSs on multipartite quantum systems has been solved in Ref. \cite{XUHWYJ2025}. Compared to the local distinguishability of four OPSs, that of five OPSs is more complex. In this paper, we give local distinguishability of five OPSs on bipartite and tripartite quantum systems, where any two of the five OPSs are orthogonal only on one subsystem. We prove that five bipartite OPSs can be perfectly distinguished by LOCC except one case and five tripartite OPSs can be perfectly distinguished by LOCC except two cases. Furthermore, we discuss the local distinguishability of five bipartite OPSs corresponding to one exceptional case, and the local distinguishability of five tripartite OPSs corresponding to two exceptional cases respectively. The rest of this paper is organized as follows. In Sec.~\ref{sec2}, some preliminaries and basic definitions are given. In Sec.~\ref{sec3}, we discuss the local distinguishability of five bipartite OPSs. In Sec.~\ref{sec4}, we give the local distinguishability of five tripartite OPSs. In Sec.~\ref{sec5}, a brief conclusion is provided. 

\section{\label{sec2}Preliminaries}
\theoremstyle{remark}
\newtheorem{definition}{\indent Definition}
\newtheorem{lemma}{\indent Lemma}
\newtheorem{theorem}{\indent Theorem}
\newtheorem{corollary}{\indent Corollary}
\def\QEDclosed{\mbox{\rule[0pt]{1.3ex}{1.3ex}}}
\def\QED{\QEDclosed}
\def\proof{\indent{\em Proof}.}
\def\endproof{\hspace*{\fill}~\QED\par\endtrivlist\unskip}

In this section, some preliminaries, which will be used in what follows, are introduced. 

\begin{definition}\label{def1}\cite{Bondy2008}
	(Undirected graph) An undirected graph $G$ is a pair $(V,E)$ where $V$ is a finite, non-empty set of vertices and $E$ is a collection of unordered pairs of distinct vertices, called edges. For an edge $e=\{u,v\}\in E$, we say that $e$ joins the vertices $u$ and $v$; equivalently, $u$ and $v$ are incident with $e$.
\end{definition}

\begin{definition}\label{def2}\cite{Bondy2008}
	(Simple graph) A simple graph is an undirected graph with neither loops nor multiple edges.
\end{definition}

\begin{definition}\label{def3}\cite{Bondy2008}
	(Complete graph) A complete graph on $n$ vertices is a simple graph in which every two distinct vertices are adjacent.
\end{definition}

\begin{definition}\label{def4}\cite{Bondy2008}
	(Degree in a simple graph) Let $G=(V,E)$ be an undirected simple graph, and let $v\in V$ be a vertex of $G$. The degree of $v$,  denoted as $deg$($v$), is the number of edges of $G$ incident with $v$.
\end{definition}

To characterize the structure of a set of OPSs, DiVincenzo
et al. proposed the concept of orthogonality graph for a set of OPSs in Ref. \cite{Divincenzo2003}.
\begin{definition}\cite{Divincenzo2003}
	\label{def5}
	Let $\mathcal{H}=\bigotimes_{i=1}^{m} \mathcal{H}_i$ be a $m$-partite Hilbert space with dim $\mathcal{H}_{i} = d_{i}$. Let $S = \{|\psi _{j}\rangle \equiv \bigotimes_{i=1}^{m}|\varphi_{i,j}\rangle\,\,| j = 1,\,2,\,\ldots,\, n\}$ be an orthogonal product basis in $\mathcal{H}$. We represent S as a graph $G = (V,E_{1}\cup E_{2}\cup \ldots \cup E_{m})$ , where the set of edges $E_{i}$ have color $i$. The states $|\psi_{j}\rangle \in$ S are represented as the vertices $V$. There exists an edge $e$ of color $i$ between the vertices $v_{k}$ and $v_{l}$, i.e. $e \in E_{i}$ , when states $|\psi_{k}\rangle$ and $|\psi_{l}\rangle$ are orthogonal on $\mathcal{H}_i$. Since all the states in the product basis are mutually orthogonal, every vertex is connected to all the other vertices by at least one edge of some color. The graph $G$ is called the orthogonality graph of the product basis.
\end{definition}

According to the above definitions, we know that if any two states in a set of OPSs are orthogonal only on one subsystem, then the orthogonality graph of this set is a simple complete undirected graph.

\begin{definition}\label{def6}
	Let $G = (V,E_{1}\cup E_{2}\cup \ldots \cup E_{m})$ be an orthogonality graph of a set of multipartite OPSs, and let $v\in V$ be a vertex of $G$. The degree of $i$-th party of $v$,  denoted as $deg_{i}$($v$), is the number of edges of $G$ incident with vertex $v$ in same color $i$.
\end{definition}

Note that $deg_{i}$($v$) represents the number of product states orthogonal to the one corresponding to vertex $v$ on the  $i$-th subsystem.

To characterize the structure of a set of tripartite OPSs, a concept, the vector of the numbers of pairwise orthogonal relations for tripartite OPSs, was proposed in Ref. \cite{XUHWYJ2025}.
\begin{definition} \cite{XUHWYJ2025}
	\label{def7}
	(The vector of the numbers of pairwise orthogonal relations for tripartite OPSs) A triple ($a$,\,$b$,\,$c$) is used to represent the numbers of pairwise orthogonal relations among OPSs in a set on tripartite quantum system, where $a$ denotes the number of pairwise orthogonal relations among those states on the first subsystem while $b$ and $c$ denote the numbers of pairwise orthogonal relations on the second subsystem and the third subsystem, respectively. For simplicity, we call the triple ($a$,\,$b$,\,$c$) the vector of the numbers of pairwise orthogonal relations. 
\end{definition}

Similarly, we give the concept of the numbers of the vector of pairwise orthogonal relations for bipartite OPSs to characterize the structures of a set of bipartite OPSs.

\begin{definition}
	\label{def8}
	(The vector of the numbers of pairwise orthogonal relations for bipartite OPSs) A vector ($a$,\,$b$) is used to represent the numbers of pairwise orthogonal
relations for bipartite OPSs, where $a$ denotes the number of pairwise orthogonal relations among those states on the first subsystem while $b$ denotes the number of pairwise orthogonal relations on the second subsystem. For simplicity, we call ($a$,\,$b$) the vector of the numbers of pairwise orthogonal relations. 
\end{definition} 

Note that the vector of the numbers of orthogonal relations, ($a$,\,$b$), represents the numbers of edges with two different colors in the orthogonality graph of a set of bipartite OPSs.

Local rank is an important concept to characterize the local distinguishability of a set of OPSs, which was proposed in Ref. \cite{XUHWYJ2025}.  
\begin{definition}\label{def9}\cite{XUHWYJ2025}
	Let $\mathcal{H}=\bigotimes_{i=1}^{m} \mathcal{H}_{i}$ be a $m$-partite Hilbert space with dim $\mathcal{H}_{i} = d_{i}$. Let $S = \{|\psi _{j}\rangle \equiv \bigotimes_{i=1}^{m}|\varphi_{i,j}\rangle\,\,| j = 1,\,2,\,\ldots,\, n\}$ be an orthogonal product basis in $\mathcal{H}$.  We refer to the rank of vector set $\{ |\varphi_{i,j}\rangle\,\,| j = 1,\,2,\,\ldots,\, n\}$ as the local rank $r_{i}$ of $S$ on the $i$-th subsystem for $i=1,\,2,\,\ldots,\,m$. 
\end{definition}

\begin{definition}\cite{Nielsen2010}\label{def10}
	 Suppose a measurement described by measurement operators $\{M_{m}|m=1,\,2,\,\ldots,\,d\}$ is performed upon a quantum system in the state $\vert \psi \rangle$. We define $E_{m} \equiv M^{\dag}_{m}M_{m}$, where $E_{m}$ is a positive operator such that $\sum_{m}E_{m}=I$. The operators $E_{m}$
	are known as the Positive Operator-Valued Measure (POVM) elements associated with the measurement. The complete set
	$\{E_{m}\}$ is known as a POVM. 
\end{definition}

In Ref. \cite{Divincenzo2003}, DiVincenzo et al. gave the local distinguishability of four or fewer bipartite OPSs and the local distinguishability of three multipartite OPSs, which are shown as follows.

 \begin{lemma}\cite{Divincenzo2003}
    \label{lemma1}
     Let $S$ be a set of bipartite OPSs with four or fewer members in any dimension (that allows for this set of OPSs). The set $S$ is distinguishable by local incomplete von Neumann measurements and classical communication.
 \end{lemma}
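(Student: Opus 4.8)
The plan is to prove the statement by induction on the number $n\le 4$ of states, the engine being a \emph{non-disturbing} local splitting measurement that replaces $S$ by strictly smaller sets of OPSs on each branch. Write $|\psi_j\rangle=|\varphi_{1,j}\rangle\otimes|\varphi_{2,j}\rangle$. The elementary move is: if on one party, say the first, the local vectors $\{|\varphi_{1,j}\rangle\}$ can be partitioned into two or more groups whose spans are mutually orthogonal, then the first party performs the incomplete von Neumann measurement whose projectors are the projections onto those spans (the projection onto the remaining orthogonal complement is never triggered). Since each $|\varphi_{1,j}\rangle$ lies wholly in one span, this measurement disturbs no state, reveals the group of the unknown state, and conditioned on each outcome leaves a strictly smaller set of bipartite OPSs; the induction hypothesis finishes each branch, and the whole protocol is LOCC built from local incomplete von Neumann measurements and classical communication. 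The cases $n=1,2$ are immediate.

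The combinatorial content is to decide when such a split exists. A partition of $\{|\varphi_{1,j}\rangle\}$ into more than one mutually orthogonal group is possible exactly when the graph joining those pairs that are \emph{not} orthogonal on the first subsystem is disconnected, and symmetrically for the second party. In the language of the orthogonality graph (Definition~\ref{def1}) every pair of states carries an edge of colour~$1$, colour~$2$, or both, giving $\binom{n}{2}$ pairs in all, and the ``colour-$1$-only'' and ``colour-$2$-only'' edge sets are disjoint. For $n\le 3$ a connected graph on the $n$ vertices needs at least $n-1$ edges, so the two ``only-one-colour'' graphs cannot both be connected; hence a splitting measurement always exists and the induction closes at once. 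For $n=4$ both can be connected only if each is a spanning tree of $K_4$, the two trees are complementary, and---since a star would isolate its centre in the complementary tree---both are Hamiltonian paths. Thus, up to relabelling, the \emph{sole} obstructive configuration is the one in which the states are orthogonal on the first subsystem exactly for the pairs $(1,2),(2,3),(3,4)$ and on the second subsystem exactly for the pairs $(1,3),(1,4),(2,4)$.

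For this single configuration I would abandon non-disturbing splits and instead exploit the path structure directly. The vertices $2$ and $3$ are the internal vertices of the path $1$--$2$--$3$--$4$, and they are orthogonal on the first subsystem, so $|\varphi_{1,2}\rangle$ and $|\varphi_{1,3}\rangle$ are orthonormal. The first party performs the incomplete von Neumann measurement with projectors $|\varphi_{1,2}\rangle\langle\varphi_{1,2}|$, $|\varphi_{1,3}\rangle\langle\varphi_{1,3}|$ and the projector onto $\{|\varphi_{1,2}\rangle,|\varphi_{1,3}\rangle\}^{\perp}$. The outcome $|\varphi_{1,2}\rangle$ is compatible only with the states $2$ and $4$ (the states $1,3$ are orthogonal to $|\varphi_{1,2}\rangle$ on the first subsystem), which are orthogonal on the second subsystem; the outcome $|\varphi_{1,3}\rangle$ is compatible only with $1$ and $3$, again orthogonal on the second subsystem; and any outcome inside $\{|\varphi_{1,2}\rangle,|\varphi_{1,3}\rangle\}^{\perp}$ is compatible at most with $1$ and $4$, orthogonal on the second subsystem as well. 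In every case the surviving states are orthogonal on the second subsystem, so after the first party announces the outcome the second party distinguishes them by a projective measurement. This closes the obstructive case and completes the induction.

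The step I expect to demand the most care is locating and dispatching this unique $n=4$ configuration: for $n\le 3$, and for every other $n=4$ configuration, the edge-count forces a disconnected ``only-one-colour'' graph and the non-disturbing split plus induction settle the claim immediately, whereas here no non-disturbing split exists on either side and one must verify that the tailored measurement above is genuinely orthogonality-preserving on each branch. The essential---and perhaps initially surprising---point that makes it work is that the two internal, mutually orthogonal path vertices can be measured simultaneously on one party, after which every outcome leaves a pair that is orthogonal on the \emph{other} party; this is what allows a two-step LOCC protocol to succeed despite the absence of any non-disturbing splitting.
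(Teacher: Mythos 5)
The paper does not actually prove Lemma~\ref{lemma1}; it imports the statement verbatim from Ref.~\cite{Divincenzo2003}, so your argument can only be compared with the original source's, which it essentially reconstructs: orthogonality-graph case analysis, non-disturbing local splits wherever one side's non-orthogonality graph is disconnected, and an explicit measurement for the unique configuration where no split exists. Judged on its own merits the proof is correct and complete. Your splitting criterion is right (a nontrivial partition of Alice's local vectors into mutually orthogonal groups exists iff the graph of pairs \emph{not} orthogonal on her side, i.e.\ the colour-$2$-only edges, is disconnected, and the projection onto the group spans disturbs no state, so induction applies); the disjointness of the two single-colour edge sets gives $2(n-1)\le\binom{n}{2}$ as the condition for both to be connected, which fails for $n\le 3$ and is tight for $n=4$, forcing the two classes to be complementary spanning trees of $K_4$ with no doubly-orthogonal pair, and excluding the star (whose complement isolates its centre) leaves exactly the double-Hamiltonian-path configuration you identify. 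Your tailored measurement there is legitimate: $|\varphi_{1,2}\rangle\perp|\varphi_{1,3}\rangle$ since $(2,3)$ is a colour-$1$ edge, so the three operators are orthogonal projectors summing to the identity, and in each branch the compatible states ($\{2,4\}$, $\{1,3\}$, and at most $\{1,4\}$) form a pair orthogonal on Bob's subsystem, which Alice's measurement leaves untouched, so Bob finishes projectively. Your cautious ``at most'' in the third branch is also exactly right; in fact $\bigl(I-|\varphi_{1,2}\rangle\langle\varphi_{1,2}|-|\varphi_{1,3}\rangle\langle\varphi_{1,3}|\bigr)|\varphi_{1,1}\rangle=0$ would force $|\varphi_{1,1}\rangle$ parallel to $|\varphi_{1,3}\rangle$, contradicting $\langle\varphi_{1,1}|\varphi_{1,4}\rangle\neq 0$, though nothing in your protocol hinges on this. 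I find no gaps.
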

 \begin{lemma}\cite{Divincenzo2003}
	\label{lemma2}
	A set of any multipartite OPSs with three or fewer members is distinguishable by local incomplete von Neumann measurements and classical communication.
 \end{lemma}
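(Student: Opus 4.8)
The plan is to prove the statement by induction on the number $n$ of states, treating it as a reduction that terminates at the elementary two-state case. For $n=1$ there is nothing to do. For $n=2$ the two states $|\psi_1\rangle,|\psi_2\rangle$ are orthogonal, so by mutual orthogonality of a product basis they must be orthogonal on at least one subsystem, say $\mathcal{H}_i$, i.e. $\langle\varphi_{i,1}|\varphi_{i,2}\rangle=0$. The $i$-th party then performs the incomplete von Neumann measurement $\{|\varphi_{i,1}\rangle\langle\varphi_{i,1}|,\,I-|\varphi_{i,1}\rangle\langle\varphi_{i,1}|\}$: the first outcome can only arise from $|\psi_1\rangle$ and the second only from $|\psi_2\rangle$, so the state is identified. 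This settles $n\le 2$.

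For $n=3$ the aim is to exhibit a single local incomplete von Neumann measurement, performed by one party and followed by classical communication, that collapses the problem into an already-solved case. Among the three states I would pick a pair that is orthogonal on some subsystem, say $|\psi_1\rangle$ and $|\psi_3\rangle$ are orthogonal on $\mathcal{H}_B$ so that $\langle\varphi_{B,1}|\varphi_{B,3}\rangle=0$, and let party $B$ measure $\{R,\,I-R\}$ with $R=|\varphi_{B,1}\rangle\langle\varphi_{B,1}|$. The outcome $I-R$ cannot be produced by $|\psi_1\rangle$ (its $B$-component lies entirely in the range of $R$), while the outcome $R$ cannot be produced by $|\psi_3\rangle$ (its $B$-component is orthogonal to $|\varphi_{B,1}\rangle$). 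Hence each outcome leaves at most two candidate states, and after broadcasting the outcome the parties finish with the two-state procedure above.

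The step I expect to be the main obstacle is verifying that the two surviving candidates remain mutually orthogonal after party $B$'s measurement, since that measurement disturbs the $B$-components and could in principle destroy an orthogonality relation that happened to live only on $\mathcal{H}_B$. I would resolve this by a short case analysis on where the relevant pair is orthogonal: if the pair (for instance $\{|\psi_1\rangle,|\psi_2\rangle\}$ in the $R$-branch) is orthogonal on some subsystem other than $B$, that subsystem is untouched and the orthogonality survives; if instead the pair is orthogonal only on $\mathcal{H}_B$, then its partner is itself excluded by the measurement outcome, so only a single state remains and the difficulty disappears. The identity $\langle\varphi_{B,k}|(I-R)|\varphi_{B,l}\rangle=\langle\varphi_{B,k}|\varphi_{B,l}\rangle-\langle\varphi_{B,k}|\varphi_{B,1}\rangle\langle\varphi_{B,1}|\varphi_{B,l}\rangle$ confirms that orthogonality on $\mathcal{H}_B$ is preserved in the remaining cases as well. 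Since every branch terminates in the two-state case and every measurement used is an incomplete projective measurement on a single party, the whole protocol uses only local incomplete von Neumann measurements and classical communication, as required. Because the argument invokes only pairwise orthogonality, a one-party projective measurement, and post-measurement orthogonality, it holds for arbitrary $m$ and arbitrary local dimensions.
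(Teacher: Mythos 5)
Your proof is correct. Note, however, that the paper itself gives no proof of this lemma: it is imported verbatim from DiVincenzo et al.\ (Commun.\ Math.\ Phys.\ \textbf{238}, 379 (2003)), so there is no in-paper argument to compare against. Your reduction is essentially the standard one from that reference, and it coincides with the technique this paper deploys repeatedly in its own theorems (e.g.\ case (3-4) of Theorem 1 and case (5-3) of Theorem 2): one party performs the rank-one incomplete projective measurement $\{\,|\varphi_{B,1}\rangle\langle\varphi_{B,1}|,\; I-|\varphi_{B,1}\rangle\langle\varphi_{B,1}|\,\}$, each outcome eliminates one endpoint of the chosen orthogonality edge, and the survival of the remaining pair's orthogonality is checked exactly as you do, via $\langle\varphi_{B,k}|(I-R)|\varphi_{B,l}\rangle=\langle\varphi_{B,k}|\varphi_{B,l}\rangle-\langle\varphi_{B,k}|\varphi_{B,1}\rangle\langle\varphi_{B,1}|\varphi_{B,l}\rangle$ together with the observation that orthogonality residing on an unmeasured subsystem is untouched. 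Your case analysis is complete: in the $R$-branch the only dangerous situation (the surviving pair orthogonal solely on $\mathcal{H}_B$) forces $\langle\varphi_{B,1}|\varphi_{B,2}\rangle=0$, which excludes the second candidate outright, and in the $I-R$ branch the cross term in the identity vanishes because $\langle\varphi_{B,1}|\varphi_{B,3}\rangle=0$, so orthogonality on $\mathcal{H}_B$ itself is preserved. One cosmetic remark: calling the argument an ``induction'' overstates it, since you handle $n\le 3$ directly by a one-step reduction to the two-state case rather than by a genuine inductive hypothesis; but this does not affect validity.
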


  It should be noted that von Neumann measurement denotes projection measurement, which is a special POVM here.
    
  According to the vectors of the numbers of pairwise orthogonal relations, four tripartite OPSs can be classified into three categories \cite{XUHWYJ2025}, i.e., (4, 1, 1), (3, 2, 1) and (2, 2, 2). The local distinguishability of four tripartite OPSs is shown in the following Lemmas.

  \begin{lemma}\cite{XUHWYJ2025}
	\label{lemma3}
  Four tripartite OPSs with the vector of the numbers of pairwise orthogonal relations $(4,\,1,\,1)$ or $(3,\,2,\,1)$ can be perfectly distinguished by LOCC.
  \end{lemma}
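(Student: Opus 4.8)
The plan is to give a single argument that covers both vectors at once, exploiting the fact that in each of $(4,1,1)$ and $(3,2,1)$ some subsystem carries exactly one orthogonal pair. First I would translate the data into the orthogonality graph of Definition~\ref{def1}: the four states are the vertices of $K_4$, each of the $\binom{4}{2}=6$ edges receives the color of the subsystem on which that pair is orthogonal, and since $4+1+1=3+2+1=6$, every pair is orthogonal on \emph{exactly} one subsystem, so each edge gets exactly one color. The key observation is that for both vectors there is a subsystem---relabel it as the third one---whose color class consists of a \emph{single} edge, say the pair $\{|\psi_p\rangle,|\psi_q\rangle\}$; for $(3,2,1)$ this is the unique one-edge subsystem, and for $(4,1,1)$ either of the two one-edge subsystems will do.

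The protocol I would use is this: the party holding that subsystem performs the projective measurement $\{P_p,\,P_q,\,I-P_p-P_q\}$ with $P_p=|\varphi_{3,p}\rangle\langle\varphi_{3,p}|$ and $P_q=|\varphi_{3,q}\rangle\langle\varphi_{3,q}|$ (here $\langle\varphi_{3,p}|\varphi_{3,q}\rangle=0$, so these are orthogonal rank-one projectors), and broadcasts the outcome. I would then argue two things. (i) The special pair is never left ambiguous: $|\varphi_{3,q}\rangle$ has zero overlap with $P_p$ and $|\varphi_{3,p}\rangle$ has zero overlap with $P_q$, so outcome $P_p$ excludes $|\psi_q\rangle$, outcome $P_q$ excludes $|\psi_p\rangle$, and outcome $I-P_p-P_q$ excludes both. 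Hence $|\psi_p\rangle$ and $|\psi_q\rangle$ never survive in the same branch, and at most three states are consistent with any outcome.

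(ii) Every \emph{other} pair stays orthogonal after the measurement. Any pair $\{|\psi_j\rangle,|\psi_k\rangle\}\neq\{|\psi_p\rangle,|\psi_q\rangle\}$ is, by the single-edge property, orthogonal on one of the first two subsystems, i.e. $\langle\varphi_{1,j}|\varphi_{1,k}\rangle=0$ or $\langle\varphi_{2,j}|\varphi_{2,k}\rangle=0$; since the measurement acts only on the third subsystem it cannot touch this factor, so the two post-measurement product states remain orthogonal. Consequently each measurement branch contains at most three mutually orthogonal product states, and Lemma~\ref{lemma2} finishes the job on each branch. Assembling the local measurement, the classical broadcast, and the branch protocols yields an LOCC protocol, completing the proof for both vectors.

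The step I expect to be most delicate is (ii): one must check that the \emph{only} orthogonality relation the measurement can destroy is the one carried by the measured subsystem, and that this relation belongs to a pair which the measurement simultaneously separates into different branches. This is precisely why the argument needs a subsystem carrying a single edge---if the measured subsystem carried two or more edges, the measurement could collapse a pair whose \emph{sole} orthogonality lives there, merging two states that can no longer be told apart. It also explains why the method does not extend to the vector $(2,2,2)$, where every subsystem carries two edges and no such safe measurement is available.
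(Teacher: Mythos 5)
Your proof is correct: the three-outcome projective measurement $\{P_p,\,P_q,\,I-P_p-P_q\}$ on the single-edge subsystem never leaves the pair $\{|\psi_p\rangle,|\psi_q\rangle\}$ coexisting in any branch, every other pair's orthogonality lives on an unmeasured subsystem and so survives intact, and Lemma~\ref{lemma2} closes out each branch of at most three mutually orthogonal product states. The paper itself states this lemma without proof, importing it from Ref.~\cite{XUHWYJ2025}, but your argument is precisely the cut-on-the-single-edge-subsystem technique that this paper deploys for its analogous cases (e.g., cases (7-5) and (8-4) in Theorems~\ref{theorem6} and~\ref{theorem7}), so it matches the intended proof in approach.
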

   For category (2, 2, 2), there exist three different orthogonality graphs, i.e., cases (1-1), (1-2) and (1-3), as shown in Fig.~\ref{fig1}.
   \begin{figure}[H]
   	\setlength{\belowcaptionskip}{0.3cm}
   	\centering
   	\includegraphics[width=0.4\textwidth]{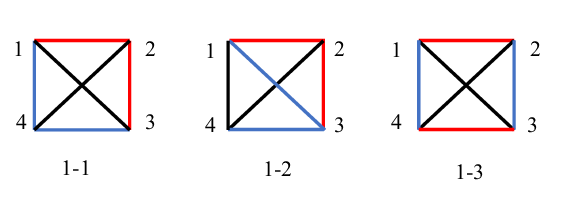}
   	\begin{center}
   		\caption{The feasible graphs of four tripartite OPSs with the vector of the numbers of pairwise orthogonal relations $(2,\,2,\,2)$}\label{fig1}
   		\vspace{-30pt}
   	\end{center}
   \end{figure}
   \begin{lemma}\cite{XUHWYJ2025}
	\label{lemma4}
  	Four tripartite OPSs with orthogonality graph (1-1) or (1-2) can be perfectly distinguished by LOCC when their vector of the numbers of pairwise orthogonal relations is (2, 2, 2).
   \end{lemma}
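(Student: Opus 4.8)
The plan is to give, for each of the two graphs (1-1) and (1-2), an explicit adaptive LOCC protocol that begins with a single two-outcome incomplete von Neumann measurement and then invokes Lemma~\ref{lemma2}. Write the four states as $|\psi_j\rangle=|\alpha_j\rangle\otimes|\beta_j\rangle\otimes|\gamma_j\rangle$ for $j=1,2,3,4$, and recall that in the category $(2,2,2)$ each of the six pairs is orthogonal on exactly one subsystem, so the orthogonality graph is an edge-$3$-colouring of $K_4$ with two edges of each colour. The first structural observation I would isolate is that (1-1) and (1-2) are precisely the colourings in which at least one colour class is a \emph{path} (two edges sharing a vertex): there is a subsystem $t$ and a \emph{pivot} state, say $|\psi_1\rangle$, whose $t$-component is orthogonal to the $t$-components of two other states, say $|\psi_3\rangle$ and $|\psi_4\rangle$. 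The remaining graph (1-3), whose three colour classes are all perfect matchings, must be excluded from the statement because it already contains locally indistinguishable sets (the three-qubit Shifts UPB has exactly this matching structure).

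Second, I would let the pivot party $t$ perform the incomplete von Neumann measurement $\{\,|\beta_1\rangle\langle\beta_1|,\ I-|\beta_1\rangle\langle\beta_1|\,\}$ on $\mathcal{H}_t$ (writing $|\beta_1\rangle$ for the $t$-component of $|\psi_1\rangle$) and broadcast the outcome. Since $|\psi_3\rangle$ and $|\psi_4\rangle$ have $t$-component orthogonal to $|\beta_1\rangle$, they are sent deterministically to the second outcome, while $|\psi_1\rangle$ is sent to the first; only $|\psi_2\rangle$ may split across the two outcomes. Hence the first outcome leaves a candidate set contained in $\{|\psi_1\rangle,|\psi_2\rangle\}$ and the second a candidate set contained in $\{|\psi_2\rangle,|\psi_3\rangle,|\psi_4\rangle\}$, each of size at most three, so Lemma~\ref{lemma2} can finish each branch once orthogonality is confirmed.

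Third, I would verify that each branch still consists of mutually orthogonal product states. The measurement touches only subsystem $t$, and because the colour class on subsystem $t$ consists of the two edges incident to the pivot vertex, none of the pairs $\{2,3\},\{2,4\},\{3,4\}$ is coloured $t$; therefore $|\psi_2\rangle,|\psi_3\rangle,|\psi_4\rangle$ are pairwise orthogonal on the untouched subsystems and remain orthogonal in the second branch. Likewise the pair $\{1,2\}$ is not coloured $t$, so $|\psi_1\rangle$ and the projected $|\psi_2\rangle$ stay orthogonal on another subsystem in the first branch. Applying Lemma~\ref{lemma2} to the (at most three) orthogonal product states in each branch then completes the protocol, and the same argument handles (1-1) and (1-2) uniformly regardless of which one is which.

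The main obstacle is the orthogonality-preservation step, not the measurement itself. Since every colour class has only two edges, fewer than the size of any cut of $K_4$, there is no single local measurement that cleanly bipartitions the four states, so a naive split-and-recurse is unavailable and one is forced to route two states into one outcome while letting a third float. The delicate point is exactly that the two partner states $|\psi_3\rangle,|\psi_4\rangle$ must recover their mutual orthogonality from a subsystem other than $t$; this is guaranteed precisely when the pivot colour class is a path, and it fails for the matching-only graph (1-3), where the two states collapsed together by the projection were orthogonal only on the measured subsystem. Confirming this dichotomy carefully, and checking that it is independent of the local dimensions and local ranks, is where I would expect to spend most of the effort.
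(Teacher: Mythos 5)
Your proof is correct and takes essentially the same route as the paper: the lemma itself is imported from Ref.~\cite{XUHWYJ2025} without a reproduced proof, but your pivot-projection protocol --- the path-centre party measures $\{|\beta_1\rangle\langle\beta_1|,\ I-|\beta_1\rangle\langle\beta_1|\}$, each branch stays orthogonal because the edges $\{1,2\}$, $\{2,3\}$, $\{2,4\}$, $\{3,4\}$ are coloured by untouched subsystems, and Lemma~\ref{lemma2} finishes each branch of at most three states --- is exactly the technique this paper deploys in the analogous cases (4-4), (5-3) and (9-8) of Theorems~\ref{thorem1}, \ref{thorem2} and \ref{theorem8}. Your structural dichotomy, that (1-1) and (1-2) are the colourings with a path colour class while (1-3) is the all-matching (Shifts-type) colouring, is also the correct reason the lemma excludes (1-3).
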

  \begin{lemma} \label{lemma5} \cite{XUHWYJ2025}
		Four tripartite OPSs with orthogonality graph (1-3) cannot be perfectly distinguished by LOCC when the local ranks of the four tripartite OPSs hold 
    $r_{1}=r_{2}=r_{3}=2$; four tripartite OPSs with orthogonality graph (1-3) can be distinguished by LOCC with some certain probability for any other case.
   \end{lemma}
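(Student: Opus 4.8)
The plan is to read \textbf{graph (1-3)} as the orthogonality graph in which the three color classes are the three perfect matchings of the four vertices: states $|\psi_1\rangle,|\psi_2\rangle$ and $|\psi_3\rangle,|\psi_4\rangle$ are orthogonal only on the first subsystem, $|\psi_1\rangle,|\psi_3\rangle$ and $|\psi_2\rangle,|\psi_4\rangle$ only on the second, and $|\psi_1\rangle,|\psi_4\rangle$ and $|\psi_2\rangle,|\psi_3\rangle$ only on the third, where $|\psi_j\rangle=\bigotimes_{i=1}^3|\varphi_{i,j}\rangle$. First I would record the structural consequence of the vector of orthogonal relations being exactly $(2,2,2)$: on each subsystem the two orthogonal pairs furnish two orthonormal bases of the local support, and these bases are necessarily \emph{distinct as sets of rays}, since a coincidence (for instance $\{|\varphi_{1,3}\rangle,|\varphi_{1,4}\rangle\}=\{|\varphi_{1,1}\rangle,|\varphi_{1,2}\rangle\}$ up to phases) would create a third orthogonality edge on that subsystem and push the count above $(2,2,2)$. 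I also read the two halves of the statement as a dichotomy: $r_1=r_2=r_3=2$ gives a set from which LOCC can extract no information at all, hence one that is not perfectly distinguishable, whereas any other profile — which necessarily has some $r_i\ge 3$ — admits a nontrivial local measurement and hence conclusive identification with nonzero probability.

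For the impossibility I would use the standard first-measurement method \cite{Bennett1999}. Suppose party $i$ performs the first nontrivial measurement, with POVM element $E=M^\dagger M$. Two states $|\psi_j\rangle,|\psi_k\rangle$ orthogonal \emph{only} on the $i$-th subsystem have nonzero overlaps on the other two subsystems, so after the measurement they remain orthogonal precisely when $\langle\varphi_{i,j}|E|\varphi_{i,k}\rangle=0$, while pairs orthogonal on another subsystem stay orthogonal automatically. In graph (1-3) the two color-$i$ edges form a perfect matching, so this imposes $\langle\varphi_{i,j}|E|\varphi_{i,k}\rangle=0$ on both such pairs. When $r_i=2$ the local support is two-dimensional and, by the structural fact above, $E$ is then diagonal in two distinct orthonormal bases of that plane; a Hermitian operator with this property must be a multiple of the identity. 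Hence every admissible first-round element of every party is proportional to the identity on its support, i.e. trivial. The remaining, standard but delicate, step is to upgrade this to a genuine no-go: invoking the closure of LOCC and the usual ``no-progress'' argument, a protocol all of whose first measurements are trivial can never separate the states, so the set is not perfectly distinguishable by LOCC.

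For the converse I would exploit the extra dimension. Assume without loss of generality $r_1\ge 3$. The constraints $\langle\varphi_{1,1}|E|\varphi_{1,2}\rangle=0$ and $\langle\varphi_{1,3}|E|\varphi_{1,4}\rangle=0$ no longer force $E\propto I$ once the support has dimension at least three, so party $1$ has informative measurements. Concretely, $\{|\varphi_{1,1}\rangle,|\varphi_{1,3}\rangle\}$ spans at most two dimensions, so within the $(\ge 3)$-dimensional support there is a unit vector $|u\rangle$ orthogonal to both; moreover $|u\rangle$ cannot be orthogonal to all four local vectors (else it would be orthogonal to its own support), so it detects at least one of $|\varphi_{1,2}\rangle,|\varphi_{1,4}\rangle$. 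Party $1$ then measures $\{|u\rangle\langle u|,\,I-|u\rangle\langle u|\}$. On the outcome $|u\rangle\langle u|$, which has positive probability, states $1$ and $3$ are deleted and only survivors from $\{|\psi_2\rangle,|\psi_4\rangle\}$ remain; since $|\psi_2\rangle$ and $|\psi_4\rangle$ are orthogonal on the \emph{second} subsystem, they stay mutually orthogonal even though their first-subsystem parts have collapsed onto $|u\rangle$. Being at most two (hence at most three) mutually orthogonal product states, they are then perfectly distinguished by Lemma~\ref{lemma2}. Thus the four states are conclusively identified with nonzero probability, which is exactly the content of the second half of the statement.

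The main obstacle I anticipate is not the algebra but the rigor of the first half: converting ``every party's first measurement must be trivial'' into ``no LOCC protocol succeeds'' requires the careful, standard treatment of arbitrary finite-round adaptive protocols, together with the fact that infinitesimally weak measurements cannot accumulate information here. A secondary point to pin down is the exact reading of ``with some certain probability'': the construction above yields conclusive (unambiguous) identification with positive probability, and — via the same collapse of the complementary outcome back to a rank-two graph-(1-3) configuration — it simultaneously indicates that these sets are in general \emph{not} perfectly distinguishable. Respecting this asymmetry, between ``perfectly'' in the case $r_1=r_2=r_3=2$ and ``with some certain probability'' otherwise, is therefore essential throughout the argument.
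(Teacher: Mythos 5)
Your argument is sound: the forced-triviality computation (each orthogonality-preserving POVM element, compressed to the two-dimensional local support, is diagonal in two orthonormal bases that are distinct as ray sets --- a coincidence would add a third edge of that color --- hence is proportional to the identity, so every outcome has state-independent probability), combined with the explicit measurement $\{|u\rangle\langle u|,\,I-|u\rangle\langle u|\}$ with $|u\rangle\perp\{|\varphi_{1,1}\rangle,|\varphi_{1,3}\rangle\}$ when some $r_i\geq 3$, is exactly the standard method and coincides with the technique this paper itself deploys for its analogous results (the trivial orthogonality-preserving POVM computation in the proof of Theorem~\ref{theorem3}, and the explicit probabilistic protocols in the proof of Theorem~\ref{theorem4}). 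Be aware that the paper contains no proof of Lemma~\ref{lemma5} at all --- it is imported by citation from Ref.~\cite{XUHWYJ2025}, whose argument runs along these same lines --- and the residual imprecision you rightly flag in ``with some certain probability'' (e.g., whether \emph{every} state, or merely \emph{some} state, is conclusively identified with nonzero probability in a case such as $r_1=3$, $r_2=r_3=2$, where the essentially unique $|u\rangle\perp\mathrm{span}\{|\varphi_{1,2}\rangle,|\varphi_{1,4}\rangle\}$ may happen to be orthogonal to $|\varphi_{1,1}\rangle$ as well) is inherited from the lemma's statement rather than being a defect of your proof.
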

\section{Local distinguishability of five bipartite orthogonal product states}\label{sec3}

   In this section, we analyze the distinguishability of five bipartite OPSs, where any two states are orthogonal only on one subsystem. Indeed, five bipartite OPSs exactly have ten pairwise orthogonal relations if any two of these states are orthogonal only on one subsystem. Clearly, the local distinguishability of five bipartite OPSs with the vector of the numbers of pairwise orthogonality relations, $(a,b)$,  is invariant under interchange of parties. Thus, we only need to consider one case of the vectors of the numbers of pairwise orthogonality relations $(a,b)$ and $(b,a)$ for five bipartite OPSs.
   
   \begin{figure}[H]
    	\setlength{\belowcaptionskip}{0.3cm}
    	\centering
    	\includegraphics[width=0.3\textwidth]{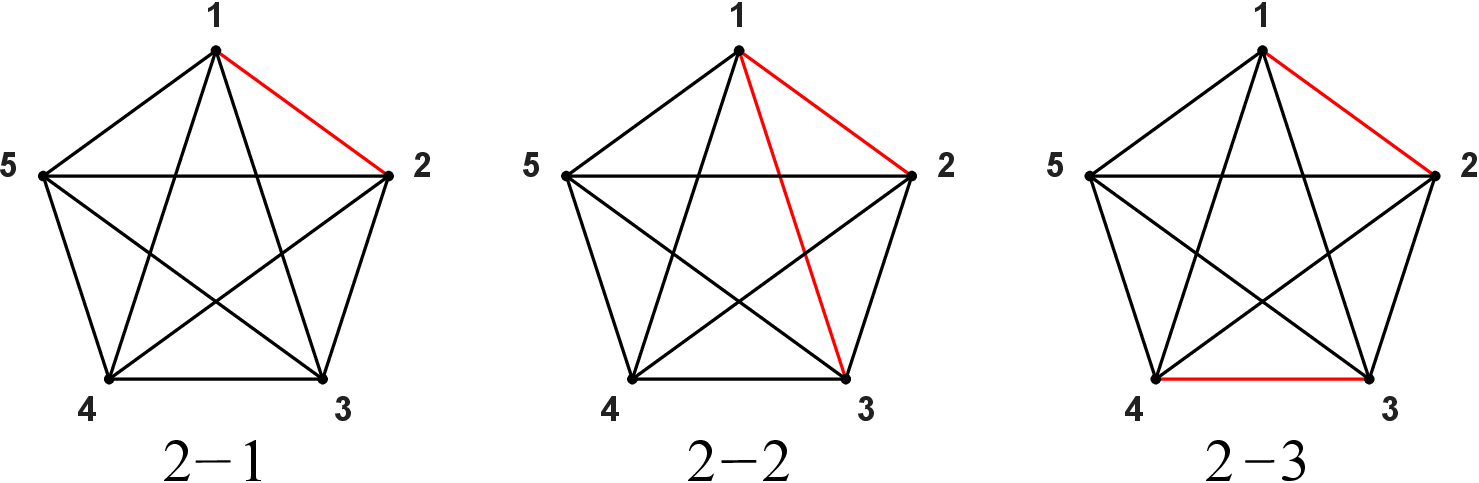}
    	\begin{center}
    		\caption{The feasible graphs of five  bipartite OPSs with the vector of the numbers of pairwise orthogonal relations $(9,\,1)$ and $(8,\,2)$}\label{fig2}
    		\vspace{-30pt}
    	\end{center}
    \end{figure}
    
   \begin{figure}[H]
   	\setlength{\belowcaptionskip}{0.5cm}
   	\centering
   	\includegraphics[width=0.35\textwidth]{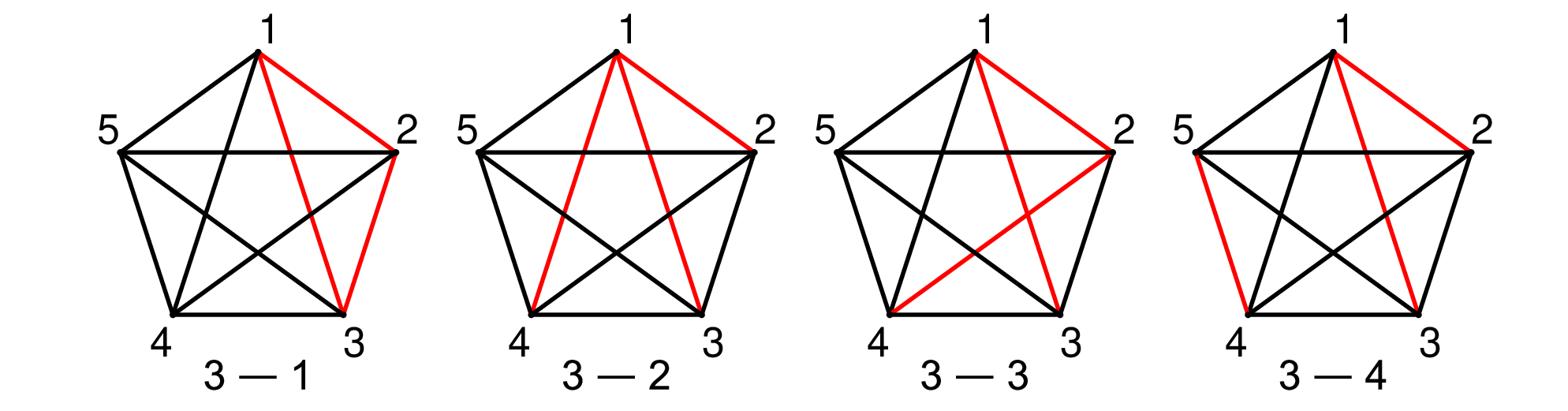}
   	\begin{center}
   		\caption{The feasible graphs of five bipartite OPSs with the vector of the numbers of pairwise orthogonal relations $(7,\,3)$}\label{fig3}
   		\vspace{-30pt}
   	\end{center}
   \end{figure}
  
   \begin{figure}[H]
    	\setlength{\belowcaptionskip}{0.3cm}
    	\centering
    	\includegraphics[width=0.4\textwidth]{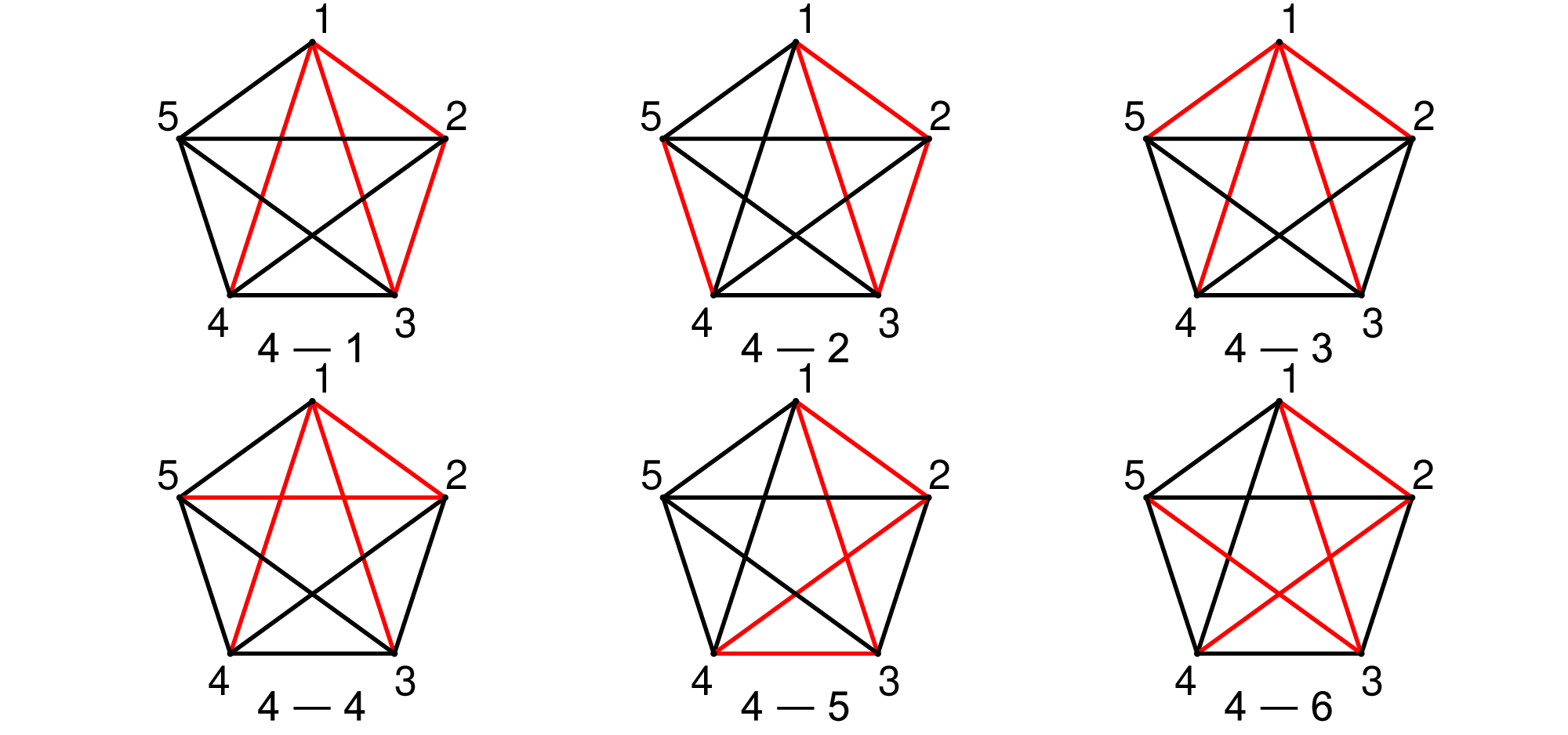}
    	\begin{center}
    		\caption{The feasible graphs of five bipartite OPSs with the vector of the numbers of pairwise orthogonal relations $(6,\,4)$}\label{fig4}
    		\vspace{-30pt}
    	\end{center}
    \end{figure}

    \begin{figure}[H]
     	\setlength{\belowcaptionskip}{0.3cm}
     	\centering
     	\includegraphics[width=0.4\textwidth]{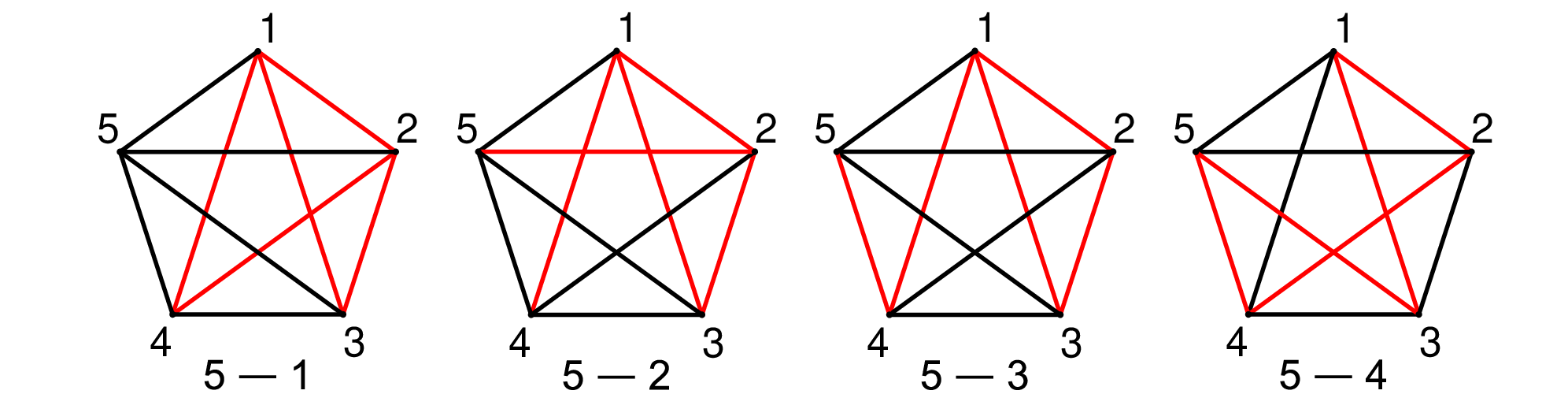}
     	\begin{center}
     		\caption{The feasible graphs of five bipartite OPSs with the vector of the numbers of pairwise orthogonal relations $(5,\,5)$ }\label{fig5}
     		\vspace{-30pt}
     	\end{center}
     \end{figure}
     
      For the structures of five bipartite OPSs, we divide them into six categories according to the vectors of the numbers of pairwise orthogonal relations, i.e.,   $(10,\,0)$, $(9,\,1)$, $(8,\,2)$, $(7,\,3)$, $(6,\,4)$ and $(5,\,5)$. For a set of five bipartite OPSs, the vector of the numbers of pairwise orthogonal relations  $(10,\,0)$ means that all these five OPSs are orthogonal only on one subsystem. Therefore, any five bipartite OPSs with the vector of the numbers of pairwise orthogonal relations  $(10,\,0)$ can be locally distinguished. Now, we consider the local distinguishability of five bipartite OPSs with the vectors of the numbers of pairwise orthogonal relations  $(9,\,1)$, $(8,\,2)$, $(7,\,3)$, $(6,\,4)$ and $(5,\,5)$. All feasible orthogonal graphs for categories $(9,\,1)$, $(8,\,2)$, $(7,\,3)$, $(6,\,4)$ and $(5,\,5)$ are given in Figs. \ref{fig2}-\ref{fig5}. It should be noted that we have omitted graphs which are the same as the graphs shown under interchange of parties as clearly those cases will follow the same line of reasoning.

  \begin{theorem} \label{theorem1}
  	Five bipartite OPSs, any two of which are orthogonal only on one subsystem, can be perfectly distinguished by LOCC except all vertices $v_{j}$ in their orthogonality graph satisfy  $deg_{1}(v_{j})=deg_{2}(v_{j})=2$ for $j=1,\,2,\,\ldots,\,5$.  
    \end{theorem}
    
  In fact, among all feasible orthogonality graphs of five bipartite OPSs, only orthogonality graph (5-4) in Fig. 5 satisfies  $deg_{1}(v_{j})=deg_{2}(v_{j})=2$ for $j=1,\,2,\,\ldots,\,5$. Therefore, five bipartite OPSs corresponding to any other case are locally distinguishable. The proof of Theorem~\ref{theorem1} is given in Appendix \ref{app1}.
   
  Next, we give a detailed discussion for five bipartite OPSs with orthogonality graph (5-4) as shown in Fig. \ref{fig5}.
    
  For case (5-4), states form a cycle of orthogonality on the first subsystem, i.e., state 1 is orthogonal to states 4 and 5; state 2 to states 3 and 5; state 3 to states 2 and 4; and state 4 to states 1 and 3.  Suppose that states  1, 2, 3, 4 and 5 are denoted as $|\phi_{1}\rangle$,  $|\phi_{2}\rangle$, $|\phi_{3}\rangle$, $|\phi_{4}\rangle$ and $|\phi_{5}\rangle$, and the first subsystems of states 1, 2, 3, 4 and 5 are denoted as $|\phi_{1}\rangle_{1}$,  $|\phi_{2}\rangle_{1}$, $|\phi_{3}\rangle_{1}$, $|\phi_{4}\rangle_{1}$ and $|\phi_{5}\rangle_{1}$, respectively. Without loss of generality, we assume that the first subsystems of state 1 and state 5 are $|0\rangle$ and $|1\rangle$, respectively. In fact, if the first subsystems of state 1 and state 5 are not as such, we can transform them into this forms through unitary operations. This rule also applies to the subsequent assumptions. Since the first subsystem of state 2 is orthogonal to that of state 5 and is not orthogonal to that of state 1, suppose that the first subsystem of state 2 is $\frac{1}{\sqrt{1+|a|^{2}}}(|0\rangle+a|2\rangle)$, where $a\neq0$.  
    Since the first subsystem of state 3 is orthogonal to that of state 2 but not to those of states 1 and 5, it must be of the form $\frac{1}{\sqrt{1+|b|^{2}+|c|^{2}+|d|^{2}}}(|0\rangle+b|1\rangle+c|2\rangle+d|3\rangle)$, where $b\neq0$ and $c=-1/a^{*}$. Since the first subsystem of state 4 is orthogonal to those of states $\{1,\,3\}$ but not to those of states $\{2,\,5\}$, it must be of the form $\frac{1}{\sqrt{1+|e|^{2}+|g|^{2}+|h|^{2}}}(|1\rangle+e|2\rangle+g|3\rangle+h|4\rangle)$, where $e=a(b^{*}+d^{*}g)\neq0$. Therefore, the general forms of the first subsystems of states $\{1,\,2,\,3,\,4,\,5\}$ are given as shown in Eq. (1).
      \begin{align}
        |\phi_{1}\rangle_{1}&=|0\rangle,\notag \\
        |\phi_{2}\rangle_{1}&=\frac{1}{\sqrt{1+|a|^{2}}}(|0\rangle+a|2\rangle),\notag\\
        |\phi_{3}\rangle_{1}&=\frac{1}{\sqrt{1+|b|^{2}+|c|^{2}+|d|^{2}}}(|0\rangle+b|1\rangle+c|2\rangle+d|3\rangle),\notag  \\
        |\phi_{4}\rangle_{1}&=\frac{1}{\sqrt{1+|e|^{2}+|g|^{2}+|h|^{2}}}(|1\rangle+e|2\rangle+g|3\rangle+h|4\rangle),\notag \\ 
        |\phi_{5}\rangle_{1}&=|1\rangle,      
    \end{align}
   where $a\neq 0$, $b\neq0$, $c=-1/a^{*}\neq0$, and $e=a(b^{*}+d^{*}g)\neq0$.
    
    For five bipartite OPSs with orthogonal graph (5-4), their second subsystems resemble their first subsystems in structure. Based on the above analysis, we assume that five bipartite OPSs with orthogonal graph (5-4) have the general forms as shown in Eq. (2).
     \begin{align}
        |\phi_{1}\rangle&=\frac{1}{\sqrt{1+|a^{\prime}|^{2}}}|0\rangle_{1}(|0\rangle+a^{\prime}|2\rangle)_{2},\notag \\
        |\phi_{2}\rangle&=\frac{1}{\sqrt{1+|a|^{2}}}(|0\rangle+a|2\rangle)_{1}|1\rangle_{2},\notag
      \end{align}
       \begin{widetext}
        \begin{align}
        |\phi_{3}\rangle&=\frac{1}{\sqrt{1+|b|^{2}+|c|^{2}+|d|^{2}}\sqrt{1+|b^{\prime}|^{2}+|c^{\prime}|^{2}+|d^{\prime}|^{2}}}(|0\rangle+b|1\rangle+c|2\rangle+d|3\rangle)_{1}(|0\rangle+b^{\prime}|1\rangle+c^{\prime}|2\rangle+d^{\prime}|3\rangle)_{2},\notag   \\
        |\phi_{4}\rangle&=\frac{1}{\sqrt{1+|e|^{2}+|g|^{2}+|h|^{2}}}(|1\rangle+e|2\rangle+g|3\rangle+h|4\rangle)_{1}|0\rangle_{2},\notag \\ 
        |\phi_{5}\rangle&=\frac{1}{\sqrt{1+|e^{\prime}|^{2}+|g^{\prime}|^{2}+|h^{\prime}|^{2}}}|1\rangle_{1}(|1\rangle+e^{\prime}|2\rangle+g^{\prime}|3\rangle+h^{\prime}|4\rangle)_{2},     
    \end{align}
     \end{widetext} 
    where  $a\neq 0$, $b\neq0$, $c=-1/a^{*}\neq0$, $e=a(b^{*}+d^{*}g)\neq0$, $a^{\prime}\neq 0$, $b^{\prime}\neq0$, $c^{\prime}=-1/(a^{\prime})^{*}\neq0$, and $e^{\prime}=a^{\prime}[(b^{\prime})^{*}+(d^{\prime})^{*}g^{\prime}]\neq0$.   
            
    \begin{theorem} \label{theorem2}
	Five bipartite OPSs with orthogonality graph (5-4), as shown in Eq. (2), cannot be perfectly distinguished by LOCC when 
    $h=d=g=h^{\prime}=d^{\prime}=g^{\prime}=0$.    
    \end{theorem}
    \begin{proof}
    When $h=d=g=h^{\prime}=d^{\prime}=g^{\prime}=0$, the five states in Eq. (2) become the forms as shown in Eq. (3).
    
    \begin{align}
        |\phi_{1}\rangle&=\frac{1}{\sqrt{1+|a^{\prime}|^{2}}}|0\rangle_{1}(|0\rangle+a^{\prime}|2\rangle)_{2},\notag\\
        |\phi_{2}\rangle&=\frac{1}{\sqrt{1+|a|^{2}}}(|0\rangle+a|2\rangle)_{1}|1\rangle_{2},\notag\\
        |\phi_{3}\rangle&=\frac{1}{\sqrt{1+|b|^{2}+|c|^{2}}\sqrt{1+|b^{\prime}|^{2}+|c^{\prime}|^{2}}}\times\notag\\
            &(|0\rangle+b|1\rangle+c|2\rangle)_{1}(|0\rangle+b^{\prime}|1\rangle+c^{\prime}|2\rangle)_{2},\notag\\
        |\phi_{4}\rangle&=\frac{1}{\sqrt{1+|e|^{2}}}(|1\rangle+e|2\rangle)_{1}|0\rangle_{2},\notag 
        \end{align}
        \begin{align}
        |\phi_{5}\rangle&=\frac{1}{\sqrt{1+|e^{\prime}|^{2}}}|1\rangle_{1}(|1\rangle+e^{\prime}|2\rangle)_{2}.     
    \end{align}
        
    Without loss of generality, suppose that the first party, say Alice,  first performs an orthogonality-preserving measurement with POVM elements $M_{j}^{\dagger}M_{j}=$ 
    $$\begin{pmatrix}
		m_{00}^{(j)} & m_{01}^{(j)} & m_{02}^{(j)} &m_{03}^{(j)} &\cdots &m_{0,n-1}^{(j)}\\
		m_{10}^{(j)} & m_{11}^{(j)} & m_{12}^{(j)} &m_{13}^{(j)} &\cdots &m_{1,n-1}^{(j)}\\
        m_{20}^{(j)} & m_{21}^{(j)} & m_{22}^{(j)} &m_{23}^{(j)} &\cdots &m_{2,n-1}^{(j)}\\
        m_{30}^{(j)} & m_{31}^{(j)} & m_{32}^{(j)} &m_{33}^{(j)} &\cdots &m_{3,n-1}^{(j)}\\
        \vdots & \vdots & \vdots & \vdots &\ddots &\vdots\\
		m_{n-1,0}^{(j)} & m_{n-1,1}^{(j)} & m_{n-1,2}^{(j)} &m_{n-1,3}^{(j)} &\cdots &m_{n-1,n-1}^{(j)} \\ 
	\end{pmatrix}$$\\
    under the basis $\{|0\rangle,\,|1\rangle,\cdots,\,|(n-1)\rangle\}$ for $j=1$, $2$, $\cdots$, $l$, where $n\geq3$, and $n$ denotes the dimension of the space in which the first subsystems of these five states reside. To ensure that the measurement can proceed, any two states that are orthogonal only on Alice's side should remain orthogonal after being measured by Alice.      
     
    For the OPSs $|\phi_{1}\rangle$ and $|\phi_{5}\rangle$, we have $\langle\phi_{1}|M_{j}^{\dagger}M_{j}$ $\otimes I_{B}|\phi_{5}\rangle=0$ and $\langle \phi_{5}|M_{j}^{\dagger}M_{j}\otimes I_{B}|\phi_{1}\rangle=0$. Thus, $m_{01}^{(j)}=0$ and $m_{10}^{(j)}=0$. For $|\phi_{1}\rangle$ and $|\phi_{4}\rangle$, we have  $\langle \phi_{1}|M_{j}^{\dagger}M_{j}\otimes I_{B}|\phi_{4}\rangle=0$ and $\langle \phi_{4}|M_{j}^{\dagger}M_{j}\otimes I_{B}|\phi_{1}\rangle=0$. That is, $m_{01}^{(j)}+em_{02}^{(j)}=0$ and $m_{10}^{(j)}+e^{*}m_{20}^{(j)}=0$. Thus, we have $m_{02}^{(j)}=0$ and $m_{20}^{(j)}=0$ since $m_{01}^{(j)}=0$, $m_{10}^{(j)}=0$ and $e\neq0$. For $|\phi_{2}\rangle$ and $|\phi_{5}\rangle$, we have  $\langle \phi_{2}|M_{j}^{\dagger}M_{j}\otimes I_{B}|\phi_{5}\rangle=0$ and  $\langle \phi_{5}|M_{j}^{\dagger}M_{j}\otimes I_{B}|\phi_{2}\rangle=0$. That is, $m_{01}^{(j)}+a^{*}m_{21}^{(j)}=0$ and $m_{10}^{(j)}+am_{12}^{(j)}=0$. Thus, we have $m_{21}^{(j)}=0$ and $m_{12}^{(j)}=0$ since $m_{01}^{(j)}=m_{10}^{(j)}=0$ and $a\neq0$. For $|\phi_{2}\rangle$ and $|\phi_{3}\rangle$, we have  $\langle \phi_{2}|M_{j}^{\dagger}M_{j}\otimes I_{B}|\phi_{3}\rangle=0$. Thus, $m_{00}^{(j)}+a^{*}cm_{22}^{(j)}=0$. Since $c=-1/a^{*}$, we have $m_{00}^{(j)}=m_{22}^{(j)}$. For $|\phi_{4}\rangle$ and $|\phi_{3}\rangle$, we have  $\langle \phi_{4}|M^{\dagger}_{j}M_{j}\otimes I_{B}|\phi_{3}\rangle=0$. Thus, $bm_{11}^{(j)}+e^{*}cm_{22}^{(j)}=0$. Since $e=ab^{*}$ and $c=-1/a^{*}$, we have $m_{11}^{(j)}=m_{22}^{(j)}$. Consequently, each POVM element $M_{j}^{\dagger}M_{j}$ must have the form\\
    $$\begin{pmatrix}
		 m_{00}^{(j)} & 0 & 0 &m_{03}^{(j)} &\cdots &m_{0,n-1}^{(j)}\\
		 0 & m_{00}^{(j)} & 0 &m_{13}^{(j)} &\cdots &m_{1,n-1}^{(j)}\\
         0 & 0 & m_{00}^{(j)} &m_{23}^{(j)} &\cdots &m_{2,n-1}^{(j)}\\
         m_{30}^{(j)} & m_{31}^{(j)} & m_{32}^{(j)} &m_{33}^{(j)} &\cdots &m_{3,n-1}^{(j)}\\
         \vdots & \vdots & \vdots & \vdots &\ddots &\vdots\\
		 m_{n-1,0}^{(j)} & m_{n-1,1}^{(j)} & m_{n-1,2}^{(j)} &m_{n-1,3}^{(j)} &\cdots &m_{n-1,n-1}^{(j)} \\ 
	\end{pmatrix}.$$\\
    Thus, when Alice's measurement outcome is $j$, the probability that the outcome $j$ occurs is $p(j)=\langle\phi_{k}|M_{j}^{\dag}M_{j}\otimes I_{B}|\phi_{k}\rangle=m_{00}^{(j)}$ for $k\in\{1,\,2,\,3,\,4,\,5\}$, where $j\in\{1,\,$ $2,\,$ $\cdots,\,$ $l\}$. This means that the probability of measurement outcome $j$ occurring is identical for any state $|\phi_{k}\rangle$. Therefore, Alice cannot get any useful information to distinguish these five OPSs. 
    
       In fact, the second party, say Bob, will face the same situation as Alice does due to the symmetry of the set composed of these five OPSs. Therefore, these five states cannot be perfectly distinguished by LOCC. This completes the proof. 
    \end{proof}
    
    For the first subsystems of five OPSs in Eq. (2), there exist three different cases when $h=0$, i.e., (1) $d\neq0$ and $g=0$; (2) $d=0$ and $g\neq0$; (3) $d\neq0$ and $g\neq0$. For the distinguishability of five OPSs in Eq. (2), we have the following conclusion as shown in Theorem~\ref{theorem3}.

    \begin{theorem}\label{theorem3} Suppose that the state to be identified is one of five bipartite OPSs in Eq. (2) with equal likelihood. (1) When $h=g=0$ and $d\neq0$, there exists a protocol that allows the first party to perfectly distinguish the five OPSs in Eq. (2) with the probability of $\frac{|d|^{2}}{5(1+|b|^{2}+|c|^{2}+|d|^{2})}$; (2) when $h=d=0$ and $g\neq0$, there exists a protocol that allows the first party to perfectly distinguish the five OPSs in Eq. (2) with the probability of $\frac{|g|^{2}}{5(1+|e|^{2}+|g|^{2})}$; (3) when $h=0$,  $d\neq0$ and $g\neq0$, there exists a protocol that allows the first party to perfectly distinguish the five OPSs in Eq. (2) with the probability of $\frac{1}{5}\{\frac{|d|^{2}}{(|d|^{2}+|b|^{2})}+\frac{|d-gb|^{2}}{(|d|^{2}+|b|^{2})[1+|a(b^{*}+d^{*}g)|^{2}+|g|^{2}]}\}.$\end{theorem}
    
    \begin{proof} We provide proofs for different cases separately.

     (1) When $h=g=0$ and $d\neq0$ in Eq. (2)
     
     In this case, the five subsystems that Alice needs to distinguish are given as shown in Eq. (4).
     
     \begin{align}
        |\phi_{1}\rangle_{1}&=|0\rangle,\notag \\
        |\phi_{2}\rangle_{1}&=\frac{1}{\sqrt{1+|a|^{2}}}(|0\rangle+a|2\rangle),\notag\\
        |\phi_{3}\rangle_{1}&=\frac{1}{\sqrt{1+|b|^{2}+|c|^{2}+|d|^{2}}}(|0\rangle+b|1\rangle+c|2\rangle+d|3\rangle),\notag  \\
        |\phi_{4}\rangle_{1}&=\frac{1}{\sqrt{1+|e|^{2}}}(|1\rangle+e|2\rangle),\notag \\ 
        |\phi_{5}\rangle_{1}&=|1\rangle,      
    \end{align}
    where $a\neq 0$, $b\neq0$, $c=-1/a^{*}\neq0$,  $e\neq0$ and $d\neq0$. Suppose that the first party, say Alice, performs a measurement with the operators $M_{1}=|3\rangle\langle3|$ and $M_{2}=I-|3\rangle\langle3|$.  
    
    \textcircled{1} If Alice's measurement outcome corresponds to  $M_{1}$, the measured state must be state 3, i.e., $|\phi_{3}\rangle$. The probability of this event is $P(1)=\langle\phi_{3}|M_{1}^{\dag}M_{1}\otimes I_{B}|\phi_{3}\rangle=\langle\phi_{3}|M_{1}^{\dag}M_{1}|\phi_{3}\rangle_{1}=\frac{|d|^{2}}{(1+|b|^{2}+|c|^{2}+|d|^{2})}$ when the measured state is $|\phi_{3}\rangle$. Given that the probability of the state under measurement being state 3 is 1/5, the probability for its perfect identification by Alice is $\frac{|d|^{2}}{5(1+|b|^{2}+|c|^{2}+|d|^{2})}$. 
    
    \textcircled{2} If Alice's measurement outcome corresponds to  $M_{2}$, the first subsystem of the measured state must be one of the forms as shown in Eq. (5) after Alice's measurement.
    \begin{align}
        |\phi_{1}\rangle_{1}&=|0\rangle,\notag\\ 
        |\phi_{2}\rangle_{1}&=\frac{1}{\sqrt{1+|a|^{2}}}(|0\rangle+a|2\rangle),\notag\\
        |\phi_{3}^{\prime}\rangle_{1}&=\frac{1}{\sqrt{1+|b|^{2}+|c|^{2}}}(|0\rangle+b|1\rangle+c|2\rangle),\notag\\
        |\phi_{4}\rangle_{1}&=\frac{1}{\sqrt{1+|e|^{2}}}(|1\rangle+e|2\rangle),\notag \\ 
        |\phi_{5}\rangle_{1}&=|1\rangle.       
    \end{align}
    Note that the states in Eq. (5) are identical to the first subsystems of the states in Eq. (3). This means that Alice faces the same situation as she does in the proof of Theorem~\ref{theorem2}. Therefore, Alice cannot get any useful information to identify the measured state in this situation.
    
    (2) when $h=d=0$ and $g\neq0$ in Eq. (2)
    
    In this case, the five subsystems that Alice needs to distinguish are given as shown in Eq. (6).
        \begin{align}
        |\phi_{1}\rangle_{1}&=|0\rangle,\notag \\
        |\phi_{2}\rangle_{1}&=\frac{1}{\sqrt{1+|a|^{2}}}(|0\rangle+a|2\rangle),\notag\\
        |\phi_{3}\rangle_{1}&=\frac{1}{\sqrt{1+|b|^{2}+|c|^{2}}}(|0\rangle+b|1\rangle+c|2\rangle),\notag  \\
        |\phi_{4}\rangle_{1}&=\frac{1}{\sqrt{1+|e|^{2}+|g|^{2}}}(|1\rangle+e|2\rangle+g|3\rangle),\notag \\ 
        |\phi_{5}\rangle_{1}&=|1\rangle,      
    \end{align}
   where $a\neq 0$, $b\neq0$, $c=-1/a^{*}\neq0$, $e=ab^{*}$. Suppose that the first party, say Alice, performs a measurement with the operators $M_{1}=|3\rangle\langle3|$ and $M_{2}=I-|3\rangle\langle3|$.  
   
   \textcircled{1} If Alice's measurement outcome corresponds to  $M_{1}$, the measured state must be state 4, i.e., $|\phi_{4}\rangle$. The probability of this event is $P(1)=\langle\phi_{4}|M_{1}^{\dag}M_{1}\otimes I_{B}|\phi_{4}\rangle=\langle\phi_{4}|M_{1}^{\dag}M_{1}|\phi_{4}\rangle_{1}=\frac{|g|^{2}}{(1+|e|^{2}+|g|^{2})}$ when the measured state is $|\phi_{4}\rangle$. Given that the probability of the state under measurement being state 4 is 1/5, the probability for its perfect identification by Alice is $\frac{|g|^{2}}{5(1+|e|^{2}+|g|^{2})}$.

   \textcircled{2} If Alice's measurement outcome corresponds to  $M_{2}$, the first subsystem of the measured state must be one of the forms as shown in Eq. (7) after Alice's measurement.
    \begin{align}
        |\phi_{1}\rangle_{1}&=|0\rangle,\notag \\
        |\phi_{2}\rangle_{1}&=\frac{1}{\sqrt{1+|a|^{2}}}(|0\rangle+a|2\rangle),\notag\\
        |\phi_{3}\rangle_{1}&=\frac{1}{\sqrt{1+|b|^{2}+|c|^{2}}}(|0\rangle+b|1\rangle+c|2\rangle),\notag\\
        |\phi_{4}^{\prime}\rangle_{1}&=\frac{1}{\sqrt{1+|e|^{2}}}(|1\rangle+e|2\rangle),\notag \\ 
        |\phi_{5}\rangle_{1}&=|1\rangle.        
    \end{align}
    Note that the states in Eq. (7) are identical to the first subsystems of the states in Eq. (3). This means that Alice faces the same situation as she does in the proof of Theorem~\ref{theorem2}. Therefore, Alice cannot get any useful information to identify the measured state in this situation.

    (3) When $h=0$,  $d\neq0$ and $g\neq0$ in Eq. (2)
    
    In this case, the five subsystems that Alice needs to identify must be one of the forms as shown in Eq. (8).

    \begin{align}
        |\phi_{1}\rangle_{1}&=|0\rangle,\notag \\
        |\phi_{2}\rangle_{1}&=\frac{1}{\sqrt{1+|a|^{2}}}(|0\rangle+a|2\rangle),\notag\\
        |\phi_{3}\rangle_{1}&=\frac{1}{\sqrt{1+|b|^{2}+|\frac{1}{a^{*}}|^{2}+|d|^{2}}}[|0\rangle+b|1\rangle-\frac{1}{a^{*}}|2\rangle+d|3\rangle], \notag \\
        |\phi_{4}\rangle_{1}&=\frac{1}{\sqrt{1+|a(b^{*}+d^{*}g)|^{2}+|g|^{2}}}[|1\rangle+\notag \\ 
         &\qquad  a(b^{*}+d^{*}g)|2\rangle+g|3\rangle],\notag \\ 
        |\phi_{5}\rangle_{1}&=|1\rangle,       
    \end{align}
    where $a\neq 0$, $b\neq0$, $d\neq0$ and $g\neq0$. Suppose that Alice performs a measurement with the operators $M_{1}=\frac{1}{|d|^{2}+|b|^{2}}(d^{*}|1\rangle-b^{*}|3\rangle)(d\langle1|-b\langle3|)$ and $M_{2}= I-\frac{1}{|d|^{2}+|b|^{2}}(d^{*}|1\rangle-b^{*}|3\rangle)(d\langle1|-b\langle3|)$. 
    
    \textcircled{1} If Alice's measurement outcome corresponds to  $M_{1}$, the measured state must be state 4 or 5, i.e., $|\phi_{4}\rangle$ or $|\phi_{5}\rangle$. States 4 and 5 can be perfectly distinguished by the second party since they are orthogonal on the second subsystem. If the state to be identified is $|\phi_{4}\rangle$, the probability of this outcome occurring is 
    $\langle\phi_{4}|M_{1}^{\dag}M_{1}\otimes I_{B}|\phi_{4}\rangle=\langle\phi_{4}|M_{1}^{\dag}M_{1}|\phi_{4}\rangle_{1}=\frac{|d-gb|^{2}}{(|d|^{2}+|b|^{2})[1+|a(b^{*}+d^{*}g)|^{2}+|g|^{2}]}$; 
    If the state to be identified is $|\phi_{5}\rangle$, the probability of this outcome occurring is 
    $\langle\phi_{5}|M_{1}^{\dag}M_{1}\otimes I_{B}|\phi_{5}\rangle=\langle\phi_{5}|M_{1}^{\dag}M_{1}|\phi_{5}\rangle_{1}=\frac{|d|^{2}}{|d|^{2}+|b|^{2}}$. Given that the probability of the state under measurement being state 4 or 5 is 1/5, the probability of the outcome 1 occurring is  $\frac{1}{5}\{\frac{|d|^{2}}{(|d|^{2}+|b|^{2})}+\frac{|d-gb|^{2}}{(|d|^{2}+|b|^{2})[1+|a(b^{*}+d^{*}g)|^{2}+|g|^{2}]}\}.$

    \textcircled{2} If Alice's measurement outcome corresponds to  $M_{2}$, the first subsystem of the measured state must be one of the following forms as shown in Eq. (9) after Alice's measurement.

    \begin{align}
        |\phi_{1}\rangle_{1}&=|0\rangle,\notag \\
        |\phi_{2}\rangle_{1}&=\frac{1}{\sqrt{1+|a|^{2}}}(|0\rangle+a|2\rangle),\notag
    \end{align}
    \begin{widetext}
    \begin{align}      
        |\phi_{3}\rangle_{1}&=\frac{1}{\sqrt{1+|b|^{2}+|\frac{1}{a^{*}}|^{2}+|d|^{2}}}[|0\rangle+b|1\rangle-\frac{1}{a^{*}}|2\rangle+d|3\rangle], \notag\\ 
        |\phi_{4}^{\prime}\rangle_{1}&=\frac{1}{\sqrt{(|b|^{2}+b^{*}g^{*}d)(|b|^{2}+bgd^{*})+(g^{*}|d|^{2}+d^{*}b)(g|d|^{2}+db^{*})}}
        [(|b|^{2}+bgd^{*})|1\rangle+(g|d|^{2}+db^{*})|3\rangle],\notag \\ 
        |\phi_{5}^{\prime}\rangle_{1}&=\frac{1}{\sqrt{|b|^{4}+|d|^{2}|b|^{2}}}[|b|^{2}|1\rangle+db^{*}|3\rangle].       
    \end{align}
     \end{widetext} 
   At this moment, if $|\phi_{3}\rangle_{1}$ and $|\phi_{4}^{\prime}\rangle_{1}$ is orthogonal, i.e., $(|b|^{2}+|d|^{2})(b^{*}+gd^{*})=0$, we have $(b^{*}+gd^{*})=0$. This means  $\langle\phi_{2}|\phi_{4}\rangle_{1}=0$, which contradicts the fact that states 2 and 4 are orthogonal only on the second side. Thus Alice's measurement cannot proceed since $|\phi_{3}\rangle_{1}$ and $|\phi_{4}^{\prime}\rangle_{1}$ is not orthogonal. This completes the proof.
   \end{proof}	
    
   It should be emphasized that the second party will face a similar situation as Alice does in the proof of Theorem~\ref{theorem3} if he performs an orthogonality-preserving measurement on the states chosen from five OPSs in Eq. (2). 
   
   For states in Eq. (2), we give a discrimination protocol from the first side when $h\neq0$. Alice performs a measurement with the operators $M_{1}=|4\rangle\langle4|$ and $M_{2}=I-|4\rangle\langle4|$.  
   
   \textcircled{1} If the measurement outcome corresponds to $M_{1}$, the state to be identified must be state 4, i.e., $|\phi_{4}\rangle$. 
   
   \textcircled{2} If the measurement outcome corresponds to $M_{2}$, the state to be identified is one of states \{1, 2, 3, 4, 5\}. If the state to be identified is one of \{1, 2, 3, 5\}, its first subsystem remains invariant; If the state to be identified is state 4, i.e., $|\phi_{4}\rangle$, its first subsystem collapses to
    $|\phi_{4}^{\prime}\rangle_{1}=\frac{1}{\sqrt{\langle\phi_{4}|M_{2}^{\dag}M_{2}|\phi_{4}\rangle_{1}}}M_{2}|\phi_{4}\rangle_{1}=\frac{1}{\sqrt{1+|e|^{2}+|g|^{2}}}(|1\rangle+e|2\rangle+g|3\rangle).$
    Thus, the first subsystem that Alice needs to identify becomes one of the states $\{|\phi_{1}\rangle_{1},\,|\phi_{2}\rangle_{1},\,|\phi_{3}\rangle_{1},\,|\phi_{4}^{\prime}\rangle_{1},\,|\phi_{5}\rangle_{1}\}$. The subsequent discrimination method can refer to the proof of Theorem~\ref{theorem3}.
    	
\section{Local distinguishability of five orthogonal product states on tripartite quantum systems}\label{sec4}  

   In this section, we further analyze the distinguishability of five tripartite OPSs where any two states are orthogonal only on one subsystem. We still use orthogonal graphs to represent the structures of the sets of five tripartite OPSs. We have excluded graphs that are identical to those shown when parties are interchanged, as those cases will logically follow the same reasoning. In fact, five tripartite OPSs with the vector of the numbers of pairwise orthogonal relations, ($a$, $b$, $c$), ($c$, $a$, $b$), ($b$, $c$, $a$), ($a$, $c$, $b$), ($b$, $a$, $c$) or ($c$, $b$, $a$) exhibit the same local distinguishability. Therefore, it suffices to discuss just one of these six scenarios.
   
   Since the number of pairwise orthogonal relations on each subsystem is not certain, we enumerate all feasible categories as shown in Table~\ref{tab1}. Note that five tripartite OPSs with the vectors of pairwise orthogonal relations, namely, $(10,\,0,\,0)$, $(9,\,1,\,0)$, $(8,\,2,\,0)$, $(7,\,3,\,0)$, $(6,\, 4,\,0)$ and $(5,\,5,\,0)$ can be seen as bipartite OPSs with the vectors of the numbers of pairwise orthogonal relations, $(10,\,0)$, $(9,\,1)$, $(8,\,2)$, $(7,\,3)$, $(6,\,4)$ and $(5,\,5)$, respectively. This means that the local distinguishability of these categories can be reduced to that of five bipartite OPSs. Therefore, we only need to consider the categories of five tripartite OPSs where each party has at least one pair of orthogonal relations.
    \begin{table}[H]
	\centering
	\caption{Categories of five tripartite OPSs by the vectors of the numbers of pairwise orthogonal relations}\label{tab1}
	\begin{ruledtabular}
		\begin{tabular}{c c}
			Categories that can be seen   &Categories that cannot be \\
			as bipartite set of OPSs  &seen as bipartite set of OPSs \\
			\noalign{\smallskip}\hline\noalign{\smallskip}
			(10,\,0,\,0)     & (8,\,1,\,1) \\
			(9,\,1,\,0)     & (7,\,2,\,1) \\
			(8,\,2,\,0)     & (6,\,3,\,1) \\
			(7,\,3,\,0)     & (6,\,2,\,2) \\
			(6,\,4,\,0)     & (5,\,4,\,1) \\
			(5,\,5,\,0)     & (5,\,3,\,2) \\
			                & (4,\,4,\,2) \\
                            & (4,\,3,\,3) \\
		\end{tabular}
	\end{ruledtabular}
\end{table}
  
\begin{figure}[H]
		\setlength{\belowcaptionskip}{0.2cm}
		\centering
		\includegraphics[width=0.25\textwidth]{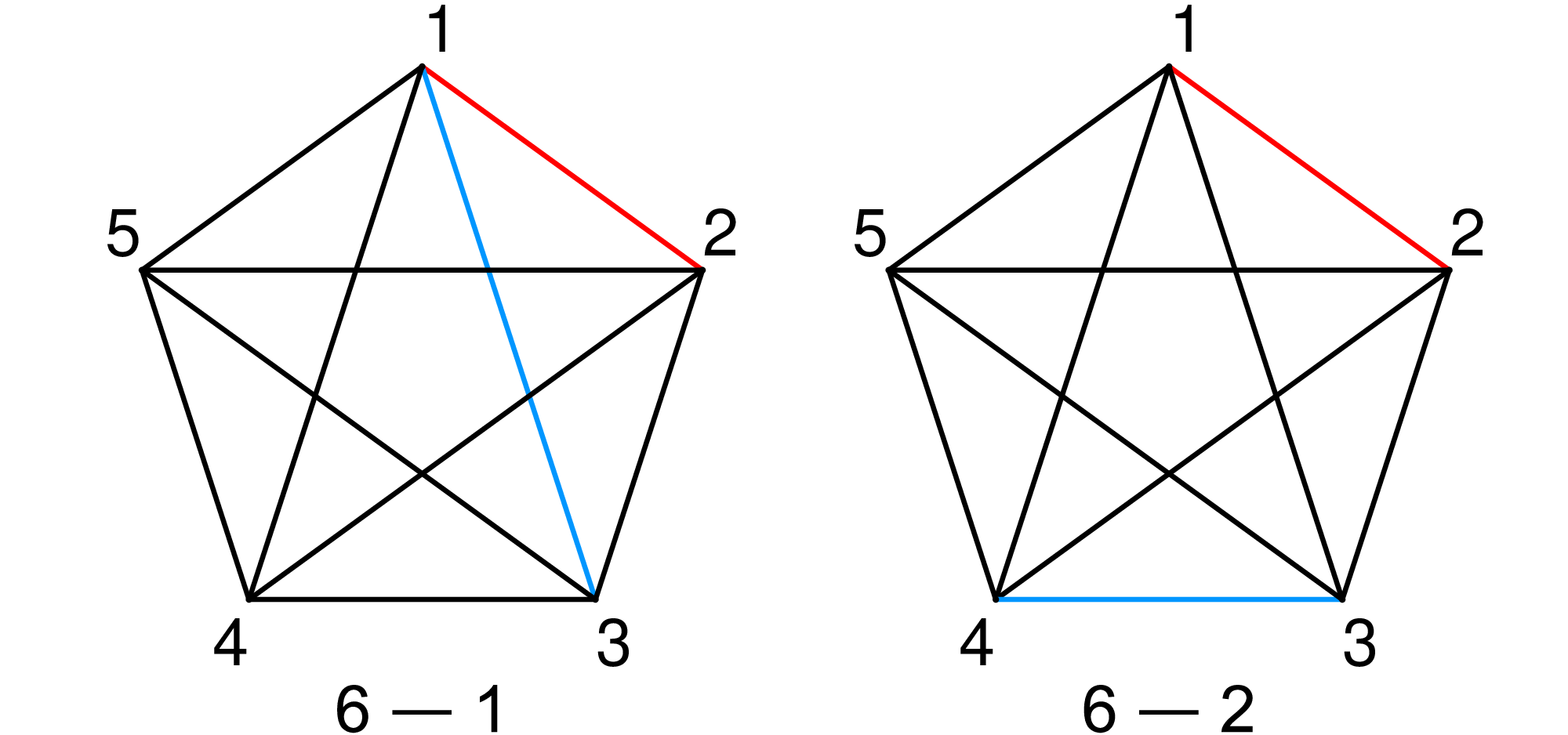}
		\begin{center}
			\caption{The feasible graphs of five OPSs with the vector of the numbers of pairwise orthogonal relations $(8,\,1,\,1)$}\label{fig6}
		\end{center}
	\end{figure}

All feasible orthogonal graphs for categories $(8,\,1,\,1)$, $(7,\,2,\,1)$,  $(6,\,3,\,1)$, $(6,\,2,\,2)$, $(5,\,4,\,1)$, $(5,\,3,\,2)$, $(4,\,4,\,2)$ and $(4,\,3,\,3)$ are given in Figs. \ref{fig6}-\ref{fig13}. It should be noted that we have omitted graphs which are the same as the graphs shown under interchange of parties as clearly those cases will follow the same line of reasoning.

\begin{figure}[H]
		\setlength{\belowcaptionskip}{0.2cm}
		\centering
		\includegraphics[width=0.4\textwidth]{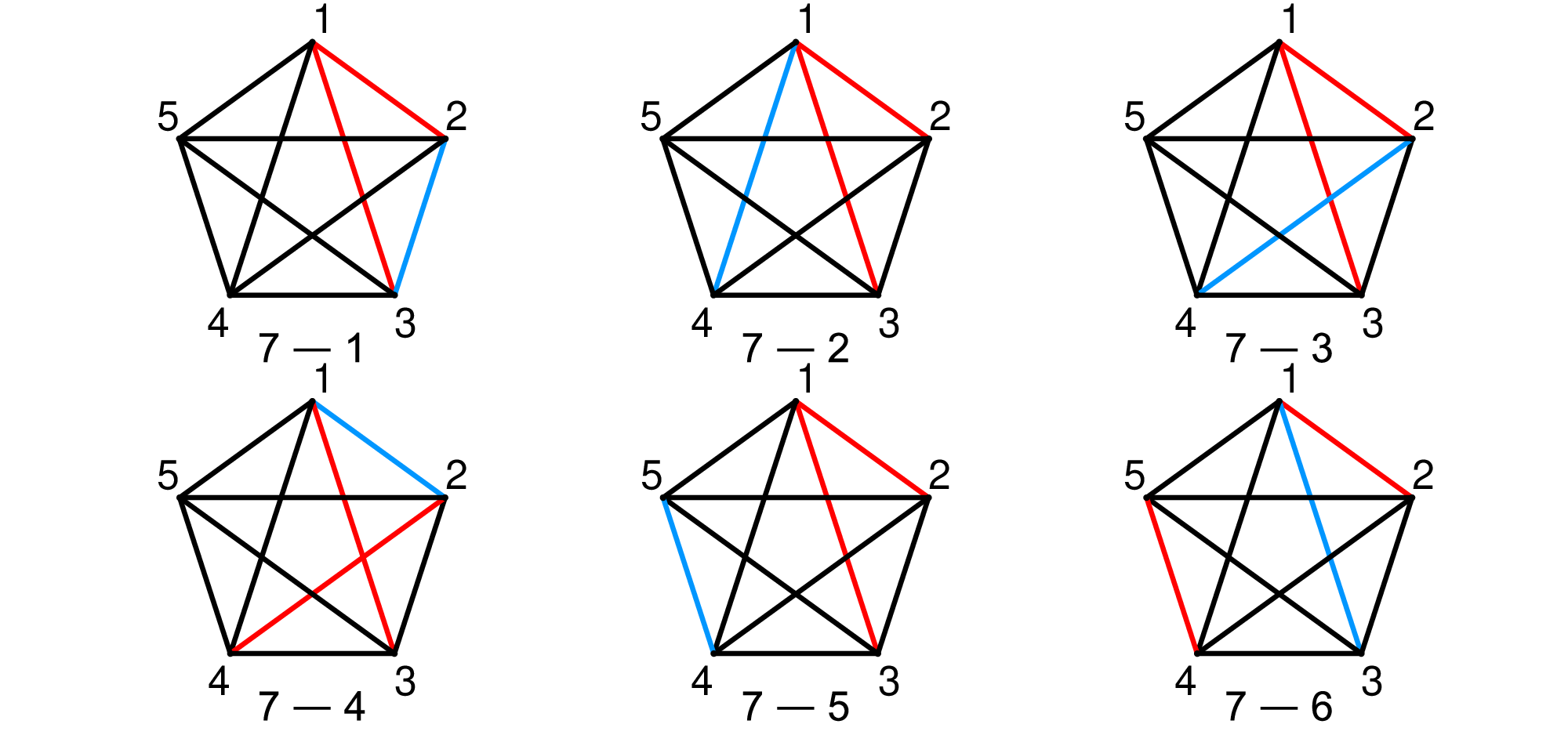}
		\begin{center}
			\caption{The feasible graphs of five OPSs with the vector of the numbers of pairwise orthogonal relations $(7,\,2,\,1)$}\label{fig7}
			\vspace{-40pt}
		\end{center}
	\end{figure}

\begin{figure}[H]
		\setlength{\belowcaptionskip}{0.2cm}
		\centering
		\includegraphics[width=0.40\textwidth]{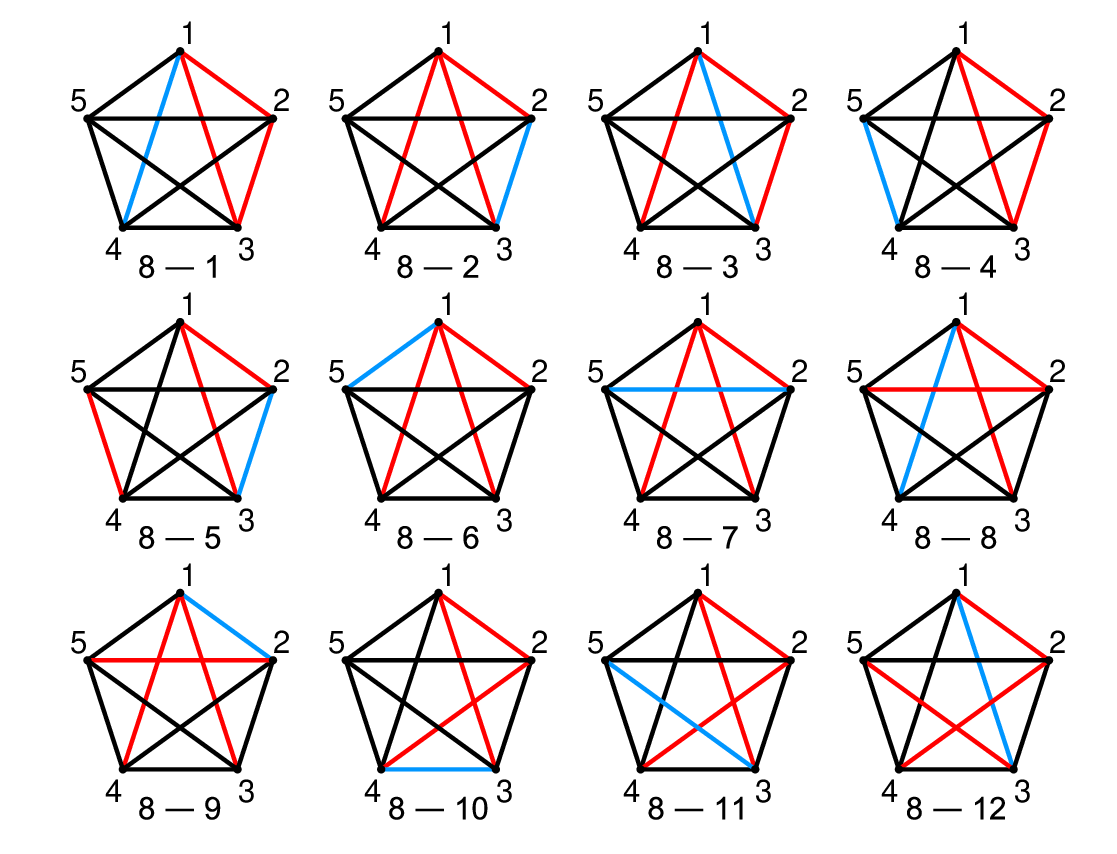}
		\begin{center}
			\caption{The feasible graphs of five OPSs with the vector of the numbers of pairwise orthogonal relations (6,\,3,\,1)}\label{fig8}
		\end{center}
	\end{figure}
\begin{theorem} \label{theorem4}
	Five tripartite OPSs, any two of which are orthogonal only on one subsystem, can be perfectly distinguished by LOCC, except the following two cases:

(1) There exists a vertex ordering such that the vertices  $v_{j}$  in their orthogonality graph satisfy 
\begin{align}
&deg_{1}(v_{j})=2\,\, for\,\, j=1,\,2,\,3,\,4,\,5;\nonumber\\
&deg_{2}(v_{1})=2;\nonumber\\
&deg_{2}(v_{j})=deg_{3}(v_{j})=1\,\,for\,\, j=2,\,3,\,4,\,5.\nonumber
\end{align}

(2) Among five tripartite OPSs, there exist four states and an ordering of these four states such that the vertices $v_{j}$ of the orthogonality graph of these four OPSs satisfy  
\begin{align}
&deg_{1}(v_{j})=deg_{2}(v_{j})=deg_{3}(v_{j})=1\,\, for\,\, j=1,\,2,\,3,\,4.\nonumber
\end{align}
\end{theorem}

By analyzing all feasible orthogonality graphs of five tripartite OPSs, as shown in Figs. \ref{fig6}-\ref{fig13}, we know that graph (11-30) falls under exceptional Case 1, and graphs (9-8), (11-26), (12-15), (12-20), (13-22) and (13-27) fall under  exceptional Case 2. Therefore, except for those corresponding to these seven orthogonality graphs, any five tripartite OPSs, any two of which are orthogonal only on one subsystem, can be perfectly distinguished by LOCC. The proof of Theorem~\ref{theorem4} is given in Appendix \ref{app2}.

\begin{figure}[H]
	    \setlength{\belowcaptionskip}{0.3cm}
	    \centering
	     \includegraphics[width=0.40\textwidth]{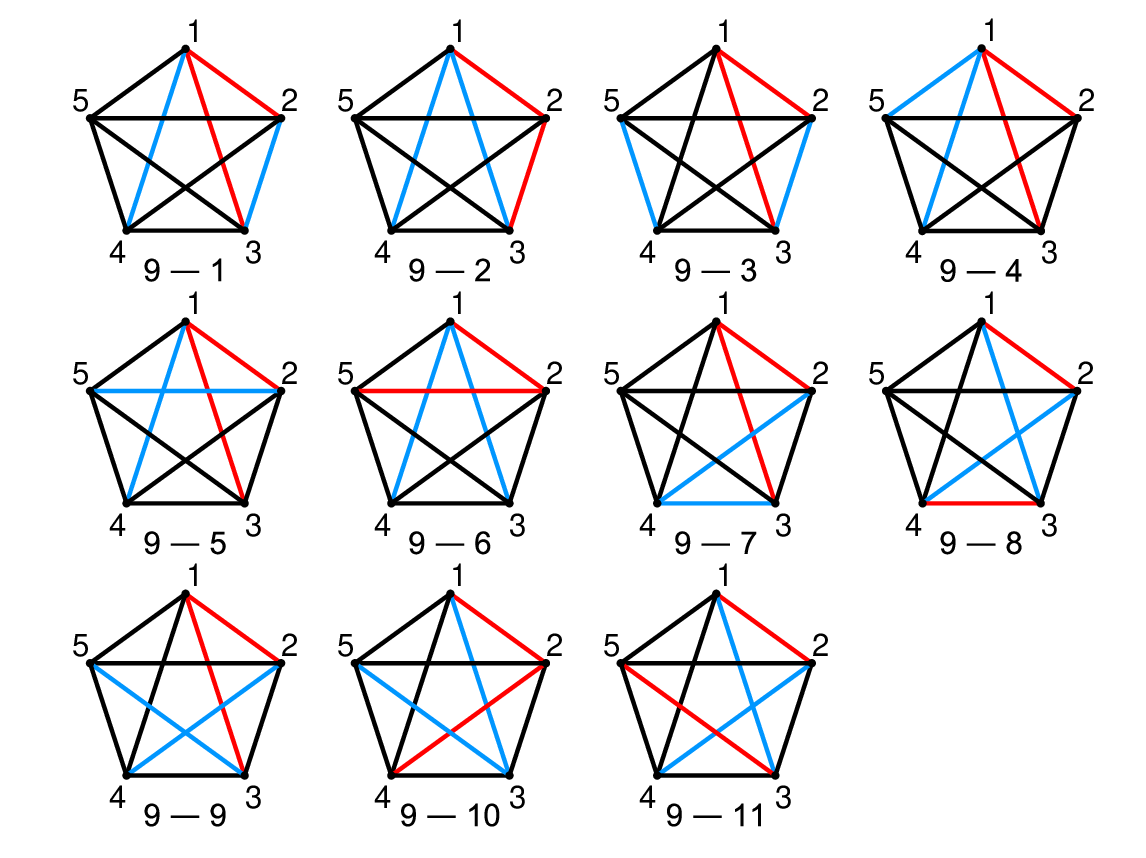}
     	\begin{center}
		\caption{The feasible graphs of five tripartite OPSs with the vector of the numbers of pairwise orthogonal relations $(6,\,2,\,2)$}\label{fig9}
		\vspace{-40pt}
	    \end{center}
    \end{figure}
    
    \begin{figure}[H]
		\setlength{\belowcaptionskip}{0.3cm}
		\centering
		\includegraphics[width=0.40\textwidth]{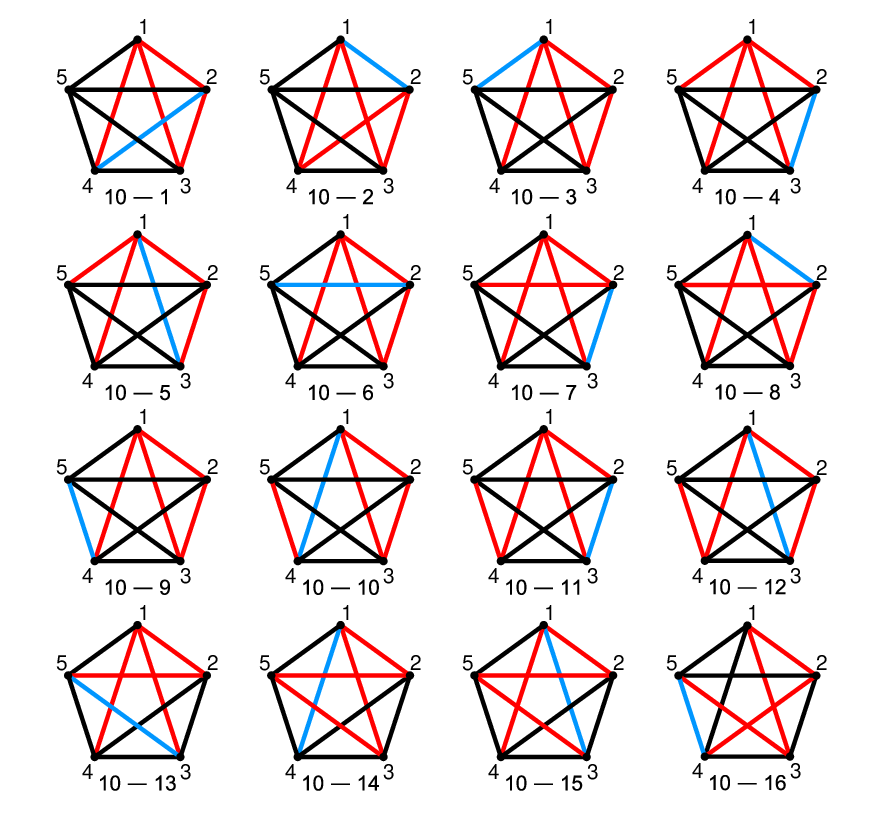}
		\begin{center}
			\caption{The feasible graphs of five OPSs with the vector of the numbers of pairwise orthogonal relations $(5,\,4,\,1)$ }\label{fig10}
			\vspace{-40pt}
		\end{center}
	\end{figure}

	\begin{figure}[H]
		\setlength{\belowcaptionskip}{0.3cm}
		\centering
		\includegraphics[width=0.425\textwidth]{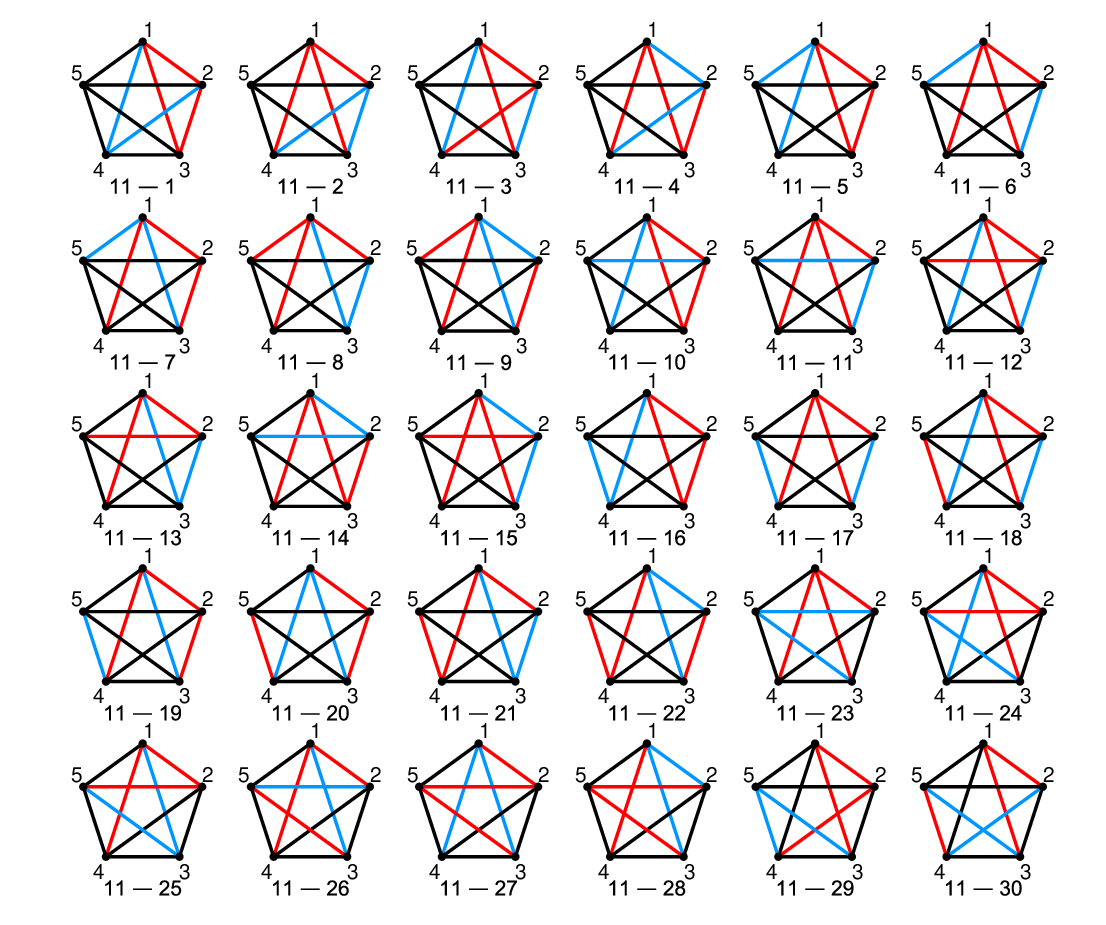}
		\begin{center}
			\caption{The feasible graphs of five tripartite OPSs with the vector of the numbers of pairwise orthogonal relations $(5,\,3,\,2)$}\label{fig11}
			\vspace{-40pt}
		\end{center}
	\end{figure}

\begin{figure}[H]
		\setlength{\belowcaptionskip}{0.3cm}
		\centering
		\includegraphics[width=0.35\textwidth]{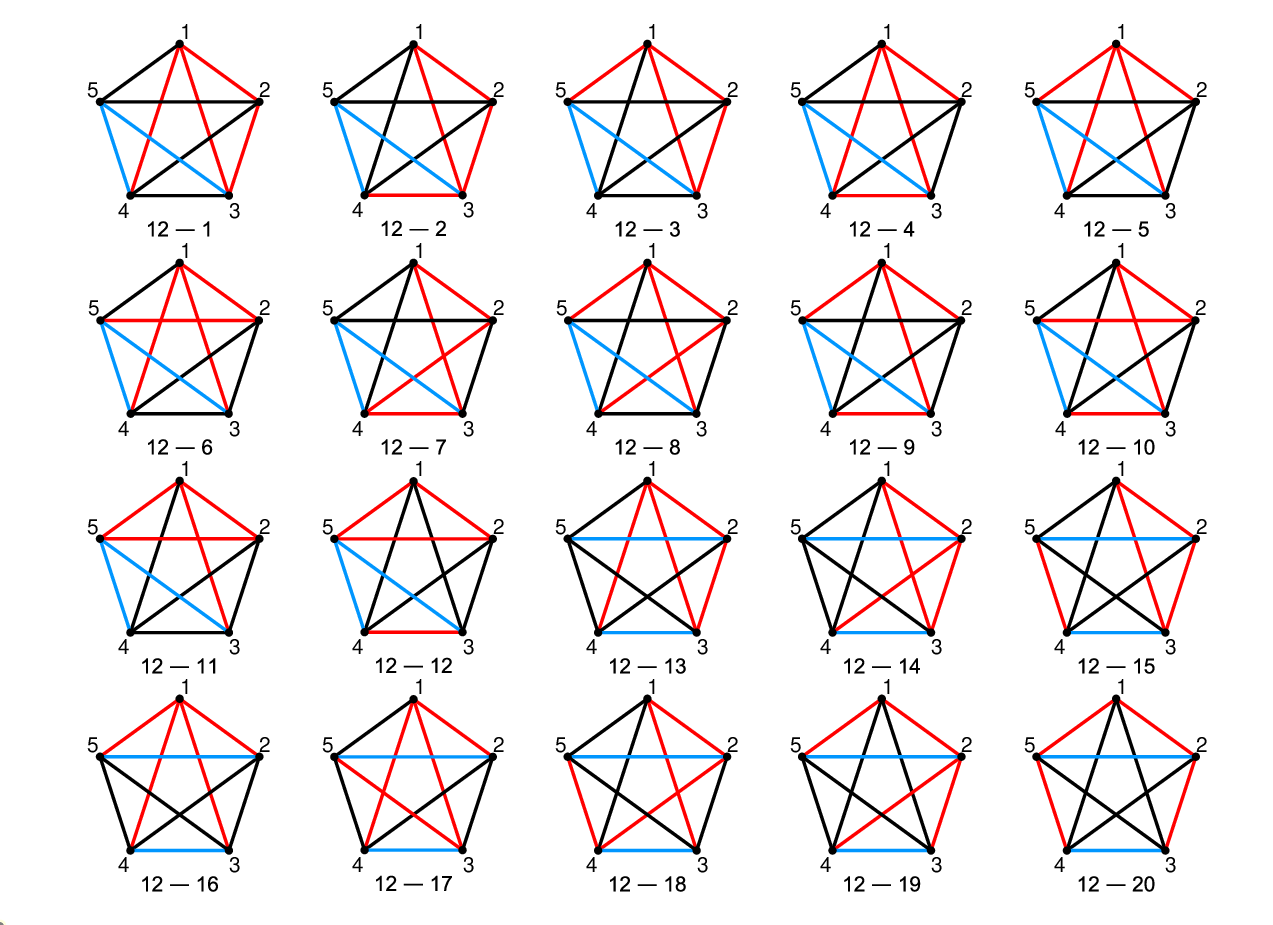}
		\begin{center}
			\caption{The feasible graphs of five OPSs with the vector of the numbers of pairwise orthogonal relations $(4,\,4,\,2)$}\label{fig12}
			\vspace{-40pt}
		\end{center}
	\end{figure}

	\begin{figure}[H]
		\setlength{\belowcaptionskip}{0.3cm}
		\centering
		\includegraphics[width=0.425\textwidth]{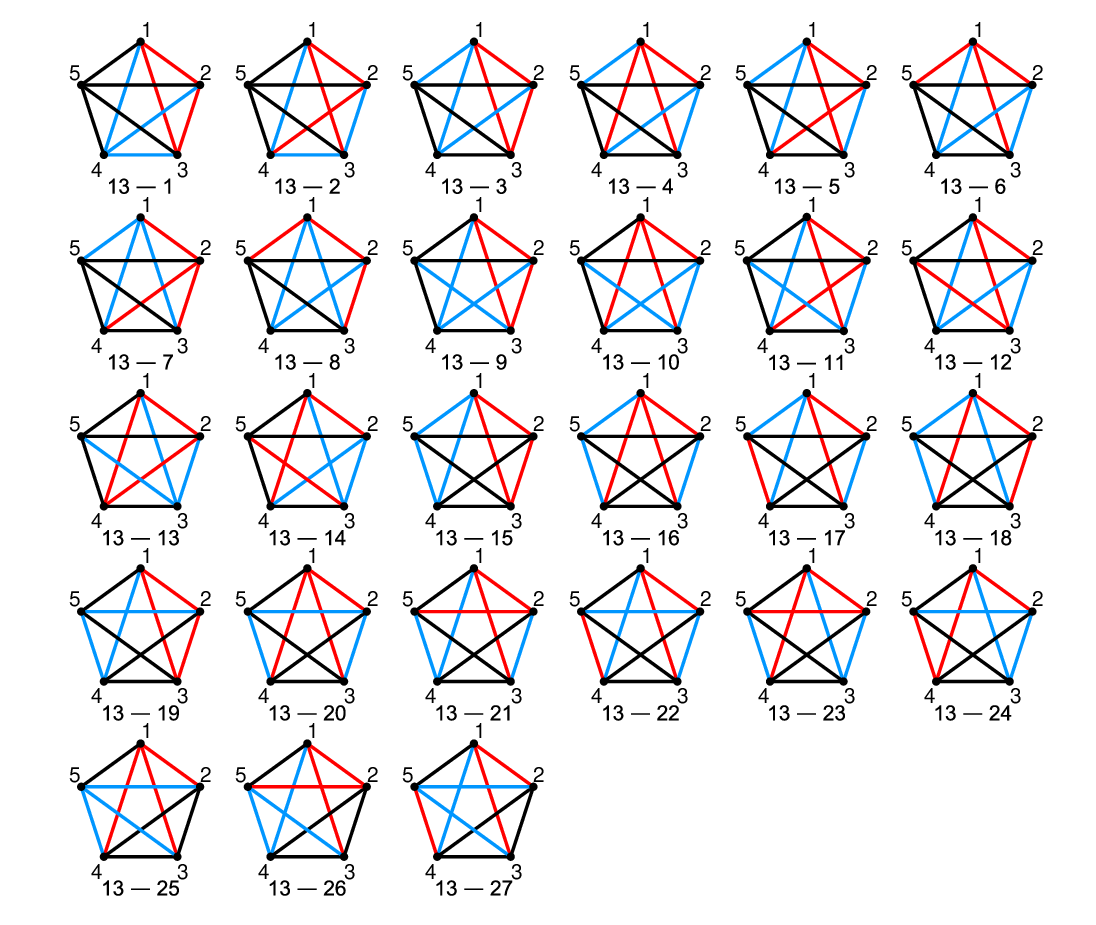}
		\begin{center}
			\caption{The feasible graphs of five tripartite OPSs with the vector of the numbers of pairwise orthogonal relations $(4,\,3,\,3)$}\label{fig13}
			\vspace{-40pt}
		\end{center}
	\end{figure} 

Now, we discuss the local distinguishability of five tripartite OPSs that correspond to each of those seven graphs. 

For excluded case 1 in Theorem~\ref{theorem4}, i.e., five tripartite OPSs that correspond to graph (11-30), as shown in Fig. 14, we construct a set of five tripartite OPSs that cannot be perfectly distinguished by LOCC and a set of five tripartite OPSs that can be perfectly distinguished by LOCC, respectively. This means that the local distinguishability of five tripartite OPSs corresponding to graph (11-30) deserves further in-depth investigation. 
\begin{figure}[H]
		\setlength{\belowcaptionskip}{0.3cm}
		\centering
		\includegraphics[width=0.2\textwidth]{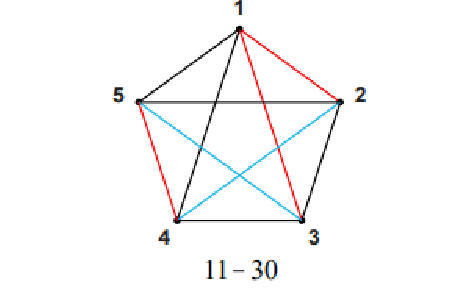}
		\begin{center}
			\caption{Excluded case 1 for Theorem 4}\label{fig14}
			\vspace{-20pt}
		\end{center}
\end{figure}

  Example 1. One set of five tripartite OPSs with orthogonality graph (11-30) is shown in Eq. (10). 
     \begin{align}
  		&|\phi_{1}\rangle=|0\rangle_{1}|0\rangle_{2}(\frac{1}{\sqrt{3}}|0\rangle+\sqrt{\frac{2}{3}}|1 \rangle)_{3},\notag\\
  		&|\phi_{2}\rangle=\frac{1}{\sqrt{6}}|0+1-2\rangle_{1}|1\rangle_{2}|0+1\rangle_{3},\notag\\
  		&|\phi_{3}\rangle=\frac{1}{\sqrt{2}}|0+2\rangle_{1}|1\rangle_{2}|0\rangle_{3},\notag\\
  		&|\phi_{4}\rangle=\frac{1}{2}|1\rangle_{1}|0+1\rangle_{2}|0-1\rangle_{3},\notag\\
  		&|\phi_{5}\rangle=\frac{1}{2}|1+2\rangle_{1}|0-1\rangle_{2}|1\rangle_{3}.
  \end{align}
  In fact, these five tripartite OPSs cannot be perfectly distinguished by LOCC. To prove the local indistinguishability of these five tripartite OPSs, we only need to show that any of the three parties can only perform a trivial POVM when it is necessary to maintain the orthogonality of post-measurement states. For these five states, it suffices to prove that neither party can obtain useful information regardless of which party performs the measurement first. The proof method can refer to that of Theorem~\ref{theorem2}.
  
  Example 2. Another set of five tripartite OPSs with orthogonality graph (11-30) is shown in Eq. (11). 
  
  \begin{align}
  		&|\phi_{1}\rangle=|0\rangle_{1}|0\rangle_{2}|0\rangle_{3},\notag\\
  		&|\phi_{2}\rangle=\frac{1}{\sqrt{6}}|0+1-2\rangle_{1}|1\rangle_{2}|0+1\rangle_{3},\notag\\
  		&|\phi_{3}\rangle=\frac{1}{2}|0+2\rangle_{1}|1\rangle_{2}|0+2\rangle_{3},\notag\\
  		&|\phi_{4}\rangle=\frac{1}{2}|1\rangle_{1}|0+1\rangle_{2}|0-1\rangle_{3},\notag\\
  		&|\phi_{5}\rangle=\frac{1}{2\sqrt{2}}|1+2\rangle_{1}|0-1\rangle_{2}|0-2\rangle_{3},
  \end{align}
  where $|0+1-2\rangle$ denotes $|0\rangle+|1\rangle-|2\rangle$, etc. These five tripartite OPSs can be perfectly distinguished by LOCC. Now we give a POVM protocol to distinguish the five states in Eq. (11). Consider the following four measurement operators
   \begin{align}
   		&\Pi_{1}=\frac{\sqrt{3}}{6}|0+1+2\rangle\langle0+1+2|,\notag\\
   		&\Pi_{2}=\frac{\sqrt{3}}{6}|0-1+2\rangle\langle0-1+2|,\notag\\
   		&\Pi_{3}=\frac{\sqrt{3}}{6}|0+1-2\rangle\langle0+1-2|,\notag\\
   		&\Pi_{4}=\frac{\sqrt{3}}{6}|0-1-2\rangle\langle0-1-2|,
   	\end{align}
   where the four operators satisfy the completeness equation, i.e.,
   $$\sum_{i=1}^{4}\Pi_{i}^{\dag}\Pi_{i}=I.$$
   Suppose that the third party, say Charlie, performs a measurement on the third subsystem of the measured state with the operators $\{\Pi_{i}:\,i=1,\,2,\,3,\,4\}$.
   
   \textcircled{1} If the measurement outcome corresponds to $\Pi_{1}$, the measured state must be $|\phi_{1}\rangle$, $|\phi_{2}\rangle$ or $|\phi_{3}\rangle$. By Lemma~\ref{lemma1}, states $|\phi_{1}\rangle$, $|\phi_{2}\rangle$ and  $|\phi_{3}\rangle$ can be perfectly identified by the first and second parties since any two are orthogonal on the first or second subsystem. 
   
   \textcircled{2} If the measurement outcome corresponds to $\Pi_{2}$, the measured state must be $|\phi_{1}\rangle$, $|\phi_{3}\rangle$ or $|\phi_{4}\rangle$. By Lemma~\ref{lemma1}, states $|\phi_{1}\rangle$, $|\phi_{3}\rangle$ and $|\phi_{4}\rangle$ can be perfectly identified by the first and second parties since any two are orthogonal on the first or second subsystem. 
   
   \textcircled{3} If the measurement outcome corresponds to $\Pi_{3}$, the measured state must be $|\phi_{1}\rangle$, $|\phi_{2}\rangle$ or $|\phi_{5}\rangle$. By Lemma~\ref{lemma1}, states $|\phi_{1}\rangle$, $|\phi_{2}\rangle$ and $|\phi_{5}\rangle$ can be perfectly identified by the first and second parties since any two are orthogonal on the first or second subsystem. 
   
   \textcircled{4} If the measurement outcome corresponds to $\Pi_{4}$, the measured state must be $|\phi_{1}\rangle$, $|\phi_{4}\rangle$ or $|\phi_{5}\rangle$. By Lemma~\ref{lemma1}, states $|\phi_{1}\rangle$, $|\phi_{4}\rangle$ and $|\phi_{5}\rangle$ can be perfectly identified by the first and second parties since any two are orthogonal on the first or second subsystem. 
   
   Therefore, states in Eq. (11) can be perfectly distinguished by LOCC.
   
   For excluded case 2 in Theorem~\ref{theorem4}, i.e., five tripartite OPSs that correspond to graph (9-8), graph (11-26), graph (12-15), graph (12-20), graph (13-22), or graph (13-27), as shown in Fig. 15, we discuss separately the local distinguishability of five tripartite OPSs corresponding to each graph.

\begin{figure}[H]
		\setlength{\belowcaptionskip}{0.3cm}
		\centering
		\includegraphics[width=0.35\textwidth]{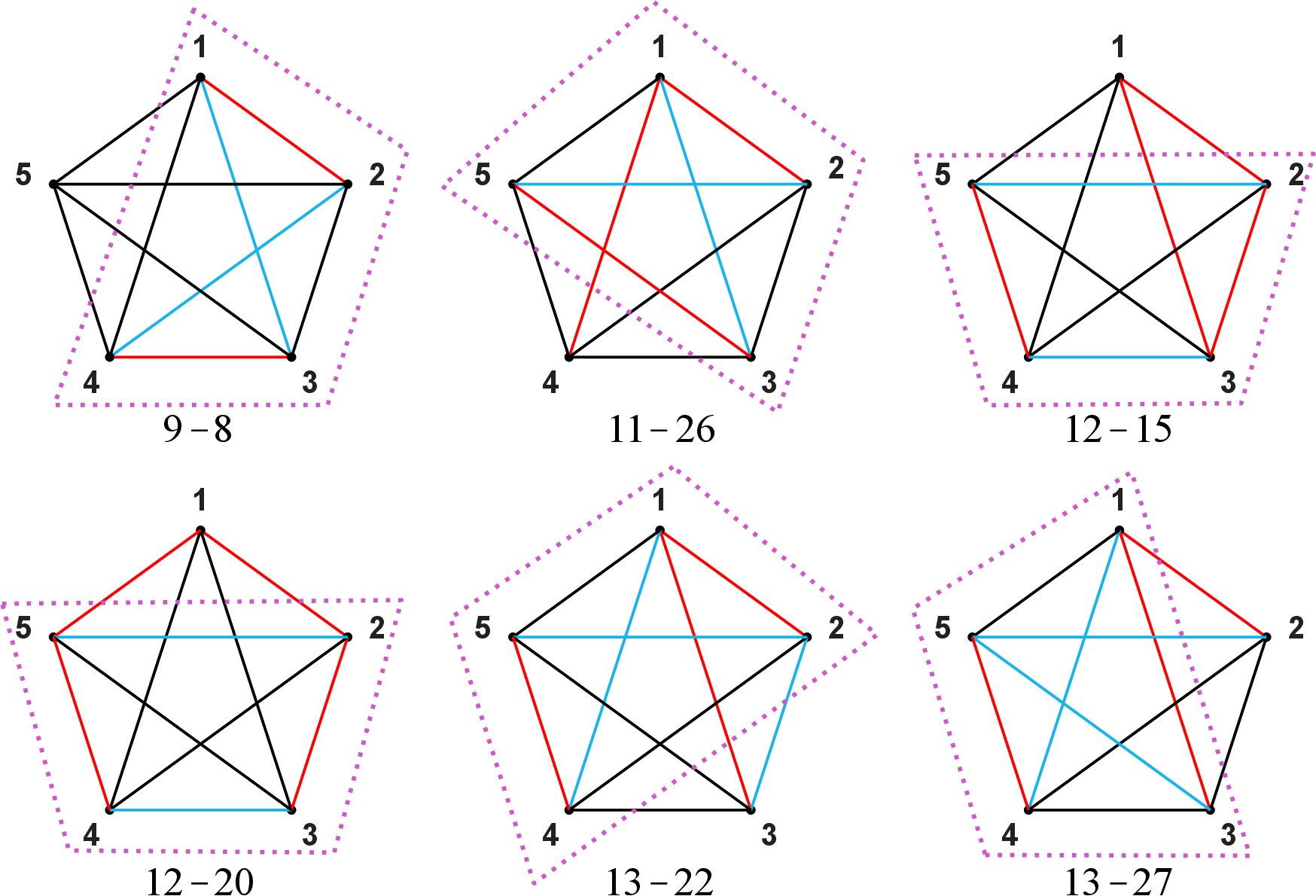}
		\begin{center}
			\caption{Excluded case 2 for Theorem 4}\label{fig15}
			\vspace{-20pt}
		\end{center}
\end{figure} 

   (1) In case (9-8), as shown in Fig. 15, state 5 is orthogonal to all the other states on the first subsystem. We assume that the first subsystem of state 5 is $|\alpha\rangle$, where $|\alpha\rangle$ is normalized. Suppose that the first party, say Alice, performs a measurement with the measurement operators $|\alpha\rangle\langle\alpha|$ and $I-|\alpha\rangle\langle\alpha|$. 

  \textcircled{1} If Alice's measurement outcome corresponds to the operator $|\alpha\rangle\langle\alpha|$, the measured state must be state 5.

  \textcircled{2} If Alice's measurement outcome corresponds to the operator $I-|\alpha\rangle\langle\alpha|$, the measured state must be one of states $\{1,\,2,\,3,\,4\}$. Note that state 1 remains orthogonal to state 4, and state 2 remains orthogonal to state 3 after Alice's measurement. The orthogonality graph corresponding to states $\{1,\,2,\,3,\,4\}$ is the part circled by the red dashed line in the first subfigure of Fig. 15, and is identical to graph (1-3) if the vertex labels are appropriately relabeled. By Lemma~\ref{lemma5}, states $\{1,\,2,\,3,\,4\}$ cannot be perfectly distinguished or can be distinguished with some certain probability by LOCC. 
  
    (2) In case (11-26), state 4 is orthogonal to any one of states $\{2,\, 3,\, 5\}$ on the first subsystem. Suppose that the first subsystem of state 4 is $|\alpha\rangle$, where $|\alpha\rangle$ is normalized. We assume that the first party, say Alice, performs a measurement with the operators $|\alpha\rangle\langle\alpha|$ and $I-|\alpha\rangle\langle\alpha|$. 
  
  \textcircled{1} If Alice's measurement corresponds to $|\alpha\rangle\langle\alpha|$, the measured state must be state 4 or 1. It can be perfectly distinguished by the second party since states 1 and 4 are orthogonal on the second subsystem. 
  
  \textcircled{2} If Alice's measurement corresponds to $I-|\alpha\rangle\langle\alpha|$, the measured state must be one of states $\{1,\,2,\, 3,\, 5\}$. Note that state 1 remains orthogonal to state 5, and state 2 remains orthogonal to state 3 after Alice's measurement. The orthogonality graph corresponding to states $\{1,\,2,\,3,\,5\}$ is the part circled by the red dashed line in the second subfigure of Fig. 15, and  is identical to graph (1-3) if the vertex labels are appropriately relabeled. By Lemma~\ref{lemma5}, states $\{1,\,2,\,3,\,5\}$ cannot be perfectly distinguished or can be distinguished with some certain probability by LOCC. 
    
      (3) In case (12-15), state 1 is orthogonal to state 5 and state 4 on the first subsystem. We assume that the first subsystem of state 1 is $|\alpha\rangle$, 
where $|\alpha\rangle$ is normalized. Suppose that the first party, say Alice, performs a measurement with the measurement operators $|\alpha\rangle \langle\alpha |$ and $I-|\alpha\rangle \langle\alpha |$. 

\textcircled{1} If Alice's measurement outcome corresponds to $|\alpha\rangle\langle\alpha |$, the measured state must be state 1, 2 or state 3. States \{1, 2, 3\} can be perfectly distinguished by LOCC since they are pairwise orthogonal on the second subsystem.

 \textcircled{2} If Alice's measurement outcome corresponds to $I-|\alpha\rangle \langle\alpha |$, the measured state must be one of states \{2, 3, 4, 5\}. Note that state 5 remains orthogonal to state 3, and state 4 remains orthogonal to state 2 after Alice's measurement. The orthogonality graph corresponding to states $\{2,\,3,\,4,\,5\}$ is the part circled by the red dashed line in the third subfigure of Fig. 15, and  is identical to graph (1-3) if the vertex labels are appropriately relabeled. By Lemma~\ref{lemma5}, states $\{2,\,3,\,4,\,5\}$ cannot be perfectly distinguished or can be distinguished with some certain probability by LOCC.

    (4) In case (12-20), state 1 is orthogonal to state 3 and state 4 on the first subsystem. We assume that the first subsystem of state 1 is $|\alpha\rangle$, 
where $|\alpha\rangle$ is normalized. Suppose that the first party, say Alice, performs a measurement with the measurement operators $|\alpha\rangle \langle\alpha |$ and $I-|\alpha\rangle \langle\alpha |$. 

\textcircled{1} If Alice's measurement outcome corresponds to $|\alpha\rangle\langle\alpha |$, the measured state must be state 1, 2 or state 5. By Lemma~\ref{lemma2}, states \{1, 2, 5\} can be perfectly distinguished by LOCC since they are pairwise orthogonal. 

\textcircled{2} If Alice's measurement outcome corresponds to $I-|\alpha\rangle \langle\alpha |$, the measured state must be one of states \{2, 3, 4, 5\}. Note that state 5 remains orthogonal to state 3, and state 4 remains orthogonal to state 2 after Alice's measurement. The orthogonality graph corresponding to states $\{2,\,3,\,4,\,5\}$ is the part circled by the red dashed line in the fourth subfigure of Fig. 15, and  is identical to graph (1-3) if the vertex labels are appropriately relabeled. By Lemma~\ref{lemma5}, states $\{2,\,3,\,4,\,5\}$ cannot be perfectly distinguished or can be distinguished with some certain probability by LOCC. 

    (5) In case (13-22), state 3 is orthogonal to states 4 and 5 on the first subsystem. We assume that the first subsystem of state 3 is $|\alpha\rangle$, where 
 $|\alpha\rangle$ is normalized. Suppose that the first party, say Alice, performs a measurement with the measurement operators $|\alpha\rangle \langle\alpha|$ and $I-|\alpha\rangle \langle\alpha|$. 
    
     \textcircled{1} If Alice's measurement outcome corresponds to the operator $|\alpha\rangle \langle\alpha|$, the measured state must be one of states \{1, 2, 3\}. States \{1, 2, 3\} can be locally distinguished by Lemma~\ref{lemma2} since they are still orthogonal after Alice's measurement. 
	
	\textcircled{2} If Alice's measurement outcome corresponds to the operator $I-|\alpha\rangle \langle\alpha|$, the measured state must be one of states 
    \{1, 2, 4, 5\}. Note that state 2 remains orthogonal to state 4, and state 1 remains orthogonal to state 5 on the second subsystem. The orthogonality graph corresponding to states $\{1,\,2,\,4,\,5\}$ is the part circled by the red dashed line in the fifth subfigure of Fig. 15, and  is identical to graph (1-3) if the vertex labels are appropriately relabeled. By Lemma~\ref{lemma5}, states $\{1,\,2,\,4,\,5\}$ cannot be perfectly distinguished or can be distinguished with some certain probability by LOCC.
     
    (6) For case (13-27), we give a method to construct a set of five tripartite OPSs that cannot be perfectly distinguished by LOCC. Consider the set of tripartite OPSs shown as follows.
        
   \begin{align}
  		&|\phi_{1}\rangle=\frac{1}{\sqrt{6}}|0+1+2\rangle_{1}|0\rangle_{2}|0+1\rangle_{3},\notag\\
  		&|\phi_{2}\rangle=|0\rangle_{1}|1\rangle_{2}|1\rangle_{3},\notag\\
  		&|\phi_{3}\rangle=|1\rangle_{1}|1\rangle_{2}|1\rangle_{3},\notag\\
  		&|\phi_{4}\rangle=\frac{1}{2}|2\rangle_{1}|0+1\rangle_{2}|0-1\rangle_{3},\notag\\
  		&|\phi_{5}\rangle=\frac{1}{\sqrt{6}}(|0\rangle+\omega|1\rangle+\omega^{2}|2\rangle)_{1}|0-1\rangle_{2}|0\rangle_{3},
  \end{align}
  where $\omega=e^{\frac{2\pi\sqrt{-1}}{3}}$.
  
    Now, we prove that the five states as shown in Eq. (13) cannot be perfectly distinguished by LOCC. Without loss of generality, we suppose that the first party, say Alice, performs a measurement with the POVM elements 
    
    $M_{j}^{\dagger}M_{j}=$ 
    $$\begin{pmatrix}
		m_{00}^{(j)} & m_{01}^{(j)} & m_{02}^{(j)} &m_{03}^{(j)} &\cdots &m_{0,n-1}^{(j)}\\
		m_{10}^{(j)} & m_{11}^{(j)} & m_{12}^{(j)} &m_{13}^{(j)} &\cdots &m_{1,n-1}^{(j)}\\
        m_{20}^{(j)} & m_{21}^{(j)} & m_{22}^{(j)} &m_{23}^{(j)} &\cdots &m_{2,n-1}^{(j)}\\
        m_{30}^{(j)} & m_{31}^{(j)} & m_{32}^{(j)} &m_{33}^{(j)} &\cdots &m_{3,n-1}^{(j)}\\
        \vdots & \vdots & \vdots & \vdots &\ddots &\vdots\\
		m_{n-1,0}^{(j)} & m_{n-1,1}^{(j)} & m_{n-1,2}^{(j)} &m_{n-1,3}^{(j)} &\cdots &m_{n-1,n-1}^{(j)} \\ 
	\end{pmatrix}$$\\
    under the basis $\{|0\rangle,\,|1\rangle,\cdots,\,|(n-1)\rangle\}$ for $j=1$, $2$, $\cdots$, $l$, where $n\geq3$, and $n$ denotes the dimension of the space in which the first subsystems of these five states reside. To ensure that the measurement can proceed, any two states that are orthogonal only on Alice's side should remain orthogonal after being measured by Alice. For states $|\phi_{2}\rangle$ and $|\phi_{3}\rangle$, we have $\langle\phi_{2}|(M_{j}^{\dagger}M_{j}\otimes I_{B}\otimes I_{C})|\phi_{3}\rangle=0$ and $\langle\phi_{3}|(M_{j}^{\dagger}M_{j}\otimes I_{B}\otimes I_{C})|\phi_{2}\rangle=0$. Thus, $m_{01}^{(j)}$=$m_{10}^{(j)}$=0. Similarly, we have $m_{02}^{(j)}$=$m_{20}^{(j)}$=0 for states $|\phi_{2}\rangle$ and $|\phi_{4}\rangle$, and $m_{12}^{(j)}$=$m_{21}^{(j)}$=0 for states $|\phi_{3}\rangle$ and $|\phi_{4}\rangle$. For states $|\phi_{1}\rangle$ and $|\phi_{5}\rangle$, we have $\langle\phi_{1}|M_{j}^{\dagger}M_{j}\otimes I_{B}\otimes I_{C}|\phi_{5}\rangle=0$ and  $\langle\phi_{5}|M_{j}^{\dagger}M_{j}\otimes I_{B}\otimes I_{C}|\phi_{1}\rangle=0$. Thus $m_{11}^{(j)}=m_{22}^{(j)}=m_{00}^{(j)}$. Consequently, each POVM element $M_{j}^{\dagger}M_{j}$ must have the form\\
    $$\begin{pmatrix}
		 m_{00}^{(j)} & 0 & 0 &m_{03}^{(j)} &\cdots &m_{0,n-1}^{(j)}\\
		 0 & m_{00}^{(j)} & 0 &m_{13}^{(j)} &\cdots &m_{1,n-1}^{(j)}\\
         0 & 0 & m_{00}^{(j)} &m_{23}^{(j)} &\cdots &m_{2,n-1}^{(j)}\\
         m_{30}^{(j)} & m_{31}^{(j)} & m_{32}^{(j)} &m_{33}^{(j)} &\cdots &m_{3,n-1}^{(j)}\\
         \vdots & \vdots & \vdots & \vdots &\ddots &\vdots\\
		 m_{n-1,0}^{(j)} & m_{n-1,1}^{(j)} & m_{n-1,2}^{(j)} &m_{n-1,3}^{(j)} &\cdots &m_{n-1,n-1}^{(j)} \\ 
	\end{pmatrix}.$$\\
    Thus, when Alice's measurement outcome is $j$, the probability that the outcome $j$ occurs is $p(j)=\langle\phi_{k}|M_{j}^{\dag}M_{j}\otimes I_{B}\otimes I_{C})|\phi_{k}\rangle=m_{00}^{(j)}$ for $k\in\{1,\,2,\,3,\,4,\,5\}$, where $j\in\{1,\,$ $2,\,$ $\cdots,\,$ $l\}$. This means that the probability of measurement outcome $j$ occurring is identical for any state $|\phi_{k}\rangle$. Therefore, Alice cannot get any useful information to distinguish these five OPSs.  In fact, the second party and the third party, will face the similar situation as Alice does. Therefore, these five states cannot be perfectly distinguished by LOCC. This means that there exists a nonlocal set of five tripartite OPSs with orthogonality graph (13-27).
   
   From the above analysis, we conclude that after removing the graphs that are identical to one of the given graphs under the exchange of parties, five tripartite OPSs give rise to a total of 124 distinct orthogonality graphs. All of them are locally distinguishable, except for the seven cases corresponding to the orthogonality graphs excluded by Theorem~\ref{theorem4}.

\section{Conclusions}\label{sec5}
Local distinguishability of orthogonal quantum states, which can reduce quantum communication and lower costs, is an important research topic in the field of quantum information. Up to now, most of the achievements made in this field are about the methods to construct nonlocal sets of OPSs \cite{HQ2023,Xu2023,Zhu2023}. 
In Ref. \cite{XUHWYJ2025}, local distinguishability of four OPSs on multipartite quantum system was given, and a natural question, ``what the local distinguishability of a set of five OPSs is on bipartite and multipartite systems" arises. In fact, compared with the local distinguishability of four OPSs, local distinguishability of five OPSs is more complex. 

In this paper, we discuss local distinguishability of five OPSs on bipartite and tripartite quantum systems, respectively. We prove that any five bipartite OPSs are locally distinguishable except a case with orthogonality graph (5-4). Meanwhile, we conclude that any five tripartite OPSs can be locally distinguished by LOCC except two cases corresponding to orthogonality graphs (9-8), (11-26), (11-30), (12-15), (12-20), (13-22), (13-27). Furthermore, we discuss the local distinguishability of five OPSs of these exceptional scenarios. Our work provides a new understanding of the local distinguishability of five OPSs on bipartite systems and tripartite systems. 

\begin{acknowledgments}
	\vspace{-10pt}
    This work is supported by the Natural Science Foundation of Shandong Province of China (Grant No. ZR2023MF080) and Beijing Natural Science Foundation (Grant No. 4252014).
\end{acknowledgments}

\appendix
\section{The proof of Theorem 1}\label{app1}
\begin{proof}
    We prove that five bipartite OPSs, any two of which are orthogonal only on one subsystem, can be perfectly distinguished by LOCC except the case with orthogonality graph (5-4).
           
    1.  Category (9,\,1)

   The vector (9,\,1) indicates that five bipartite OPSs have nine and one pairwise orthogonal relations on the first and second subsystems, respectively. As shown in orthogonality graph (2-1) of Fig.~\ref{fig2}, the states 1 and 2 are orthogonal on the second subsystem while any other two of states \{1, 2, 3, 4, 5\} are orthogonal on the first subsystem. The second party can identify states \{1, 3, 4, 5\} or states \{2, 3, 4, 5\} since states 1 and 2 are orthogonal on the second subsystem.  Note that any two elements of states \{1, 2, 3, 4\} or \{2, 3, 4, 5\} are orthogonal on the first subsystem. Therefore, both states \{1, 2, 3, 4\} and states \{2, 3, 4, 5\} can be locally distinguished by the first party. 
   
   2. Category (8,\,2) 
   
    The orthogonal graphs of five bipartite OPSs with the vector of the numbers of pairwise orthogonal relations (8,\,2) are shown in Fig.~\ref{fig2}. There exist two different cases, i.e., (2-2) and (2-3),  in Fig.~\ref{fig2}. 

    In cases (2-2) and (2-3), state 5 is orthogonal to all the other states on the first subsystem. Suppose that the first subsystem of state 5 is  $|\alpha\rangle$, where $|\alpha\rangle$ is normalized. The first party, say Alice, can perform a POVM on the first subsystem of the measured state with the operators  $|\alpha\rangle\langle\alpha|$ and  $I-|\alpha\rangle\langle\alpha|$, where $I$ is the identity operator. If Alice's measurement outcome  corresponds to $|\alpha\rangle\langle\alpha|$, the measured state must be state 5. Otherwise, if Alice's measurement outcome corresponds to $I-|\alpha\rangle\langle\alpha|$, the measured state must be one of states $\{1,\,2,\,3,\,4\}$. Obviously, states $\{1,\,2,\,3,\,4\}$ are still pairwise orthogonal after Alice's measurement. By Lemma~\ref{lemma1}, states $\{1,\,2,\,3,\,4\}$ can be perfectly distinguished by LOCC.
   
  3. Category  (7,\,3)

     The orthogonal graphs of five OPSs with the vector of the numbers of pairwise orthogonal relations (7,\,3) have four different cases, i.e., (3-1), (3-2), (3-3) and (3-4), as shown in Fig.~\ref{fig3}.

   (1) In cases (3-1), (3-2) and (3-3), state 5 is orthogonal to all the other states on the first subsystem. Suppose that the first subsystem of state 5 is  $|\alpha\rangle$, where $|\alpha\rangle$ is normalized. The first party, say Alice, can perform a POVM on the first subsystem of the measured state with the operators  $|\alpha\rangle\langle\alpha|$ and $I-|\alpha\rangle\langle\alpha|$, where $I$ is the identity operator. If Alice's measurement outcome corresponds to $|\alpha\rangle\langle\alpha|$, the measured state must be state 5. Otherwise, if Alice's measurement outcome corresponds to $I-|\alpha\rangle\langle\alpha|$, the measured state must be one of states $\{1,\,2,\,3,\,4\}$. Obviously, states $\{1,\,2,\,3,\,4\}$ are still pairwise orthogonal after Alice's measurement. By Lemma~\ref{lemma1}, the measured state can be perfectly identified by LOCC.

   (2) In case (3-4), state 4 is orthogonal to each of states $ \{1,\, 2, \,3\}$ on the first subsystem. We assume that the first subsystems of states 1, 2, 3, 4 and 5 are  $|\alpha\rangle$, $|\beta\rangle$, $|\gamma\rangle$, $|\delta\rangle$ and $|\theta\rangle$, respectively, where $|\alpha\rangle$, $|\beta\rangle$, $|\gamma\rangle$, $|\delta\rangle$ and $|\theta\rangle$ are all normalized. It is easy to see that $\langle\delta|\alpha\rangle=\langle\theta|\alpha\rangle=\langle\gamma|\beta\rangle=\langle\delta|\beta\rangle=\langle\theta|\beta\rangle=\langle\delta|\gamma\rangle=\langle\theta|\gamma\rangle=0$.

   Suppose that the first party, say Alice, performs a POVM on the first subsystem of the measured state with the operators $|\delta\rangle \langle\delta |$ and $I-|\delta\rangle \langle\delta |$.
  
   \textcircled{1} If Alice's measurement outcome corresponds to the operator $|\delta\rangle \langle\delta |$, the measured state must be state 4 or state 5, and can be exactly identified by the second party since state 4 and state 5 are orthogonal on the second subsystem. 
   
   \textcircled{2} If Alice's measurement outcome corresponds to the operator $I-|\delta\rangle \langle\delta |$, the measured state must be state 1, state 2, state 3 or state 5. Note that we need to verify that states \{1, 2, 3, 5\} are pairwise orthogonal after Alice's measurement. When Alice's measurement outcome corresponds to $I-|\delta\rangle \langle\delta |$, the first subsystems of states \{1, 2, 3, 5\} become the forms as follows, i.e., 
   $(I-|\delta\rangle \langle\delta |)|\alpha\rangle=|\alpha\rangle$,  $(I-|\delta\rangle \langle\delta |)|\beta\rangle=|\beta\rangle$, $(I-|\delta\rangle \langle\delta |)|\gamma\rangle=|\gamma\rangle$ and $(I-|\delta\rangle \langle\delta |)|\theta\rangle=|\theta\rangle-\langle\delta|\theta\rangle|\delta\rangle$.  
   Thus, we have 
   \begin{align}\notag \\
    ((I-|\delta\rangle \langle\delta |)|\theta\rangle, |\alpha\rangle)=\langle\theta|\alpha\rangle-\langle\delta|\theta\rangle^{*}\langle\delta|\alpha\rangle=0, \notag \\
    ((I-|\delta\rangle \langle\delta |)|\theta\rangle, |\beta\rangle)=\langle\theta|\beta\rangle-\langle\delta|\theta\rangle^{*}\langle\delta|\beta\rangle=0, \notag\\
    ((I-|\delta\rangle \langle\delta |)|\theta\rangle, |\gamma\rangle)=\langle\theta|\gamma\rangle-\langle\delta|\theta\rangle^{*}\langle\delta|\gamma\rangle=0, \notag 
   \end{align}
   where $\langle\delta|\theta\rangle^{*}$ denotes the complex conjugate of $\langle\delta|\theta\rangle$. 
   This means that states \{1, 2, 3, 5\} remain pairwise orthogonal when Alice's measurement outcome corresponds to the operator $I-|\delta\rangle \langle\delta |$.  
   By Lemma~\ref{lemma1}, states \{1, 2, 3, 5\} can be perfectly distinguished by LOCC.
   
       4. Category (6,\,4) 
  
    As shown in Fig.~\ref{fig4}, there exist six different cases, i.e., (4-1), (4-2), (4-3), (4-4), (4-5) and (4-6), for this category.

     (1) In cases (4-1) and (4-5), state 5 is orthogonal to all the other states on the first subsystem. So the first party can distinguish state 5 from all the others. The result is that four states are left to be distinguished, which can be locally distinguished by Lemma~\ref{lemma1}.
     
     (2) In case (4-2), state 1 is orthogonal to states 4 and 5 on the first subsystem. We assume that the first subsystems of states 1, 2, 3, 4 and 5 are  $|\alpha\rangle$, $|\beta\rangle$, $|\gamma\rangle$, $|\delta\rangle$ and $|\theta\rangle$, respectively, where $|\alpha\rangle$, $|\beta\rangle$, $|\gamma\rangle$, $|\delta\rangle$ and $|\theta\rangle$ are all normalized. It is easy to see that $\langle\alpha|\delta\rangle=\langle\alpha|\theta\rangle=\langle\beta|\delta\rangle=\langle\beta|\theta\rangle=\langle\gamma|\delta\rangle=\langle\gamma|\theta\rangle=0$.
     Suppose that the first party, say Alice, performs a POVM on the first subsystem of the measured state with the measurement operators  $|\alpha\rangle\langle\alpha |$ and $I-|\alpha\rangle\langle\alpha|$. 
     
      \textcircled{1} If the measurement outcome corresponds to the operator $|\alpha\rangle\langle\alpha |$, the measured state must be state 1, state 2 or state 3. States \{1, 2, 3\} can be exactly identified by the second party since they are orthogonal on the second subsystem. 
      
     \textcircled{2} If the measurement outcome corresponds to the operator $I-|\alpha\rangle\langle\alpha|$, the measured state must be one of states \{2, 3, 4, 5\}. Note that we need to verify that any two of states \{2, 3, 4, 5\} are still orthogonal after Alice's measurement. After Alice's measurement, the first subsystems of states \{2, 3, 4, 5\} become $(I-|\alpha\rangle\langle\alpha|)|\beta\rangle$, $(I-|\alpha\rangle\langle\alpha|)|\gamma\rangle$, $(I-|\alpha\rangle\langle\alpha|)|\delta\rangle$ and $(I-|\alpha\rangle\langle\alpha|)|\theta\rangle$, respectively. Thus, we have the inner products as follows. 
     \begin{align}\notag \\
     &((I-|\alpha\rangle\langle\alpha|)|\beta\rangle, (I-|\alpha\rangle\langle\alpha|)|\delta\rangle)\notag \\
     =&(|\beta\rangle-\langle\alpha|\beta\rangle|\alpha\rangle, |\delta\rangle)\notag \\
     =&\langle\beta|\delta\rangle-\langle\alpha|\beta\rangle^{*}\langle\alpha|\delta\rangle  \notag\\
     =&0;\notag \\
     &((I-|\alpha\rangle\langle\alpha|)|\gamma\rangle, (I-|\alpha\rangle\langle\alpha|)|\delta\rangle)\notag\\
     =&(|\gamma\rangle-\langle\alpha|\gamma\rangle|\alpha\rangle, |\delta\rangle)\notag\\
     =&\langle\gamma|\delta\rangle-\langle\alpha|\gamma\rangle^{*}\langle\alpha|\delta\rangle\notag\\
     =&0; \notag \\
            &((I-|\alpha\rangle\langle\alpha|)|\beta\rangle, (I-|\alpha\rangle\langle\alpha|)|\theta\rangle)\notag \\
      =&(|\beta\rangle-\langle\alpha|\beta\rangle|\alpha\rangle, |\theta\rangle)\notag \\
      =&\langle\beta|\theta\rangle-\langle\alpha|\beta\rangle^{*}\langle\alpha|\theta\rangle\notag \\
      =&0; \notag 
      \end{align}
     \begin{align}\notag \\   
       &((I-|\alpha\rangle\langle\alpha|)|\gamma\rangle, (I-|\alpha\rangle\langle\alpha|)|\theta\rangle)\notag \\ 
      =&(|\gamma\rangle-\langle\alpha|\gamma\rangle|\alpha\rangle, |\theta\rangle)\notag \\ 
      =&\langle\gamma|\theta\rangle-\langle\alpha|\gamma\rangle^{*}\langle\alpha|\theta\rangle \notag \\ 
      =&0. \notag 
      \end{align} 
      This means that the orthogonality relations of states 2, 3, 4 and 5 on the first subsystem remain unchanged when Alice's measurement outcome corresponds to the operator $I-|\alpha\rangle\langle\alpha|$. On the other hand, states 2 and 3 still remain orthogonal on the second subsystem. So do states 4 and 5. Therefore, states \{2, 3, 4, 5\} are pairwise orthogonal when Alice's measurement outcome corresponds to $I-|\alpha\rangle\langle\alpha|$. By Lemma~\ref{lemma1}, states \{2, 3, 4, 5\} can be perfectly distinguished by LOCC.
                                   
     (3) In case (4-3), state 1 is orthogonal to all the other states on the second subsystem. So the second party can distinguish state 1 from all the others. The result is that four states are left to be distinguished, which can be locally distinguished by Lemma~\ref{lemma1}.
      
     (4) In case (4-4), state 1 is orthogonal to state 2, state 3 and state 4 on the second subsystem. We assume that the second subsystem of state 1 is  $|\alpha\rangle$, where $|\alpha\rangle$ is normalized. Suppose that the second party, say Bob, performs a measurement with the measurement operators  $|\alpha\rangle \langle\alpha |$ and $I-|\alpha\rangle \langle\alpha |$. 
     
      \textcircled{1} If the measurement outcome corresponds to the operator $|\alpha\rangle \langle\alpha |$, the measured state must be state 1 or 5. State 1 and 5 can be exactly identified by the first party since they are orthogonal on the first subsystem. 
      
     \textcircled{2} If the measurement outcome corresponds to the operator $I-|\alpha\rangle \langle\alpha |$, the measured state must be one of states $\{2,\,3,\,4,\,5\}$. Note that we need to verify whether state 2 remains orthogonal to state 5 after Bob's measurement. We assume that the second subsystems of state 2 and state 5 are $|\beta \rangle$ and $|\gamma\rangle$, respectively, where $|\beta \rangle$ and $|\gamma\rangle$ are all normalized. By orthogonal graph (4-4), we have $\langle\beta|\gamma\rangle=0$ and $\langle\beta|\alpha\rangle=0$. 
     The postmeasurement states  $(I-|\alpha\rangle \langle\alpha |)|\beta \rangle$ and $( I-|\alpha\rangle \langle\alpha |)|\gamma \rangle$ are orthogonal on the second subsystem since $\langle\beta |(I-|\alpha\rangle \langle\alpha |)^{\dagger }(I-|\alpha\rangle \langle\alpha|)|\gamma \rangle=\langle\beta|\gamma\rangle-\langle\beta|\alpha\rangle\langle\alpha|\delta\rangle=0$. This indicates that state 2 and state 5 remain orthogonal after Bob's measurement. States \{2, 3, 4, 5\} can be locally distinguished by Lemma~\ref{lemma1} since they are mutually orthogonal after Bob's measurement.
     
     (5) In case (4-6), state 4 is orthogonal to states $\{ 1, \,3 ,\, 5\}$ and  state 5 is orthogonal to $\{1, \,2,\,4\}$ on the first subsystem. We assume that the first subsystems of states 4 and 5 are $|\alpha\rangle$ and $|\alpha^{\perp }\rangle$, respectively, where $|\alpha\rangle$ and $|\alpha^{\perp }\rangle$ are  normalized, and $\langle\alpha|\alpha^{\bot}\rangle=0$. Suppose that the first party, say Alice, performs a measurement with the operators  $|\alpha\rangle \langle\alpha |$, $|\alpha^{\perp} \rangle \langle\alpha^{\perp } |$  and $I-|\alpha\rangle \langle\alpha |- |\alpha^{\perp} \rangle \langle\alpha^{\perp } |$.
     
     \textcircled{1} If the measurement outcome corresponds to the operator $|\alpha\rangle \langle\alpha |$, the measured state must be state 2 or 4. It can be exactly identified by the second party since states 2 and 4 are orthogonal on the second subsystem. 
     
     \textcircled{2} If the measurement outcome corresponds to the operator $|\alpha^{\perp} \rangle \langle\alpha^{\perp } |$, the measured state must be state 3 or state 5. It can be exactly identified by the second party since state 3 and state 5 are orthogonal on the second subsystem. 
     
    \textcircled{3} If the measurement outcome corresponds to the operator $I-|\alpha\rangle \langle\alpha |- |\alpha^{\perp} \rangle \langle\alpha^{\perp } |$, the measured state must be state 1, 2 or 3. Note that we need to verify whether state 2 remains orthogonal to state 3 after Alice's measurement. We assume that the first subsystems of states 2 and 3 are $|\beta \rangle$ and $|\gamma  \rangle$, respectively, where $\langle\beta|\gamma\rangle=0$, $\langle\alpha|\gamma\rangle=0$ and $\langle\beta|\alpha^{\bot}\rangle=0$. The postmeasurement states  $(I-|\alpha\rangle \langle\alpha |- |\alpha^{\perp} \rangle \langle\alpha^{\perp }|)|\beta \rangle$ and $(I-|\alpha\rangle \langle\alpha |- |\alpha^{\perp} \rangle \langle\alpha^{\perp }|)|\gamma \rangle$ are orthogonal on the first subsystem since $\langle \beta |(I-|\alpha\rangle \langle\alpha |- |\alpha^{\perp} \rangle \langle\alpha^{\perp }|)^{\dagger } (I-|\alpha\rangle\langle\alpha|- |\alpha^{\perp}\rangle\langle\alpha^{\perp }|)|\gamma\rangle=0$. This indicates that state 2 and state 3 remain orthogonal after Alice's measurement. States \{1, 2, 3\} can be locally distinguished by Lemma~\ref{lemma1} since they are still mutually orthogonal after Alice's measurement.

    5. Category (5,\,5)
    
    In this category, case (5-1), (5-2) and (5-3) can be perfectly distinguished by LOCC.
    
    (1) In case (5-1), state 5 is orthogonal to all the other states on the first subsystem. So the first party can distinguish this state from all the others. The result is that four states $\{1,\,2,\,3,\,4\}$ are left to be distinguished, which can be locally distinguished by Lemma~\ref{lemma1}.
    
    (2) In case (5-2), state 4 is orthogonal to states \{ 2, \,3, \, 5\} and state 5 is orthogonal to states \{1,\,3,\, 4\}  on the first subsystem. We assume that the first subsystems of state 4 and state 5 are $|\alpha\rangle$ and $|\alpha^{\perp }\rangle$, respectively, where $|\alpha\rangle$ and $|\alpha^{\perp }\rangle$ are all normalized, and $\langle\alpha|\alpha^{\bot}\rangle=0$. Suppose that the first party, say Alice, performs a measurement with the operators  $|\alpha\rangle \langle\alpha |$, $|\alpha^{\perp} \rangle \langle\alpha^{\perp } |$ and $I-|\alpha\rangle \langle\alpha |- |\alpha^{\perp} \rangle \langle\alpha^{\perp } |$. 
    
   \textcircled{1} If the measurement outcome corresponds to the operator $|\alpha\rangle \langle\alpha |$, the measured state must be state 1 or state 4. It can be exactly identified by the second party since state 1 and state 4 are orthogonal on the second subsystem.
    
    \textcircled{2} If the measurement outcome corresponds to the operator $|\alpha^{\perp} \rangle \langle\alpha^{\perp }|$, the measured state must be state 2 or state 5. It can be exactly identified by the second party since state 2 and state 5 are orthogonal on the second subsystem.
    
   \textcircled{3} If the measurement outcome corresponds to the operator $I-|\alpha\rangle \langle\alpha |- |\alpha^{\perp} \rangle \langle\alpha^{\perp } |$, the measured state must be state 1, 2 or 3. States \{1, 2, 3\} can be locally distinguished by Lemma~\ref{lemma1} since they are mutually orthogonal on the second subsystem.
    
    (3) In case (5-3), state 5 is orthogonal to state 1, state 2 and state 3 on the first subsystem. We assume that the first subsystem of state 5 is  $|\alpha\rangle$, where $|\alpha\rangle$ is normalized. Suppose that the first party, say Alice, performs a measurement with the operators  $|\alpha\rangle \langle\alpha |$ and $I-|\alpha\rangle \langle\alpha |$. 
    
    \textcircled{1} If the measurement outcome corresponds to the operator $|\alpha\rangle \langle\alpha |$, the measured state must be state 4 or state 5. It can be exactly identified by the second party since state 4 and state 5 are orthogonal on the second subsystem.
     
    \textcircled{2} If the measurement outcome corresponds to the operator $I-|\alpha\rangle \langle\alpha |$, the measured state must be one of states \{ 1,\, 2,\, 3,\, 4\}. Note that we need to verify whether state 4 remains orthogonal to state 2 and 3 after Alice's measurement. We assume that the first subsystems of state 2, 3 and 4 are $|\beta \rangle$, $|\gamma \rangle$ and $|\delta \rangle$, respectively, where $|\beta \rangle$, $|\gamma \rangle$ and $|\delta \rangle $ are all normalized. By orthogonality graph (5-3), we have $\langle\beta|\delta\rangle=0$, $\langle\beta|\alpha\rangle=0$, $\langle\gamma|\delta\rangle=0$ and $\langle\gamma|\alpha\rangle=0.$ The postmeasurement states  $(I-|\alpha\rangle \langle\alpha |)|\beta \rangle$ and $(I-|\alpha\rangle \langle\alpha|)|\delta \rangle$ are orthogonal on the first subsystem, since $\langle \beta |(I-|\alpha\rangle \langle\alpha |)^{\dagger } (I-|\alpha\rangle \langle\alpha |)|\delta  \rangle=0$. This indicates that state 2 and state 4 remain orthogonal after Alice's measurement. Similarly, state 3 and state 4 remain orthogonal after Alice's measurement. States \{1, 2, 3, 4\} can be locally distinguished by Lemma~\ref{lemma1} since they are mutually orthogonal after Alice's measurement.  
    
   Therefore, five bipartite OPSs, any two of which are orthogonal only on one subsystem, can be perfectly distinguished by LOCC except the case with orthogonality graph (5-4). This completes the proof. 
    \end{proof}  
         
    \section{The proof of Theorem 4}\label{app2}
    \begin{proof}  
    Now, we prove that five tripartite OPSs, any two of which are orthogonal only on one subsystem, can be perfectly distinguished by LOCC, except five tripartite OPSs with graph (11-30), (9-8), (11-26), (12-15), (12-20), (13-22) or (13-27).
      
    1. Category (8,\,1,\,1)
    
	The vector (8,\,1,\,1) indicates that five OPSs have eight, one and one pairwise orthogonal relations on the first, second and third subsystems, respectively. For this category, there exist two different orthogonality graphs, i.e., (6-1) and (6-2), as shown in Fig.~\ref{fig6}.
	In cases (6-1) and (6-2), state 5 is orthogonal to all the other states on the first subsystem. So the first party can distinguish this state from all the 
    others. The result is that four states $\{1,\,2,\,3,\,4\}$ are left to be distinguished. States $\{1,\,2,\,3,\,4\}$ can be locally distinguished by Lemma~\ref{lemma3} since their vector of the numbers of pairwise orthogonal relations is $(4,\,1,\,1)$. Therefore, five tripartite OPSs with the vector $(8,\,1,\,1)$ can be locally distinguishable by LOCC.
    
    2. Category (7,\,2,\,1)
    
   As shown in Fig.~\ref{fig7}, there exist six different orthogonality graphs, i.e., (7-1), (7-2), (7-3), (7-4), (7-5) and (7-6), for five tripartite OPSs with the vector of 
the numbers of pairwise orthogonal relations (7,\,2,\,1). 
	
	(1) In cases (7-1), (7-2), (7-3) and (7-4), state 5 is orthogonal to all the other states on the first subsystem. So the first party can identify this state from
    all the others. The result is that four states $\{1,\,2,\,3,\,4\}$ are left to be distinguished. States $\{1,\,2,\,3,\,4\}$ can be locally distinguished by Lemma~\ref{lemma3} since their vector of the numbers of pairwise orthogonal relations is (3,\,2,\,1).
	
	(2) In case (7-5), state 4 is orthogonal to state 5 on the third subsystem, which can be used to identify states  \{1, 2, 3, 4\} and states  \{1, 2, 3, 5\} by 
    the third party. It is obvious that the vector of the numbers of pairwise orthogonal relations of states \{1, 2, 3, 4\} is (4,\,2,\,0). This is means that there is no orthogonal relationship among states \{1, 2, 3, 4\} on their third subsystem. To distinguish  states \{1, 2, 3, 4\}, it is only necessary to consider their first and second subsystems. Thus, states \{1, 2, 3, 4\} can be seen as a set of bipartite OPSs that are pairwise orthogonal when we omit its third subsystem. By Lemma~\ref{lemma1}, states \{1, 2, 3, 4\} can be perfectly distinguished by LOCC. On the other hand, the same situation applies to states \{1, 2, 3, 5\}, since the vector of the number of pairwise orthogonal relations of these states is (4, 2, 0) as well. Therefore, five tripartite OPSs can be perfectly distinguished by LOCC for case (7-5).

   (3) In case (7-6), state 1 is orthogonal to state 3 on the third subsystem, which can be used to identify states \{2, 3, 4, 5\} and states \{1, 2, 4, 5\} by the third party. For states \{1, 2, 4, 5\}, the vector of the numbers of pairwise orthogonal relations is (4, 2, 0). For states \{2, 3, 4, 5\}, the vector of the numbers of pairwise orthogonal relations is (5, 1, 0). This means that both the set of states \{2, 3, 4, 5\} and the set of states \{1, 2, 4, 5\} can be seen a set of bipartite OPSs that are pairwise orthogonal when omitting the third subsystem.  By Lemma~\ref{lemma1}, both states \{2, 3, 4, 5\} and states \{1, 2, 4, 5\} can be perfectly distinguished by LOCC. Therefore, five tripartite OPSs in case (7-6) can be perfectly distinguished by LOCC.
   
   3. Category (6,\,3,\,1)
   
	As shown in Fig.~\ref{fig8}, there exist 12 different orthogonality graphs, i.e., (8-1), (8-2),\, \dots,\, (8-12), for five tripartite OPSs with the vector of the numbers of pairwise orthogonal relations (6,\,3,\,1). 
	
 (1) In cases (8-1), (8-2), (8-3) and (8-10), state 5 is orthogonal to all the other states on the first subsystem. So the first party can identify this state from all the others. The result is that four states \{1, 2, 3, 4\} are left to be distinguished. For each of cases (8-1), (8-2), (8-3) and (8-10), the vector of the numbers of pairwise orthogonal relations of states \{1, 2, 3, 4\} is (2, 3, 1). As is known, the local distinguishability of four OPSs with the vector (2,\,3,\,1) is identical to that of four OPSs with the vector of the numbers of pairwise orthogonal relations (3, 2, 1). By Lemma~\ref{lemma3}, states \{1, 2, 3, 4\} can be perfectly distinguished. Therefore, five states in case (8-1), (8-2), (8-3) or (8-10) can be perfectly distinguished by LOCC. 
 
 (2) In case (8-4), state 4 is orthogonal to state 5 on the third subsystem, which can be used to distinguish states  \{1, 2, 3, 4\} and states \{1, 2, 3, 5\} by the third party. For states \{1, 2, 3, 4\}, the vector of the numbers of pairwise orthogonal relations is (3,\,3,\,0). The set of states \{1, 2, 3, 4\} can be seen as a set of bipartite OPSs when the third subsystem is omitted. By Lemma~\ref{lemma1}, states \{1, 2, 3, 4\} can be perfectly distinguished by LOCC. So can states \{1, 2, 3, 5\} since their vector of the numbers of pairwise orthogonal relations is (3,\,3,\,0). Therefore, five tripartite OPSs with case (8-4) can be perfectly distinguished by LOCC. 
 
 (3) In case (8-5), state 2 is orthogonal to state 3 on the third subsystem, which can be used to distinguish states \{1, 3, 4, 5\} and states \{1, 2, 4, 5\} by the third party. For states \{1, 3, 4, 5\} and states  \{1, 2, 4, 5\}, the vectors of the numbers of pairwise orthogonal relations are all (4,\,2,\,0). Thus, both the set of states \{1, 3, 4, 5\} and the set of states  \{1, 2, 4, 5\} can be seen as a set of bipartite OPSs when the third subsystem is omitted. By Lemma~\ref{lemma1}, both states \{1, 3, 4, 5\} and states \{1, 2, 4, 5\} can be perfectly distinguished by LOCC.    
 
 (4) In case (8-6), state 1 is orthogonal to state 5 on the third subsystem, which can be used to distinguish states \{2, 3, 4, 5\} and states \{1, 2, 3, 4\} by the third party. For states \{2, 3, 4, 5\}, any two of them are orthogonal on the first subsystem. Thus, states \{2, 3, 4, 5\} can be perfectly distinguished by LOCC. On the other hand, for states \{1, 2, 3, 4\}, the vector of the numbers of pairwise orthogonal relations is (3,\,3,\,0). Thus, the set of states \{1, 2, 3, 4\} can be seen as a set of bipartite OPSs when the third subsystem is omitted. By Lemma~\ref{lemma1}, states \{1, 2, 3, 4\} can be perfectly distinguished by LOCC.

 (5) In case (8-7), state 2 is orthogonal to state 5 on the third subsystem, which can be used to identify states  \{1, 3, 4, 5\} and states \{1, 2, 3, 4\} by the third party.

 For states \{1, 3, 4, 5\}, the vector of the numbers of pairwise orthogonal relations is (4,\,2,\,0). This means that these four states can be seen as a set of bipartite OPSs when the third subsystem is omitted. By Lemma~\ref{lemma1}, states \{1, 3, 4, 5\} can be perfectly distinguished by LOCC. For states \{1, 2, 3, 4\}, the vector of the numbers of pairwise orthogonal relations is (3, 3, 0). This means that these four states can be seen as a set of bipartite OPSs when the third subsystem is omitted. By Lemma~\ref{lemma1}, states \{1, 2, 3, 4\} can be perfectly distinguished by LOCC. 
 
 (6) In case (8-8), state 1 is orthogonal to state 4 on the third subsystem, which can be used to identify states \{2, 3, 4, 5\} and states \{1, 2, 3, 5\} by the third party. For the set of states \{2, 3, 4, 5\} and the set of states \{1, 2, 3, 5\}, the vectors of the numbers of pairwise orthogonal relations are (5, 1, 0) and (3, 3, 0), respectively. Both the set of states \{2, 3, 4, 5\} and the set of states \{1, 2, 3, 5\} can be seen as a set of bipartite OPSs when the third subsystem is omitted. By Lemma~\ref{lemma1}, both states \{2, 3, 4, 5\} and states \{1, 2, 3, 5\} can be perfectly distinguished by LOCC. 
  
 (7) In case (8-9), state 1 is orthogonal to state 2 on the third subsystem, which can be used to identify states \{2, 3, 4, 5\} and states \{1, 3, 4, 5\} by the third party. The vectors of the numbers of pairwise orthogonal relations of states \{2, 3, 4, 5\} and states \{1, 3, 4, 5\} are (5, 1, 0) and (4, 2, 0), respectively. Both states \{2, 3, 4, 5\} and states \{1, 3, 4, 5\} can be seen as a set of bipartite OPSs when the third subsystem is omitted. By Lemma~\ref{lemma1}, both states \{2, 3, 4, 5\} and states \{1, 3, 4, 5\} can be locally distinguished.
 
 (8) In case (8-11), state 3 is orthogonal to state 5 on the third subsystem, which can be used to identify states \{1, 2, 4, 5\} and states \{1, 2, 3, 4\} by the third party. The vectors of the numbers of pairwise orthogonal relations of states \{1, 2, 4, 5\} and states \{1, 2, 3, 4\} are (4, 2, 0) and (3, 3, 0), respectively. Both states \{1, 2, 4, 5\} and states \{1, 2, 3, 4\} can be seen as a set of bipartite OPSs when the third subsystem is omitted. By Lemma~\ref{lemma1}, both states \{1, 2, 4, 5\} and states \{1, 2, 3, 4\} can be perfectly distinguished by LOCC.   
  
 (9) In case (8-12), state 1 is orthogonal to state 3 on the third subsystem, which can be used to identify states \{2, 3, 4, 5\} and states \{1, 2, 4, 5\} by the third party. The vectors of the numbers of pairwise orthogonal relations of states \{2, 3, 4, 5\} and states \{1, 2, 4, 5\} are all (4, 2, 0). Both states \{2, 3, 4, 5\} and states \{1, 2, 4, 5\} can be seen as a set of bipartite OPSs when the third subsystem is omitted. By Lemma~\ref{lemma1}, both states \{2, 3, 4, 5\} and states \{1, 2, 4, 5\} can be locally distinguished.
	
 In summary, five tripartite OPSs with the vector $(6, 3, 1)$ are locally distinguishable by LOCC.
 
 4. Category (6,\,2,\,2)

     As shown in Fig.~\ref{fig9}, there exist 11 different orthogonality graphs, i.e., (9-1), (9-2), \dots, (9-11), for five tripartite OPSs with the vector (6, 2, 2).  
	
    (1) In cases (9-1) and (9-2), state 5 is orthogonal to all the other states on the first subsystem. So the first party can identify this state from all the others. The result is that four states \{1, 2, 3, 4\} are left to be distinguished.
     
    For case (9-1), the vector of the numbers of pairwise orthogonal relations of states \{1, 2, 3, 4\} is $(2,\, 2,\, 2)$. The orthogonality graph of states \{1, 2, 3, 4\} is identical to that of graph (1-1) after swapping the colors of the edges corresponding to blue and black in the orthogonality graph of states \{1, 2, 3, 4\} and ignoring the labels of these four states. This means states \{1, 2, 3, 4\} in case (9-1) have the same local distinguishability. By Lemma~\ref{lemma4}, states \{1, 2, 3, 4\} can be perfectly distinguished by LOCC. 
    
    For case (9-2), the vector of the numbers of pairwise orthogonal relations of states \{1, 2, 3, 4\} is $(2,\, 2,\, 2)$.  The orthogonality graph of states \{1, 2, 3, 4\} is identical to that of graph (1-2) when ignoring the labels of these four states. By Lemma~\ref{lemma4}, states \{1, 2, 3, 4\} in case (9-2) can be perfectly distinguished by LOCC.
    
    (2) In cases (9-3), (9-4), (9-5), (9-7) and (9-9), state 1 is orthogonal to state 2 and state 3 on the second subsystem, which can be used to distinguish states \{1, 4, 5\} and \{2, 3, 4, 5\} by the second party. 
    
    By Lemma~\ref{lemma2}, states \{1, 4, 5\} can be perfectly distinguished by LOCC since these three states are pairwise orthogonal. 
    
    For States \{2, 3, 4, 5\}, we will discuss case by case, respectively. \textcircled{1} In case (9-3), the vector of the numbers of pairwise orthogonal relations of states \{2, 3, 4, 5\} is (4,\,0,\,2). This means that states \{2, 3, 4, 5\} can be seen a set of bipartite OPSs when we do not consider the second subsystems of these OPSs. By Lemma~\ref{lemma1}, states \{2, 3, 4, 5\} can be perfectly distinguished by LOCC. \textcircled{2} In case (9-4), the vector of the numbers of pairwise orthogonal relations of states \{2, 3, 4, 5\} is (6,\,0,\,0). Thus states \{2, 3, 4, 5\} can be perfectly distinguished since they are pairwise orthogonal on the first subsystem. \textcircled{3} In case (9-5), the vector of the numbers of pairwise orthogonal relations of states \{2, 3, 4, 5\} is (5,\,1,\,0). This means that states \{2, 3, 4, 5\} can be seen a set of bipartite OPSs when we do not consider the third subsystems of these OPSs. By Lemma~\ref{lemma1}, states \{2, 3, 4, 5\} can be perfectly distinguished by LOCC. \textcircled{4} In cases (9-7) and (9-9), the vector of the numbers of pairwise orthogonal relations of states \{2, 3, 4, 5\} is (4,\,2,\,0). This means that states \{2, 3, 4, 5\} can be seen a set of bipartite OPSs when we do not consider the third subsystems of these OPSs. By Lemma~\ref{lemma1}, states \{2, 3, 4, 5\} can be perfectly distinguished by LOCC.

    (3) In case (9-6), state 2 is orthogonal to state 1 and state 5 on the second subsystem, which can be used to identify states \{1, 3, 4, 5\} and states \{2, 3, 4\} by the second party. For states \{1, 3, 4, 5\}, the vector of the numbers of pairwise orthogonal relations is (4,\,0,\,2). This means that states \{1, 3, 4, 5\} can be seen a set of bipartite OPSs when we do not consider the second subsystems of these OPSs. Thus, states \{1, 3, 4, 5\} can be perfectly distinguished by LOCC according to Lemma~\ref{lemma1}. On the other hand, states \{2, 3, 4\} can be perfectly distinguished by the first party sine these three OPSs are pairwise orthogonal on the first subsystem. 

    (4) In case (9-10), state 2 is orthogonal to state 1 and state 4 on the second subsystem, which can be used to identify states \{1, 3, 4, 5\} and states \{2, 3, 5\} by the second party. 
    
    For states \{1, 3, 4, 5\}, the vector of the numbers of pairwise orthogonal relations is (4,\,0,\,2). This means that states \{1, 3, 4, 5\} can be seen a set of bipartite OPSs when we do not consider the second subsystems of these OPSs. By Lemma~\ref{lemma1}, states \{1, 3, 4, 5\} can be perfectly distinguished by LOCC. On the other hand, states \{2, 3, 5\} can be locally distinguished by Lemma~\ref{lemma2}.

    (5) In case (9-11), state 4 is orthogonal to states \{ 1,\, 3,\, 5\} and state 5 is orthogonal to states \{1, 2, 4\} on the first subsystem. We assume that the first subsystems of state 4 and state 5 are $|\alpha\rangle$ and $|\alpha^{\perp }\rangle$, respectively, where $|\alpha\rangle$ and $|\alpha^{\perp }\rangle$ are all normalized. Suppose that the first party, say Alice, performs a measurement with the measurement operators  $|\alpha\rangle \langle\alpha |$, $|\alpha^{\perp} \rangle \langle\alpha^{\perp } |$ and $I-|\alpha\rangle \langle\alpha |- |\alpha^{\perp} \rangle \langle\alpha^{\perp }|$. 

    \textcircled{1} If Alice's measurement outcome corresponds to the operator $|\alpha\rangle \langle\alpha |$, the measured state must be state 2 or state 4. It can be exactly identified by the third party since state 2 and state 4 are orthogonal on the third subsystem. 

    \textcircled{2} If Alice's measurement outcome corresponds to the operator $|\alpha^{\perp} \rangle \langle\alpha^{\perp } |$, the measured state must be state 3 or state 5. It can be exactly identified by the second party since state 3 and state 5 are orthogonal on the second subsystem. 

    \textcircled{3} If Alice's measurement outcome corresponds to the operator $I-|\alpha\rangle \langle\alpha |- |\alpha^{\perp} \rangle \langle\alpha^{\perp }|$, the measured state must be state 1, state 2 or state 3. Note that we need to verify whether state 2 remains orthogonal to state 3 after Alice's measurement. We assume that the first subsystems of state 2 and state 3 are $|\beta \rangle$ and $|\gamma \rangle$, respectively, where  $|\beta \rangle$ and $|\gamma \rangle$ are all normalized. By graph (9-11), we have $\langle\alpha| \alpha^{\perp }\rangle=0$, $\langle\beta| \gamma\rangle=0$,  $\langle\beta|  \alpha^{\perp }\rangle=0$ and  $\langle\alpha|  \gamma\rangle=0$. The post-measurement states of the first subsystems of state 2 and state 3 are 
    $(I-|\alpha\rangle \langle\alpha |- |\alpha^{\perp} \rangle \langle\alpha^{\perp }|)|\beta\rangle$=$|\beta\rangle-\langle\alpha|\beta\rangle|\alpha\rangle$ and $(I-|\alpha\rangle \langle\alpha |- |\alpha^{\perp} \rangle \langle\alpha^{\perp }|)|\gamma\rangle$=$|\gamma\rangle-\langle\alpha^{\bot}|\gamma\rangle|\alpha^{\bot}\rangle$, respectively. 
    The inner product ($|\beta\rangle-\langle\alpha|\beta\rangle|\alpha\rangle$, $|\gamma\rangle-\langle\alpha^{\bot}|\gamma\rangle|\alpha^{\bot}\rangle$)=0. 
   This means that state 2 and state 3 remain orthogonal on the first subsystem after Alice's measurement. States \{1, 2, 3\} can be locally distinguished by Lemma~\ref{lemma2} since they are mutually orthogonal.
   
    5. Category (5,\,4,\,1)

	As shown in Fig.~\ref{fig10}, there exist 16 different orthogonality graphs, i.e., (10-1), (10-2), \dots, (10-16), for five tripartite OPSs with the vector of the numbers of pairwise orthogonal relations (5,\,4,\,1).  

  (1) In cases (10-1) and (10-2), state 5 is orthogonal to all the other states on the first subsystem. So the first party can distinguish this state from all the others. The result is that four states \{1, 2, 3, 4\} are left to be distinguished. The vector of the numbers of pairwise orthogonal relations of states \{1, 2, 3, 4\} is (1, 4, 1). As is known, the set of four tripartite OPSs characterized by the vector (1, 4, 1) shares the same local distinguishability as the set characterized by the vector (4, 1, 1). By Lemma~\ref{lemma3}, states \{1, 2, 3, 4\} can be perfectly distinguished by LOCC.
 
  (2) In case (10-3), state 1 is orthogonal to state 5 on the third subsystem, which can be used to identify states \{1, 2, 3, 4\} and states \{2, 3, 4, 5\} by the third party. For states \{1, 2, 3, 4\}, the vector of the numbers of pairwise orthogonal relations is (2, 4, 0). For states \{2, 3, 4, 5\}, the vector of the numbers of pairwise orthogonal relations of states is (5, 1, 0). Thus, both states  \{1, 2, 3, 4\} and states \{2, 3, 4, 5 \} can be seen as a set of bipartite OPSs when the third subsystem is not considered. By Lemma~\ref{lemma1}, both the set of states \{1, 2, 3, 4\} and the set of states \{2, 3, 4, 5\} can be perfectly distinguished by LOCC.     
  
  (3) In cases (10-4), (10-7) and (10-11), state 2 is orthogonal to state 3 on the third subsystem, which can be used to identify states  \{1, 2, 4, 5\} and states  \{1, 3, 4, 5\} by the third party. 
 
  \textcircled{1} For both states \{1, 2, 4, 5\} and states \{1, 3, 4, 5\} in case (10-4), the vectors of the numbers of pairwise orthogonal relations are all (3, 3, 0). This means that both states \{1, 2, 4, 5\} and states \{1, 3, 4, 5\} can be seen as a set of bipartite OPSs when the third subsystem is not considered. By Lemma~\ref{lemma1}, both states \{1, 2, 4, 5\} and states \{1, 3, 4, 5\} can be perfectly distinguished by LOCC.
 
  \textcircled{2} In case (10-7), the vector of the numbers of pairwise orthogonal relations of states \{1, 2, 4, 5\} is (3, 3, 0) while that of states \{1, 3, 4, 5\} is (4, 2, 0). This means that both states \{1, 2, 4, 5\} and states \{1, 3, 4, 5\} can be seen as a set of bipartite OPSs when the third subsystem is not considered. By Lemma~\ref{lemma1}, both the set of states \{1, 2, 4, 5\} and the set of states \{1, 3, 4, 5\} can be perfectly distinguished by LOCC.
  
  \textcircled{3} In case (10-11), both the vector of the numbers of pairwise orthogonal relations of states \{1, 2, 4, 5\} and that of states \{1, 3, 4, 5\} are all (3, 3, 0). similarly, both states \{1, 2, 4, 5\} and states \{1, 3, 4, 5\} can be perfectly distinguished by LOCC by Lemma~\ref{lemma1}.
 
  (4) In cases (10-5), (10-12) and (10-15), state 1 is orthogonal to state 3 on the third subsystem, which can be used to identify states  \{1, 2, 4, 5\} and states  \{2, 3, 4, 5\} by the third party.  In case (10-5), the vector of the numbers of pairwise orthogonal relations of states \{1, 2, 4, 5\} is (3, 3, 0) and that of states \{2, 3, 4, 5\} is (5, 1, 0). In cases (10-12) and (10-15), the vector of the numbers of pairwise orthogonal relations of states \{1, 2, 4, 5\} is (3, 3, 0) and that of states \{2, 3, 4, 5\} is (4, 2, 0). For the set of states \{1, 2, 4, 5\} , regardless of which case it belongs to, it can be regarded as a set of bipartite OPSs that are pairwise orthogonal when the third subsystem is not considered. By Lemma~\ref{lemma1}, states  \{1, 2, 4, 5\} can be perfectly distinguished by LOCC. So can states \{2, 3, 4, 5\}.
 
  (5) In case (10-6), state 2 is orthogonal to state 5 on the third subsystem, which can be used to identify states  \{1, 2, 3, 4\} and states \{1, 3, 4, 5\} by the third party. The vector of the numbers of pairwise orthogonal relations of states \{1, 2, 3, 4\} is (2, 4, 0) and that of states \{1, 3, 4, 5\} is (4, 2, 0). Thus both the set of states \{1, 2, 3, 4\}  and the set of states \{1, 3, 4, 5\} can be seen as a set of bipartite OPSs that are pairwise orthogonal when the third subsystem is not considered.  By Lemma~\ref{lemma1}, both states \{1, 2, 3, 4\} and states \{1, 3, 4, 5\} can be perfectly distinguished by LOCC. 
 
  (6) In case (10-8), state 1 is orthogonal to state 2 on the third subsystem, which can be used to identify states  \{1, 3, 4, 5\} and states  \{2, 3, 4, 5\} by the third party. Both the vector of the numbers of pairwise orthogonal relations of states \{1, 3, 4, 5\} and that of states \{2, 3, 4, 5\} are all (4, 2, 0). Thus both the set of states \{1, 3, 4, 5\}  and the set of states \{2, 3, 4, 5\} can be seen as a set of bipartite OPSs that are pairwise orthogonal when the third subsystem is not considered. By Lemma~\ref{lemma1}, both states \{1, 3, 4, 5\} and states \{2, 3, 4, 5\} can be perfectly distinguished by LOCC. 
   
  (7) In cases (10-9) and (10-16), state 4 is orthogonal to state 5 on the third subsystem, which can be used to identify states  \{1, 2, 3, 4\} and states  \{1, 2, 3, 5\} by the third party. For case (10-9), the vector of the numbers of pairwise orthogonal relations of states \{1, 2, 3, 4\} is (2, 4, 0) and that of states is \{1, 2, 3, 5\} is (3, 3, 0). For case (10-16), both the vector of the numbers of pairwise orthogonal relations of states \{1, 2, 3, 4\} and that of states \{1, 2, 3, 5\} are all (3, 3, 0). Thus both the set of \{1, 2, 3, 4\} and the set of states \{1, 2, 3, 5\}, regardless of which cases the two sets belong to, can be seen as a set of bipartite OPSs that are pairwise orthogonal when the third subsystem is not considered. By Lemma~\ref{lemma1}, both states \{1, 2, 3, 4\} and states \{1, 2, 3, 5\} can be perfectly distinguished by LOCC. 
  
  (8) In cases (10-10) and (10-14), state 1 is orthogonal to state 4 on the third subsystem, which can be used to identify states \{1, 2, 3, 5\} and states \{2, 3, 4, 5\} by the third party. For case (10-10), the vector of the numbers of pairwise orthogonal relations of states \{1, 2, 3, 5\} is (3, 3, 0) and that of states \{2, 3, 4, 5\} is (4, 2, 0). For case (10-14), the vector of the numbers of pairwise orthogonal relations of states \{1, 2, 3, 5\} is (2, 4, 0) and that of states \{2, 3, 4, 5\} is (4, 2, 0). Thus both the set of \{1, 2, 3, 5\} and the set of states \{2, 3, 4, 5\}, regardless of which cases the two sets belong to, can be seen as a set of bipartite OPSs that are pairwise orthogonal when the third subsystem is not considered. By Lemma~\ref{lemma1}, both states \{1, 2, 3, 5\} and states \{2, 3, 4, 5\} can be perfectly distinguished by LOCC. 
  
  (9) In case (10-13), state 3 is orthogonal to state 5 on the third subsystem, which can be used to identify states \{1, 2, 3, 4\} and states \{1, 2, 4, 5\} by the third party. Both the vector of the numbers of pairwise orthogonal relations of states \{1, 2, 3, 4\} and that of states \{1, 2, 4, 5\} are all (3, 3, 0).
  Thus both the set of \{1, 2, 3, 4\} and the set of states \{1, 2, 4, 5\} can be seen as a set of bipartite OPSs that are pairwise orthogonal when the third subsystem is not considered.  By Lemma~\ref{lemma1}, both states \{1, 2, 3, 4\} and states \{1, 2, 4, 5\} can be perfectly distinguished by LOCC. 
   
  Therefore, five tripartite OPSs with the vector of the numbers of pairwise orthogonal relations $(5,\,4,\,1)$ can be perfectly distinguished by LOCC.
  
  6. Category (5,\,3,\,2)
  
	As shown in Fig.~\ref{fig11}, there exist 30 different orthogonality graphs, i.e., (11-1), (11-2), \dots, (11-30), for five tripartite OPSs with the vector of the numbers of pairwise orthogonal relations (5, 3, 2). 

  (1) In cases (11-1), (11-2), (11-3) and (11-4), state 5 is orthogonal to all the other states on the first subsystem. So the first party can distinguish state 5 from all the others. The result is that four states \{1, 2, 3, 4\} are left to be distinguished. For states \{1, 2, 3, 4\} in these cases, the vectors of the numbers of pairwise orthogonal relations are all (1, 3, 2). As is known, four tripartite OPSs with the vector (1, 3, 2) have the same local distinguishability as four tripartite OPSs with the vector (3, 2, 1). By Lemma~\ref{lemma3}, states \{1, 2, 3, 4\} can be perfectly distinguished by LOCC.     
    
  (2) In case (11-5), state 1 is orthogonal to state 4 and state 5 on the third subsystem, which can be used to identify states \{2, 3, 4, 5\} and states \{1, 2, 3\} by the third party. For states \{2, 3, 4, 5\}, the vector of the numbers of pairwise orthogonal relations is (5, 1, 0). Thus, states \{2, 3, 4, 5\} can be seen as a set of bipartite OPSs that are pairwise orthogonal when the third subsystem is not considered. By Lemma~\ref{lemma1}, states \{2, 3, 4, 5\} can be perfectly distinguished by LOCC. For states \{1, 2, 3\}, these OPSs can be perfectly distinguished by LOCC since they are pairwise orthogonal on the second subsystem. 
  
  (3) In cases (11-6), (11-11), (11-17) and (11-23), state 1 is orthogonal to states $\{2,\,3,\,4\}$ on the second subsystem, which can be used to identify states \{2, 3, 4, 5\} and states \{1, 5\} by the second party. In case (11-6), the vector of the numbers of pairwise orthogonal relations of states \{2, 3, 4, 5\} is (5, 0, 1). In cases (11-11), (11-17) and (11-23), the vectors of the numbers of pairwise orthogonal relations of states \{2, 3, 4, 5\} are all (4, 0, 2). Thus, the set of states \{2, 3, 4, 5\}, regardless of which case it belongs to, can be seen as a set of bipartite OPSs that are pairwise orthogonal when the second subsystem is not considered. By Lemma~\ref{lemma1}, states \{2, 3, 4, 5\} can be perfectly distinguished by LOCC. On the other hand, states \{1, 5\} can be perfectly distinguished by LOCC  by the third party in case (11-6) and can be perfectly distinguished by LOCC by the first party in cases (11-11), (11-17) and (11-23).
    
  (4) In case (11-7), state 1 is orthogonal to state 3 and state 5 on the third subsystem, which can be used to identify states \{2, 3, 4, 5\} and states \{1, 2, 4\} by the third party. The vector of the numbers of pairwise orthogonal relations of states \{2, 3, 4, 5\} is (5,\,1,\,0) and that of states \{1, 2, 4\} is (1, 2, 0).  
  Thus, both the set of states \{2, 3, 4, 5\} and the set of states \{1, 2, 4\} can be seen as a set of bipartite OPSs that are pairwise orthogonal when the third subsystem is not considered. By Lemma~\ref{lemma1}, both states \{2, 3, 4, 5\} and states \{1, 2, 4\} can be perfectly distinguished by LOCC.
   
  (5) In cases (11-8), (11-13) and (11-21), state 3 is orthogonal to state 1 and state 2 on the third subsystem, which can be used to identify states \{1, 2, 4, 5\} and states \{3, 4, 5\} by the third party. In cases (11-8) and (11-13), the vectors of the numbers of pairwise orthogonal relations of states \{1, 2, 4, 5\} are (3, 3, 0) and those of states \{3, 4, 5\} are (3, 0, 0). In case (11-21), the vector of the numbers of pairwise orthogonal relations of states \{1, 2, 4, 5\} is (3, 3, 0) and that of states \{3, 4, 5\} is (2, 1, 0). Thus, both the set of states \{1, 2, 4, 5\} and the set of states \{3, 4, 5\}, regardless of which cases the two sets belong to, can be seen as a set of bipartite OPSs that are pairwise orthogonal when the third subsystem is not considered. By Lemma~\ref{lemma1}, both states \{1, 2, 4, 5\} and states \{3, 4, 5\} can be perfectly distinguished by LOCC.
    
  (6) In cases (11-9), (11-22) and (11-28), state 1 is orthogonal to state 2 and state 3 on the third subsystem, which can be used to identify states \{2, 3, 4, 5\} and states \{1, 4, 5\} by the third party. For case (11-9), the vector of the numbers of pairwise orthogonal relations of states \{2, 3, 4, 5\} is (5,\,1,\,0) and that of states \{1, 4, 5\} is (1, 2, 0). For case (11-22), the vector of the numbers of pairwise orthogonal relations of states \{2, 3, 4, 5\} is (4,\,2,\,0) and that of states \{1, 4, 5\} is (1, 2, 0). For case (11-28), the vector of the numbers of pairwise orthogonal relations of states \{2, 3, 4, 5\} is (4,\,2,\,0) and that of states \{1, 4, 5\} is (2, 1, 0). Thus, both the set of states \{2, 3, 4, 5\} and the set of states \{1, 4, 5\}, regardless of which cases the two sets belong to, can be seen as a set of bipartite OPSs that are pairwise orthogonal when the third subsystem is not considered. By Lemma~\ref{lemma1}, both states \{2, 3, 4, 5\} and states \{1, 4, 5\} can be perfectly distinguished by LOCC.
  
  (7) In case (11-14), state 2 is orthogonal to state 1 and state 5 on the third subsystem, which can be used to identify states \{1, 3, 4, 5\} and states \{2, 3, 4\} by the third party. The vector of the numbers of pairwise orthogonal relations of states \{1, 3, 4, 5\} is (4,\,2,\,0) and that of states \{2, 3, 4\} is (2, 1, 0). Thus, both the set of states \{1, 3, 4, 5\} and the set of states \{2, 3, 4\} can be seen as a set of bipartite OPSs that are pairwise orthogonal when the third subsystem is not considered. By Lemma~\ref{lemma1}, both states \{1, 3, 4, 5\} and states \{2, 3, 4\} can be perfectly distinguished by LOCC.
  
  (8) In case (11-15), state 2 is orthogonal to state 1 and state 3 on the third subsystem, which can be used to identify states \{1, 3, 4, 5\} and states \{2, 4, 5\} by the third party. The vector of the numbers of pairwise orthogonal relations of states \{1, 3, 4, 5\} is (4,\,2,\,0) and that of states \{2, 4, 5\} is (2, 1, 0). Thus, both the set of states \{1, 3, 4, 5\} and the set of states \{2, 4, 5\} can be seen as a set of bipartite OPSs that are pairwise orthogonal when the third subsystem is not considered. By Lemma~\ref{lemma1}, both states \{1, 3, 4, 5\} and states \{2, 4, 5\} can be perfectly distinguished by LOCC.
  
  (9) In case (11-16), state 4 is orthogonal to state 1 and state 5 on the third subsystem, which can be used to identify states \{1, 2, 3, 5\} and states \{2, 3, 4\} by the third party. The vector of the numbers of pairwise orthogonal relations of states \{1, 2, 3, 5\} is (3, 3, 0) and that of states \{2, 3, 4\} is (2, 1, 0). Thus, both the set of states \{1, 2, 3, 5\} and the set of states \{2, 3, 4\} can be seen as a set of bipartite OPSs that are pairwise orthogonal when the third subsystem is not considered. By Lemma~\ref{lemma1}, both states \{1, 2, 3, 5\} and states \{2, 3, 4\} can be perfectly distinguished by LOCC.
   
  (10) In cases (11-20) and (11-27), state 1 is orthogonal to state 3 and state 4 on the third subsystem, which can be used to identify states \{2, 3, 4, 5\} and states \{1, 2, 5\} by the third party. For case (11-20), the vector of the numbers of pairwise orthogonal relations of states \{2, 3, 4, 5\} is (4,\,2,\,0) and that of states \{1, 2, 5\} is (2, 1, 0). For case (11-27), the vector of the numbers of pairwise orthogonal relations of states \{2, 3, 4, 5\} is (4,\,2,\,0) and that of states \{1, 2, 5\} is (1, 2, 0).  Thus, the set of states \{2, 3, 4, 5\} and the set of states \{1, 2, 5\}, regardless of which cases the two sets belong to, can be seen as a set of bipartite OPSs that are pairwise orthogonal when the third subsystem is not considered. By Lemma~\ref{lemma1}, both states \{2, 3, 4, 5\} and states \{1, 2, 5\} can be perfectly distinguished by LOCC.
     
  (11) In case (11-25), state 3 is orthogonal to state 1 and state 5 on the third subsystem, which can be used to identify states \{1, 2, 4, 5\} and states \{2, 3, 4\} by the third party. The vector of the numbers of pairwise orthogonal relations of states \{1, 2, 4, 5\} is (3,\,3,\,0) and that of states \{2, 3, 4\} is (3, 0, 0). Thus, the set of states \{1, 2, 4, 5\} can be seen as a set of bipartite OPSs that are pairwise orthogonal when the third subsystem is not considered. By Lemma~\ref{lemma1}, states \{1, 2, 4, 5\} can be perfectly distinguished by LOCC. On the other hand, states \{2, 3, 4\} can be perfectly distinguished by the first party since any two of states \{2, 3, 4\} are orthogonal on the first subsystem.
    
  (12) In case (11-29), state 5 is orthogonal to state 3 and state 4 on the third subsystem, which can be used to identify states \{1, 2, 3, 4\} and states \{1, 2, 5\} by the third party. The vector of the numbers of pairwise orthogonal relations of states \{1, 2, 3, 4\} is (3,\,3,\,0) and that of states \{1, 2, 5\} is (2, 1, 0). Thus, the set of states \{1, 2, 3, 4\} and the set of states \{1, 2, 5\} can be seen as a set of bipartite OPSs that are pairwise orthogonal when the third subsystem is not considered. By Lemma~\ref{lemma1}, both states \{1, 2, 3, 4\} and states \{1, 2, 5\} can be perfectly distinguished by LOCC.
  
  (13) In cases (11-10) and (11-12), state 4 is orthogonal to states \{ 2,\, 3,\, 5\} and state 5 is orthogonal to states \{1,\, 3,\, 4\} on the first subsystem. We assume that the first subsystems of state 4 and state 5 are $|\alpha\rangle$ and $|\alpha^{\perp }\rangle$, respectively, where $|\alpha\rangle$, $|\alpha^{\perp }\rangle$ are all normalized, and $\langle\alpha|\alpha^{\perp}\rangle=0$. Suppose that the first party, say Alice, performs a measurement with the measurement operators  $|\alpha\rangle \langle\alpha |$, $|\alpha^{\perp} \rangle \langle\alpha^{\perp } |$ and $I-|\alpha\rangle \langle\alpha |- |\alpha^{\perp} \rangle \langle\alpha^{\perp } |$. 
  
  \textcircled{1} If Alice's measurement outcome corresponds to $|\alpha\rangle \langle\alpha |$, the measured state must be state 1 or state 4. States 1 and 4 can be exactly identified by the third party since state 1 and state 4 are orthogonal on the third subsystem. 
  
  \textcircled{2} If Alice's measurement outcome corresponds to $|\alpha^{\perp} \rangle \langle\alpha^{\perp }|$, the measured state must be state 2 and state 5. State 2 and state 5 can be exactly identified by the second party for case (11-12) and by the third party for case (11-10) since state 2 and state 5 are orthogonal on the second subsystem for case (11-12) and on the third subsystem for case (11-10). 
  
  \textcircled{3} If Alice's measurement outcome corresponds to the operator $I-|\alpha\rangle \langle\alpha |- |\alpha^{\perp} \rangle \langle\alpha^{\perp } |$, the measured state must be state 1, 2 or 3. It can be exactly identified since states 1, 2 and 3 are mutually orthogonal by Lemma~\ref{lemma2}.
  
  (14) In cases (11-18) and (11-19), state 5 is orthogonal to states 1, 2 and 3 on the first subsystem. We assume that the first subsystem of state 5 is  $|\alpha\rangle$, where $|\alpha\rangle$ is normalized. Suppose that the first party, say Alice, performs a measurement with the measurement operators  $|\alpha\rangle \langle\alpha |$ and $I-|\alpha\rangle \langle\alpha |$.
  
  \textcircled{1} If Alice's measurement outcome corresponds to the measurement operator $|\alpha\rangle \langle\alpha |$, the measured state must be state 4 or state 5. State 4 and state 5 can be exactly identified by the second party since they are orthogonal on the second subsystem for case (11-18) and can be exactly identified by the third party since they are orthogonal on the third subsystem for case (11-19). 
   
  \textcircled{2} If Alice's measurement outcome corresponds to the measurement operator $I-|\alpha\rangle \langle\alpha |$, the measured state must be one of states \{1, 2, 3, 4\}. Note that we need to verify that state 4 remains orthogonal to state 2 and state 3 after Alice's measurement. We assume that the first subsystems of states 2, 3 and 4 are $|\beta \rangle$,  $|\gamma  \rangle$ and $| \delta\rangle$, respectively, where $|\beta\rangle$, $|\gamma\rangle$ and $|\delta\rangle$ are all normalized, $\langle\beta|\alpha\rangle$=0, $\langle\gamma|\alpha\rangle$=0, $\langle\beta|\delta\rangle$=0, $\langle\gamma|\delta\rangle$=0. The post-measurement states of these first subsystems are $(I-|\alpha\rangle \langle\alpha |)|\beta \rangle$, $(I-|\alpha\rangle \langle\alpha |)|\gamma\rangle$ and $(I-|\alpha\rangle\langle\alpha |)|\delta \rangle$, respectively. The inner product of the first subsystems of state 2 and state 4 after the measurement is $\langle\beta|(I-|\alpha\rangle \langle\alpha|)^{\dagger}(I-|\alpha\rangle\langle\alpha|)|\delta\rangle$=0.  This indicates that state 2 and state 4 remain orthogonal after Alice's measurement. Similarly, state 3 and state 4 remain orthogonal as well after Alice's measurement. Thus states \{1, 2, 3, 4\} are pairwise orthogonal after Alice's measurement. For case (11-18), the vector of the numbers of pairwise orthogonal relations of states \{1, 2, 3, 4\} is (2, 2, 2). From graph (11-18), we know that graph (11-18) is the same as graph (1-1) when the blue edges and the black edges swap colors. It should be note that this swap does not changed the local distinguishability of these four OPSs. By Lemma~\ref{lemma4}, states \{1, 2, 3, 4\} can be perfectly distinguished by LOCC. For case (11-19), the vector of the numbers of pairwise orthogonal relations of states \{1, 2, 3, 4\} is (2, 3, 1). Note that four tripartite OPSs with the vector (2, 3, 1) have the same local distinguishability as four tripartite OPSs with the vector (3, 2, 1). By Lemma~\ref{lemma3}, states \{1, 2, 3, 4\} can be perfectly distinguished by LOCC.

  (15) In case (11-24), state 4 is orthogonal to states \{2, 3, 5\} on the first subsystem. We assume that the first subsystem of state 4 is $|\alpha\rangle$, where $|\alpha\rangle$ is normalized. Suppose that the first party, say Alice, performs a measurement with the measurement operators  $|\alpha\rangle \langle\alpha |$ and $I-|\alpha\rangle \langle\alpha |$ on the first side.

  \textcircled{1} If Alice's measurement outcome corresponds to the measurement operator $|\alpha\rangle \langle\alpha |$, the measured state must be state 1 or state 4. State 1 and state 4 can be exactly identified by the third party since they are orthogonal on the third subsystem.
  
 \textcircled{2} If Alice's measurement outcome corresponds to the measurement operator $I-|\alpha\rangle \langle\alpha |$, the measured state must be one of states \{1, 2, 3, 5\}. Note that we need to verify that state 1 remains orthogonal to state 5 and state 2 remains orthogonal to state 3 on the first subsystem after Alice's measurement. We assume that the first subsystems of states 1, 2, 3 and 5 are $|\beta\rangle$, $|\gamma\rangle$, $|\delta\rangle $ and $|\epsilon\rangle$, respectively, where $|\beta\rangle$, $|\gamma\rangle$, $|\delta\rangle $ and $|\epsilon\rangle$ are all normalized, $\langle\alpha|\epsilon\rangle=0$, $\langle\beta|\epsilon\rangle=0$, $\langle\alpha|\gamma\rangle=0$, $\langle\alpha|\delta\rangle=0$, $\langle\gamma|\delta\rangle=0$. The post-measurement states of these subsystems are $(I-|\alpha\rangle \langle\alpha |)|\beta \rangle$, $(I-|\alpha\rangle \langle\alpha |)|\gamma \rangle$, $(I-|\alpha\rangle \langle\alpha|)|\delta\rangle$ and $(I-|\alpha\rangle \langle\alpha |)|\epsilon \rangle$, respectively. The inner product of the first subsystems of state 1 and state 5 after the measurement is $\langle \beta |(I-|\alpha\rangle \langle\alpha |)^{\dagger }(I-|\alpha\rangle \langle\alpha |)|\epsilon    \rangle $=0.  This indicates that state 1 and state 5 remain orthogonal on the first subsystem after Alice's measurement. Similarly, state 2 and state 3 remain orthogonal on the first subsystem. Thus states \{1, 2, 3, 5\} are pairwise orthogonal after Alice's measurement. The vector of the numbers of pairwise orthogonal relations of states \{1, 2, 3, 5\} is (2, 3, 1). By Lemma~\ref{lemma3}, states \{1, 2, 3, 5\} can be perfectly distinguished by LOCC.
 
  7. Category (4,\,4,\,2)
 
	As shown in Fig.~\ref{fig12}, there exist 20 different orthogonality graphs, i.e., (12-1), (12-2), \dots, (12-20), for five tripartite OPSs with the vector of the numbers of pairwise orthogonal relations (4,\,4,\,2).
		
	(1) In cases (12-1), (12-2), (12-3), (12-4), (12-5), (12-6), (12-7), (12-8), (12-9), (12-10), (12-11) and (12-12), state 5 is orthogonal to states 3 and 4 on the third subsystem, which can be used to identify states \{1, 2, 3, 4\} and states \{1, 2, 5\} by the third party. States \{1, 2, 5\} can be perfectly distinguished by Lemma~\ref{lemma2}.

 \textcircled{1} For cases (12-1), (12-2), (12-4) and (12-7), the vectors of the numbers of pairwise orthogonal relations of states \{1, 2, 3, 4\} are all (2, 4, 0). By Lemma~\ref{lemma1}, these four states can be perfectly distinguished by LOCC when the third subsystem is not considered. 

\textcircled{2} For cases (12-3), (12-5), (12-6), (12-8), (12-9) and (12-10), the vectors of the numbers of pairwise orthogonal relations of states \{1, 2, 3, 4\} are all (3, 3, 0). By Lemma~\ref{lemma1}, these four states can be perfectly distinguished by LOCC when the third subsystem is not considered. 

\textcircled{3} For cases (12-11) and (12-12), the vectors of the numbers of pairwise orthogonal relations of states \{1, 2, 3, 4\} are all (4, 2, 0). By Lemma~\ref{lemma1}, states \{1, 2, 3, 4\} can be perfectly distinguished by LOCC when the third subsystem is not considered.

	(2) In case (12-13), state 1 is orthogonal to states 2, 3 and 4 on the second subsystem. We assume that the second subsystem of state 1 is  $|\alpha\rangle$, where $|\alpha\rangle$ is normalized. Suppose that the second party, say Bob, performs a measurement with the measurement operators $|\alpha\rangle \langle\alpha |$ and $I-|\alpha\rangle \langle\alpha |$. 

\textcircled{1} If Bob's measurement outcome corresponds to the operator $|\alpha\rangle \langle\alpha |$, the measured state must be state 1 or state 5. State 1 and state 5 can be exactly identified by the first party since they are orthogonal on the first subsystem. 

\textcircled{2} If Bob's measurement outcome corresponds to the operator $I-|\alpha\rangle \langle\alpha |$, the measured state must be one of states \{2, 3, 4, 5\}. Note that state 2 and state 3 remain orthogonal on the second system after Bob's measurement. The vector of the numbers of pairwise orthogonal relations of states \{2, 3, 4, 5\} is (3, 1, 2). By Lemma~\ref{lemma3}, states \{2, 3, 4, 5\} can be perfectly distinguished by LOCC.
	
	(3) In case (12-14), state 2 is orthogonal to states 1, 3 and 4 on the second subsystem. We assume that the second subsystem of state 2 is $|\alpha\rangle$, where $|\alpha\rangle$ is normalized. Suppose that the second party, say Bob, performs a measurement with the measurement operators $|\alpha\rangle \langle\alpha |$ and $I-|\alpha\rangle \langle\alpha |$. 

\textcircled{1} If Bob's measurement outcome corresponds to  $|\alpha\rangle \langle\alpha |$, the measured state must be state 2 or state 5. State 2 and state 5 can be exactly identified by the third party since they are orthogonal on the third subsystem. 

\textcircled{2} If Bob's measurement outcome corresponds to $I-|\alpha\rangle \langle\alpha|$, the measured state must be one of states \{1, 3, 4, 5\}. Note that state 1 remains orthogonal to state 3 on the second subsystem after Bob's measurement. The vector of the numbers of pairwise orthogonal relations is (4, 1, 1).
By Lemma~\ref{lemma3}, states \{1, 3, 4, 5\} can be perfectly distinguished by LOCC. 
	
	(4) In case (12-16), state 1 is orthogonal to all the other states on the second subsystem. Thus the second party can identify state 1 from all the others. The result is that four states \{2, 3, 4, 5\} are left to be distinguished. The vector of the numbers of pairwise orthogonal relations of states \{2, 3, 4, 5\} is (4, 0, 2).  By Lemma~\ref{lemma1}, states \{2, 3, 4, 5\} can be perfectly distinguished by LOCC when the second subsystem is not considered.
	
	(5) In case (12-17), state 1 is orthogonal to states 2, 3 and 4 on the second subsystem.  We assume that the second subsystem of state 1 is  $|\alpha\rangle$, where $|\alpha\rangle$ is normalized. Suppose that the second party, say Bob, performs a measurement with the measurement operators  $|\alpha\rangle \langle\alpha |$ and $I-|\alpha\rangle \langle\alpha |$. 

\textcircled{1} If Bob's measurement outcome corresponds to the operator $|\alpha\rangle \langle\alpha |$, the measured state must be state 1 or state 5. State 1 and state 5 can be exactly identified by the first party since they are orthogonal on the first subsystem. 

\textcircled{2} If Bob's measurement outcome corresponds to the operator $I-|\alpha\rangle \langle\alpha |$, the measured state must be one of states \{2, 3, 4, 5\}. Note that state 3 and state 5 remain orthogonal on the second subsystem after Bob's measurement. Thus states \{2, 3, 4, 5\} are pairwise orthogonal. The vector of the numbers of pairwise orthogonal relations of states \{2, 3, 4, 5\} is (3, 1, 2). By Lemma~\ref{lemma3}, states \{2, 3, 4, 5\} can be perfectly distinguished by LOCC.
	
	(6) In case (12-18), state 5 is orthogonal to state 1 and state 3 on the first subsystem.  We assume that the first subsystem of state 5 is $|\alpha\rangle$, where $|\alpha\rangle$ is normalized. Suppose that the first party, say Alice, performs a measurement with the measurement operators  $|\alpha\rangle \langle\alpha |$ and $I-|\alpha\rangle \langle\alpha |$. 

\textcircled{1} If Alice's measurement outcome corresponds to the operator $|\alpha\rangle \langle\alpha |$, the measured state must be state 2, state 4 or state 5. States \{2, 4, 5\} can be locally distinguished by Lemma~\ref{lemma2} since they still remain pairwise orthogonal after Alice's measurement. 

\textcircled{2} If Alice's measurement outcome corresponds to the operator $I-|\alpha\rangle \langle\alpha |$, the measured state must be one of states $\{1,\,2,\,3,\,4\}$. Note that state 2 is orthogonal to state 3 and state 1 is orthogonal to state 4 on the first subsystem after Alice's measurement. Thus, states \{1, 2, 3, 4\} are still pairwise orthogonal after Alice's measurement. The vector of the numbers of pairwise orthogonal relations of states \{1, 2, 3, 4\} is (2, 3, 1). By Lemma~\ref{lemma3}, states \{1, 2, 3, 4\} can be perfectly distinguished by LOCC. 

	(7) In case (12-19), state 2 is orthogonal to states 1, 3 and 4 on the second subsystem. We assume that the second subsystem of state 2 is  $|\alpha\rangle$, where $|\alpha\rangle$ is normalized. Suppose that the second party, say Bob, performs a measurement with the measurement operators $|\alpha\rangle \langle\alpha |$ and $I-|\alpha\rangle \langle\alpha |$. 

\textcircled{1} If Bob's measurement outcome corresponds to $|\alpha\rangle\langle\alpha |$, the measured state must be state 2 or state 5. State 2 and state 5 can be exactly identified by the third party since they are orthogonal on the third subsystem. 

\textcircled{2} If Bob's measurement outcome corresponds to $I-|\alpha\rangle \langle\alpha |$, the measured state must be one of states $\{1,\,3,\,4,\,5\}$. Note that state 1 and state 5 remain orthogonal on the second subsystem after Bob's measurement. Thus states $\{1,\,3,\,4,\,5\}$ remain pairwise  orthogonal after Bob's measurement. The vector of the numbers of pairwise orthogonal relations of states \{1, 3, 4, 5\} is (4, 1, 1). By Lemma~\ref{lemma3}, states \{1, 3, 4, 5\} can be perfectly distinguished by LOCC.

  8. Category (4,\,3,\,3)
  
	As shown in Fig.~\ref{fig13}, there exist 27 different orthogonality graphs, i.e., (13-1), (13-2), \dots, (13-27), for five tripartite OPSs with the vector of the numbers of pairwise orthogonal relations $(4,\,3,\,3)$.

	(1) In cases (13-1) and (13-2), state 5 is orthogonal to all the other states on the first subsystem. Thus the first party can distinguish state 5 from all the others. The result is that four states \{1, 2, 3, 4\} are left to be distinguished. The vector of the numbers of pairwise orthogonal relations of states \{1, 2, 3, 4\} is (0, 3, 3). By Lemma~\ref{lemma1}, states \{1, 2, 3, 4\} can be perfectly distinguished by LOCC when the first subsystem is not considered.    
	
	(2) In cases (13-3), (13-5) and (13-8), state 5 is orthogonal to states 2, 3 and 4 on the first subsystem. We assume that the first subsystem of state 5 is  $|\alpha\rangle$, where $|\alpha\rangle$ is normalized. Suppose that the first party, say Alice, performs a measurement with the measurement operators  $|\alpha\rangle \langle\alpha |$ and $I-|\alpha\rangle \langle\alpha |$. 

\textcircled{1} If Alice's measurement outcome corresponds to the operator $|\alpha\rangle \langle\alpha |$, the measured state must be state 1 or state 5. For cases (13-3) and (13-5), state 1 and state 5 can be exactly identified by the third party since they are orthogonal on the third subsystem. For case (13-8), the two states can also be exactly identified by the second party due to their orthogonality on the second subsystem.

 \textcircled{2} If Alice's measurement outcome corresponds to $I-|\alpha\rangle \langle\alpha |$, the measured state must be one of states $\{1,\,2,\,3,\,4\}$. Note that state 3 and state 4 remain orthogonal after Alice's measurement. This means that states \{1, 2, 3, 4\} remain pairwise orthogonal after Alice's measurement. For cases (13-3) and (13-5), the vectors of the numbers of pairwise orthogonal relations of states \{1, 2, 3, 4\} are all (1, 3, 2). By Lemma~\ref{lemma3}, states \{1, 2, 3, 4\} can be perfectly distinguished by LOCC. For case (13-8), the vector of the numbers of pairwise orthogonal relations of states \{1, 2, 3, 4\} is (1, 2, 3). By Lemma~\ref{lemma3}, states \{1, 2, 3, 4\} can be perfectly distinguished by LOCC.
	
	(3) In cases (13-4), (13-10), (13-16), (13-20) and (13-25), state 1 is orthogonal to states \{ 2, 3, 4\} on the second subsystem, which can be used to identify states \{2, 3, 4, 5\} and states \{1, 5\} by the second party. 

\textcircled{1} For cases (13-4) and (13-16), the vectors of the numbers of pairwise orthogonal relations of states \{2, 3, 4, 5\} are all (4, 0, 2). Thus states \{2, 3, 4, 5\} can be seen as a set of bipartite OPSs when the second subsystem is not considered. By Lemma~\ref{lemma1}, states \{2, 3, 4, 5\} can be perfectly distinguished by LOCC. For case (13-10), (13-20) and (13-25), the vectors of the numbers of pairwise orthogonal relations of states \{2, 3, 4, 5\} are all (3, 0, 3). Thus states \{2, 3, 4, 5\} can be seen as a set of bipartite OPSs when the second subsystem is not considered. By Lemma~\ref{lemma1}, states \{2, 3, 4, 5\} can be perfectly distinguished by LOCC. 

\textcircled{2} For states \{1, 5\}, they can be perfectly distinguished by LOCC since they are orthogonal in each of cases (13-4), (13-10), (13-16), (13-20) and (13-25). 
	
	(4) In case (13-6), state 1 is orthogonal to states 2, 3 and 5 on the second subsystem, which can be used to identify states \{2, 3, 4, 5\} and states \{1, 4\} by the second party. The vector of the numbers of pairwise orthogonal relations of states \{2, 3, 4, 5\} is (4, 0, 2). By Lemma~\ref{lemma1}, states \{2, 3, 4, 5\} can be perfectly distinguished by LOCC. For states \{1, 4\}, they can be perfectly distinguished by the third party since they are orthogonal on the third subsystem.  
	
	(5) In case (13-7), state 2 is orthogonal to states 1, 3 and 4 on the second subsystem, which can be used to identify states \{1, 3, 4, 5\} and states \{2, 5\} by the second party. The vector of the numbers of pairwise orthogonal relations of states \{1, 3, 4, 5\} is (3, 0, 3). By Lemma~\ref{lemma1}, states \{1, 3, 4, 5\} can be perfectly distinguished by LOCC. States \{2, 5\}  can be perfectly distinguished by the first party since they are orthogonal on the first subsystem.   
	
	(6) In cases (13-9), (13-11), (13-12), (13-13) and (13-14), state 5 is orthogonal to states 1, 2 and 4 on the first subsystem. We assume that the first subsystem of state 5 is  $|\alpha\rangle$, where $|\alpha\rangle$ is normalized. Suppose that the first party, say Alice, performs a measurement with the measurement operators  $|\alpha\rangle \langle\alpha |$ and $I-|\alpha\rangle \langle\alpha |$. 

   \textcircled{1} If Alice's measurement outcome corresponds to $|\alpha\rangle\langle\alpha |$, the measured state must be state 3 or state 5. For cases (13-9), (13-11) and (13-13), state 3 and state 5 can be exactly identified by the third party since they are orthogonal on the third subsystem. For cases (13-12) and (13-14), state 3 and state 5 can be exactly identified by the second party since they are orthogonal on the second subsystem. 

   \textcircled{2} If Alice's measurement outcome corresponds to $I-|\alpha\rangle \langle\alpha |$, the measured state must be one of states \{1, 2, 3, 4\}. Note that state 3 remains orthogonal to state 4 on the first subsystem after Alice's measurement. For cases (13-9), (13-11) and (13-13), the vectors of the numbers of pairwise orthogonal relations of states \{1, 2, 3, 4\} are all (1, 3, 2). For cases (13-12) and (13-14), the vectors of the numbers of pairwise orthogonal relations of states \{1, 2, 3, 4\} are all (1, 2, 3). By Lemma~\ref{lemma3}, states \{1, 2, 3, 4\} in each of cases (13-9), (13-11), (13-12), (13-13) and (13-14) can be perfectly distinguished by LOCC. 

	(7) In cases (13-15), (13-17) and (13-18), state 5 is orthogonal to states 2 and 3 on the first subsystem. We assume that the first subsystem of state 5 is 
   $|\alpha\rangle$, where $|\alpha\rangle$ is normalized. Suppose that the first party, say Alice, performs a measurement with the measurement operators  $|\alpha\rangle \langle\alpha |$ and $I-|\alpha\rangle \langle\alpha |$. 

   \textcircled{1} If Alice's measurement outcome corresponds to the operator $|\alpha\rangle \langle\alpha |$, the measured state be state 1, 4 or 5. States \{1, 4, 5\} can be locally distinguished by Lemma~\ref{lemma2} since they are still pairwise orthogonal. 

   \textcircled{2} If Alice's measurement outcome corresponds to the operator  $I-|\alpha\rangle \langle\alpha |$, the measured state must be one of states $\{1,\,2,\,3,\,4\}$. Note that state 4 remains orthogonal to state 2 and state 3 on the first subsystem under this measurement outcome of Alice. Thus states \{1, 2, 3, 4\} are still pairwise orthogonal. For cases (13-15) and (13-18), the vectors of the numbers of pairwise orthogonal relations of states \{1, 2, 3, 4\} are all (2, 3, 1). By Lemma~\ref{lemma3}, states \{1, 2, 3, 4\} in case (13-15) or (13-18)  can be perfectly distinguished by LOCC. For case (13-17), the vector of the numbers of pairwise orthogonal relations of states \{1, 2, 3, 4\} is (2, 2, 2) and the graph of states \{1, 2, 3, 4\} corresponds to graph (1-1). By Lemma~\ref{lemma4}, states \{1, 2, 3, 4\} in  case (13-17) can be perfectly distinguished by LOCC.

	(8) In case (13-19), state 1 is orthogonal to state 2 and state 3 on the second subsystem. We assume that the second subsystem of state 1 is $|\alpha\rangle$, where $|\alpha\rangle$ is normalized. Suppose that the second party, say Bob, performs a measurement with the measurement operators  $|\alpha\rangle \langle\alpha |$ and $I-|\alpha\rangle \langle\alpha |$.
	
	\textcircled{1} If Bob's measurement outcome corresponds to the operator $|\alpha\rangle \langle\alpha |$, the measured state must be state 1, 4 or 5. States \{1, 4, 5\} can be locally distinguished by Lemma~\ref{lemma2} since they are still pairwise orthogonal after Bob's measurement.
	
    \textcircled{2} If Bob's measurement outcome corresponds to $I-|\alpha\rangle \langle\alpha |$, the measured state must be one of states $\{2,\,3,\,4,\,5\}$. 
  The vector of the numbers of pairwise orthogonal relations of states $\{2,\,3,\,4,\,5\}$ is (3, 1, 2). By Lemma~\ref{lemma3}, states $\{2,\,3,\,4,\,5\}$  can be perfectly distinguished by LOCC.

	(9) In cases (13-21) and (13-23) , state 2 is orthogonal to state 1 and state 5 on the second subsystem. We assume that the second subsystem of state 2 is  
  $|\alpha\rangle$, where $|\alpha\rangle$ is normalized. Suppose that the second party, say Bob, performs a measurement with the measurement operators  $|\alpha\rangle \langle\alpha |$ and $I-|\alpha\rangle \langle\alpha |$.
	
	\textcircled{1} If Bob's measurement outcome corresponds to $|\alpha\rangle \langle\alpha |$, the measured state must be state 2, 3 or 4. States \{2, 3, 4\} can 
   be locally distinguished by Lemma~\ref{lemma2} since they are pairwise orthogonal.
	
   \textcircled{2} If Bob's measurement outcome corresponds to  $I-|\alpha\rangle \langle\alpha |$, the measured state must be one of states $\{1,\,3,\,4,\,5\}$. It is easy to know that state 1 is still orthogonal to state 3 on the second subsystem for case (13-21) and state 1 is still orthogonal to state 4 on the second subsystem for case (13-23). For cases (13-21) and (13-23), the vectors of the numbers of pairwise orthogonal relations of states $\{1,\,3,\,4,\,5\}$ are all (3, 1, 2). By Lemma~\ref{lemma3}, states $\{1,\,3,\,4,\,5\}$ can be perfectly distinguished by LOCC since they are still pairwise orthogonal after Bob's measurement.
	
   (10) In case (13-24), state 1 is orthogonal to state 2 and state 4 on the second subsystem.  We assume that the second subsystem of state 1 is  $|\alpha\rangle$, where $|\alpha\rangle$ is normalized. Suppose that the second party, say Bob, performs a measurement with the measurement operators  $|\alpha\rangle \langle\alpha |$ and $I-|\alpha\rangle \langle\alpha|$.
	
	\textcircled{1} If Bob's measurement outcome corresponds to $|\alpha\rangle \langle\alpha |$, the measured state must be state 1, 3 or 5. By Lemma~\ref{lemma2}, states 
    \{1, 3, 5\} can be perfectly distinguished by LOCC.
	
    \textcircled{2} If Bob's measurement outcome corresponds to  $I-|\alpha\rangle \langle\alpha |$, the measured state must be one of states $\{2,\,3,\,4,\,5\}$. Note that state 4 remains orthogonal to state 5 on the second party. Thus states $\{2,\,3,\,4,\,5\}$ are still pairwise orthogonal after Bob's measurement. The vector of the numbers of pairwise orthogonal relations of states $\{2,\,3,\,4,\,5\}$ is (3, 1, 2). By Lemma~\ref{lemma3}, states  $\{2,\,3,\,4,\,5\}$ can be perfectly distinguished by LOCC. 

	(11) In case (13-26), state 1 is orthogonal to state 2 and state 3 on the second subsystem. We assume that the second subsystem of state 1 is  $|\alpha\rangle$, 
    where $|\alpha\rangle$ is normalized. Suppose that the second party, say Bob, performs a measurement with the measurement operators $|\alpha\rangle \langle\alpha|$ and $I-|\alpha\rangle \langle\alpha|$.
	
    \textcircled{1} If Bob's measurement outcome corresponds to the operator $|\alpha\rangle \langle\alpha|$, the measured state must be one of states \{1, 4, 5\}. States \{1, 4, 5\} can be locally distinguished by Lemma~\ref{lemma2} since they are still orthogonal after Bob's measurement. 
	
	\textcircled{2} If Bob's measurement outcome corresponds to the operator $I-|\alpha\rangle \langle\alpha|$, the measured state must be one of states 
    \{2, 3, 4, 5\}. Note that state 2 and state 5 remain orthogonal on the second subsystem. Thus states \{2, 3, 4, 5\} are still pairwise orthogonal. The vector of the numbers of pairwise orthogonal relations of states $\{2,\,3,\,4,\,5\}$ is (3, 1, 2). By Lemma~\ref{lemma3}, states  $\{2,\,3,\,4,\,5\}$ can be perfectly distinguished by LOCC. 
    
    In summary, five tripartite OPSs, any two of which are orthogonal only on one subsystem, can be perfectly distinguished by LOCC, except those with graph (11-30), (9-8), (11-26), (12-15), (12-20), (13-22) or (13-27). This completes the proof.
    \end{proof}

\nocite{*}
\bibliography{apssamp}
\end{document}